\pgfplotsset{compat=1.14}
\newcommand{\h}{\mathfrak{h}}
\newcommand{\koff}{k_{\mathrm{off}}}
\newcommand{\kon}{k_{\mathrm{on}}}
\newcommand{\kcat}{k_{\mathrm{cat}}}
\newcommand{\lcat}{{\ell}_{\mathrm{cat}}}
\newcommand{\loff}{{\ell}_{\mathrm{off}}}
\newcommand{\lon}{{\ell}_{\mathrm{on}}}
\newcommand{\pk}{p_k}
\newcommand{\pl}{p_{\ell}}
\newcommand{\R}{\mathbb{R}}
\theoremstyle{plain}
\newtheorem{theorem}{Theorem}[section]
\newtheorem{corollary}[theorem]{Corollary}
\newtheorem{proposition}[theorem]{Proposition}
\newtheorem{conjecture}[theorem]{Conjecture}
\newtheorem{question}[theorem]{Question}
\theoremstyle{definition}
\newtheorem{defn}[theorem]{Definition}
\newtheorem{remark}[theorem]{Remark}
\newcommand{\pd}[2]{\ensuremath{ \frac{ \partial #1 }{\partial #2} }}
\newcommand{\scc}{\mathcal{S}}
\newcommand{\F}{f_{c, \kappa}}
\newcommand{\St}{{S}}
\newcommand{\defword}[1]{\textcolor{purple}{\underline{\textbf{#1}}}}
\definecolor{dblue}{rgb}{0.0,0.0,0.68}
\title{Dynamics of ERK regulation in the processive limit}
\author[1]{Carsten Conradi}
\author[2]{Nida Obatake}
\author[2]{Anne Shiu}
\author[2,3]{Xiaoxian Tang}
\affil[1]{HTW Berlin, Germany}
\affil[2]{Department of Mathematics, Texas A\&M University, USA}
\affil[3]{School of Mathematical Sciences, Beihang University, China}
\date{\today}
\begin{document}

\maketitle

\begin{abstract}
We consider a model of extracellular signal-regulated kinase (ERK) regulation by dual-site phosphorylation and dephosphorylation, which exhibits bistability and oscillations, but loses these properties in the limit in which
the mechanisms underlying
phosphorylation and dephosphorylation become processive.
Our results suggest that anywhere along the way to becoming processive, the model remains bistable and oscillatory.
More precisely, in simplified versions of the model,
precursors to bistability and oscillations (specifically,
multistationarity and Hopf bifurcations, respectively) exist at all ``processivity levels''.  Finally, we investigate whether bistability and oscillations can exist together.
\end{abstract}

\section{Introduction} \label{sec:intro}
We focus on the following question, posed by \cite{long-term}, pertaining to a model of extracellular signal-regulated kinase (ERK) regulation (Figure~\ref{fig:erk_network}):

\begin{question} \label{q:main}
For all \defword{processivity levels}\footnote{This level is the probability that the enzyme acts processively, that is, adds a second phosphate group after adding the first~\citep{salazar}.  
A somewhat similar idea, from~\citep{sun2014enhancement}, 
is the ``degree of processivity''. }
$p_k:={\kcat}/{(\kcat+\koff)}$ and $p_{\ell}:={\lcat}/{(\lcat+\loff)}$ close
to 1, is the ERK network in Figure~\ref{fig:erk_network},
bistable and oscillatory?
\end{question}

\begin{figure}[ht]
    \centering
   \includegraphics[trim=170 270 170 327,clip,scale=1.2]{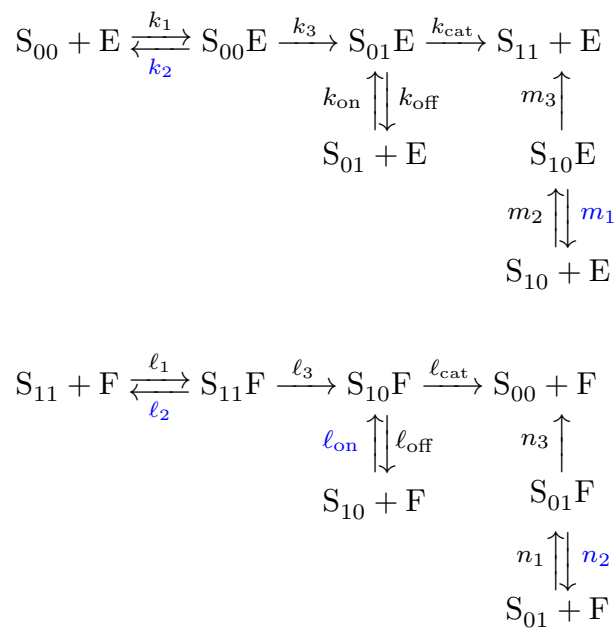}
   \caption{The \defword{ERK network} consists of ERK regulation through dual-site phosphorylation by the kinase MEK
(denoted by $E$) and dephosphorylation by the phosphatase MKP3 ($F$).  Each $S_{ij}$ denotes an ERK phosphoform, with subscripts indicating at which of two sites phosphate groups are attached.  Deleting from this network the reactions labeled $k_2, m_1, l_2, \lon, n_2$ (in blue) yields the \defword{minimally bistable ERK subnetwork} (the explanation for this name is given before Question~\ref{q:subnet-bistable}).
   }
   \label{fig:erk_network}
\end{figure}

The motivation behind this question 
was given earlier~\citep{  erk-model,OSTT, long-term}, which we summarize here.
Briefly, as both $p_k$ and $p_{\ell}$ approach~1, the ERK network
``limits'' to a (fully processive) network that is globally convergent
to a unique steady state, and thus lacks bistability and oscillations \citep{ConradiShiu}.  As bistability and oscillations may allow networks to act as a biological switch or clock~\citep{tyson-albert}, we want to know how far ``along the way'' to the limit, the network maintains the capacity for these important dynamical properties.

A partial result toward resolving Question~\ref{q:main} was given by
\cite{long-term},
who exhibited, in simulations, oscillations for $p_k,\pl \approx 0.97$. 
This left open the question of oscillations for $0.97<p_k,\pl<1$.
Our result in this direction is given in Theorem~\ref{thm:pre-hopf-all-process-levels-epsilon-close-to-1} (described below).

Additional prior results aimed at answering Question~\ref{q:main} appeared in work of three of the present authors with Torres~\citep{OSTT}.
We showed that bistability is preserved when reactions in the ERK network are made irreversible, as long at least one of the reactions labeled by $\kon$ and $\lon$ is preserved.  We therefore give the name ``minimally bistable ERK subnetwork'' to the network obtained by making all reaction irreversible except the reversible-reaction pair $\kon$ and $\koff$
(Figure~\ref{fig:erk_network}).  (By symmetry, the network preserving $\lon$ and $\loff$, rather than $\kon$ and $\koff$, is equivalent.)  We therefore state the following version of Question~\ref{q:main} for bistability:
\begin{question}    \label{q:subnet-bistable}
For $p_k$ and $p_{\ell}$ close
to 1, is the minimally bistable ERK subnetwork, bistable?
\end{question}
\noindent
If yes, then by results lifting bistability from subnetworks to larger networks~\citep{Joshi-Shiu-2013},
this also answers in the affirmative the part of
Question~\ref{q:main} pertaining to bistability.

\begin{figure}[ht]
    \centering
    \includegraphics[scale=1.2]{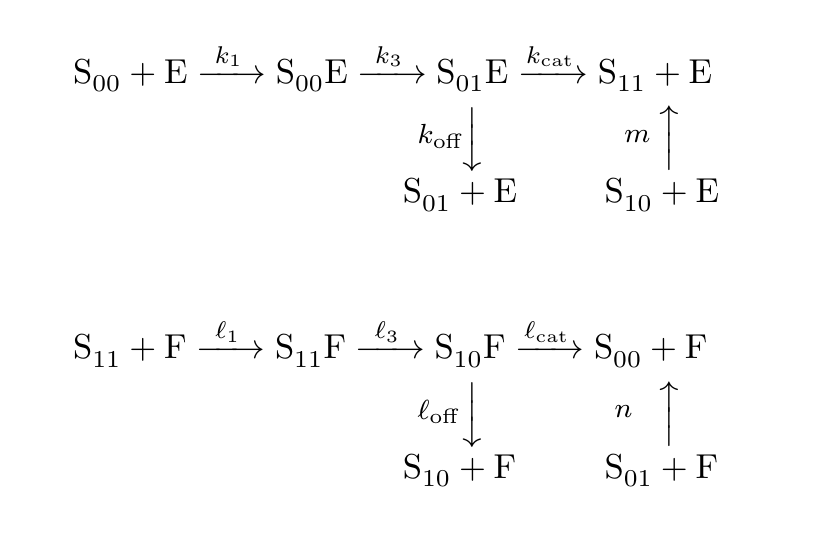}
    \caption{\defword{Reduced ERK network}~\citep{OSTT}}
    \label{fig:red_erk_net}
\end{figure}

Similarly,
for oscillations, we showed that when reactions are made irreversible
and also two ``intermediates'' (namely, $S_{10}E$ and $S_{01}F$)
are removed, oscillations are preserved~\citep{OSTT}.
For this network, called the ``reduced ERK network'' (Figure~\ref{fig:red_erk_net}), we now
ask a variant of Question~\ref{q:main} for oscillations
(an affirmative answer to Question~\ref{q:reduced-osc} likely ``lifts'' to an affirmative answer to Question~\ref{q:main};
see Remark~\ref{rmk:reln-to-orig-q}):
\begin{question} \label{q:reduced-osc}
For $p_k$ and $p_{\ell}$ close
to 1, is the reduced ERK network,
oscillatory?
\end{question}

Our answers to Questions~\ref{q:subnet-bistable} and~\ref{q:reduced-osc} are as follows.
For the first question, at {\em all} processivity levels -- not just near 1 -- the minimally bistable ERK subnetwork admits multiple steady states, a necessary condition for bistability (Theorem~\ref{thm:MSS}).  Furthermore, computational evidence suggests that indeed we have bistability.
{\color{black} We also investigate how varying processivity levels affects the range of parameter values that yield multistationarity and also \emph{how} multistationarity (and thus bistability) is lost as
the ERK network limits to a (fully processive) network without bistability.
Our numerical observations suggest that as the processivity levels approach 1,
the classical S-shaped curve often associated with multistationarity deforms to a steep Hill function (see 
Figure~\ref{fig:pk_pl_099} in Section~\ref{sec:numerics}). }

Similarly, for
Question~\ref{q:reduced-osc},
again at (nearly) all processivity levels, the reduced ERK network admits a Hopf bifurcation (Theorem~\ref{thm:pre-hopf-all-process-levels-epsilon-close-to-1}), a precursor to oscillations.
We also numerically investigate such oscillations (see Figure~\ref{fig:osci}).

Finally, we pursue several more questions pertaining to ERK networks.
We investigate in the ERK network whether -- for some choice of rate constants -- bistability and Hopf bifurcations can coexist (see Theorem~\ref{thm:no-coexistence}).
We also pursue a conjecture of~\cite{OSTT} on the maximum number of steady states in the minimally bistable ERK network.

Our results fit into related literature as follows.
First, as other authors have done for their models of interest~\citep{CFM,GRPD,SF}, we analyze simplified versions of the ERK network obtained by removing intermediate species and/or reactions (in some cases, bistability and oscillations can be ``lifted'' from smaller networks to larger ones~\citep{banaji-inheritance, BP-inher, add-flow, fw2013, Joshi-Shiu-2013}).
Also, our proofs harness two results from previous work:
a Hopf-bifurcation criterion for the reduced ERK network~\citep{OSTT}, and a criterion for multistationarity arising from degree theory~\citep{CFMW,DPST}.

This work proceeds as follows.
Section~\ref{sec:bkrd} provides background on chemical reaction systems and other topics.
In Section~\ref{sec:networks}, we give some details about the networks we study.
Next, we present our main results on multistationarity and bistability (Section~\ref{sec:bistable}),
Hopf bifurcations and oscillations (Section~\ref{sec:osc}), and coexistence of bistability and oscillations (Section~\ref{sec:coexist}).
In Section~\ref{sec:number-steady-states}, we prove results on the maximum number of steady states in the minimally bistable ERK network.
We conclude with a Discussion in Section~\ref{sec:disc}.

\section{Background} \label{sec:bkrd}
This section contains background on chemical reaction systems and their steady states.
We also recall how ``steady-state parametrizations'' can be used to assess whether a network is multistationary (Proposition~\ref{prop:hopfcriterion}).

\subsection{Chemical reaction systems} \label{sec:CRS}
As in~\citep{DPST}, our notation closely matches that of \cite{CFMW}.
A \defword{reaction network} $G$  (or, for brevity, {\em network}) consists of a set of $s$ species $\{X_1, X_2, \ldots, X_s\}$ and a set of $m$ reactions:
\[
\alpha_{1j}X_1 +
\alpha_{2j}X_2 +  \dots +
\alpha_{sj}X_s
~ \to ~
\beta_{1j}X_1 +
\beta_{2j}X_2 +  \dots +
\beta_{sj}X_s~,
 \quad \quad
    {\rm for}~
	j=1,2, \ldots, m~,
\]
where each $\alpha_{ij}$ and $\beta_{ij}$ is a non-negative integer. The \defword{stoichiometric matrix} of
$G$,
denoted by $N$, is the $s\times m$ matrix with
$N_{ij}= \beta_{ij}-\alpha_{ij}$.
Let $d=s-{\rm rank}(N)$. 
The image of $N$ is the
\defword{stoichiometric subspace}, denoted by $S$.
A \defword{conservation-law matrix} of $G$, denoted by $W$, is a row-reduced $d\times s$-matrix
such that the rows form a basis of
{\color{black} the orthogonal complement of $S$.}
If there exists a choice of $W$ such that each entry is nonnegative and each column
contains at least one nonzero entry (equivalently, each species occurs in at least one nonnegative conservation law), then $G$ is \defword{conservative}.

Denote the concentrations of the species $X_1,X_2, \ldots, X_s$ by $x_1, x_2, \ldots, x_s$, respectively.
These concentrations, under the assumption of {\em mass-action kinetics}, evolve according to the following system of ODEs:
\begin{equation}\label{sys}
\dot{x}~=~f(x)~:=~N\cdot \begin{pmatrix}
\kappa_1 \, x_1^{\alpha_{11}}
		x_2^{\alpha_{21}}
		\cdots x_s^{\alpha_{s1}} \\
\kappa_2 \, x_1^{\alpha_{12}}
		x_2^{\alpha_{22}}
		\cdots x_s^{\alpha_{s2}} \\
		\vdots \\
\kappa_m \, x_1^{\alpha_{1m}}
		x_2^{\alpha_{2m}}
		\cdots x_s^{\alpha_{sm}} \\
\end{pmatrix}~,
\end{equation}
where $x=(x_1, x_2, \ldots, x_s)$,
and each $\kappa_j \in \mathbb R_{>0}$ is a \defword{reaction rate constant}.
 By considering the rate constants as a vector of parameters $\kappa=(\kappa_1, \kappa_2, \dots, \kappa_m)$, we have polynomials $f_{\kappa,i} \in \mathbb Q[\kappa,x]$, for $i=1,2, \dots, s$.
For ease of notation, we often write $f_i$ rather than  $f_{\kappa,i}$.

A solution $x(t)$ with nonnegative initial values $x(0)=x^0 \in
\mathbb{R}^s_{\geq 0}$ remains, for all positive time,
 in the following \defword{stoichiometric compatibility class} with respect to the \defword{total-constant vector} $c\coloneqq W x^0 \in {\mathbb R}^d$~{\color{blue} \citep[Lemma II.3]{Sontag01}}: 
\begin{align} \label{eqn:invtPoly}
\scc_c~\coloneqq~ \{x\in {\mathbb R}_{\geq 0}^s \mid Wx=c\}~.
\end{align}
A \defword{steady state} of~\eqref{sys} is a nonnegative concentration vector
$x^* \in \mathbb{R}_{\geq 0}^s$ at which the
right-hand sides of the
ODEs in~\eqref{sys}  vanish: $f(x^*) =0$.
We distinguish between \defword{positive steady states} $x ^* \in \mathbb{R}^s_{> 0}$ and \defword{boundary steady states}
$x^*\in {\mathbb R}_{\geq 0}^s\backslash {\mathbb R}_{>0}^s$.
A steady state $x^*$ is \defword{nondegenerate} if
${\rm Im}\left( {\rm Jac}(f) (x^*)|_{S} \right)$
is the stoichiometric subspace $\St$.
(Here, ${\rm Jac}(f)(x^*)$ is the Jacobian matrix of $f$, with respect to $x$, at $x^*$.)
A nondegenerate steady state is
\defword{exponentially stable} if for each of the $\sigma\coloneqq \dim(\St)$ nonzero eigenvalues of ${\rm Jac}(f)(x^*)$, the real part is negative.

A network $G$ is \defword{multistationary} 
(respectively, \defword{bistable})
if, for some choice of positive rate-constant vector $\kappa \in \mathbb{R}^m_{>0}$, there exists a stoichiometric compatibility class~\eqref{eqn:invtPoly}
that contains two or more positive steady states (respectively, exponentially stable positive steady states) of~\eqref{sys}.

We analyze steady states within a stoichiometric compatibility class,
by using conservation laws
in place of linearly dependent steady-state equations, as follows.
Let $I = \{i_1 < i_2< \dots < i_d\}$ denote the set of indices of the first nonzero coordinate of the rows of the conservation-law matrix $W$.
Consider the function $\F: {\mathbb R}_{\geq 0}^s\rightarrow {\mathbb R}^s$ defined by
\begin{equation}\label{consys}
f_{c,\kappa,i} =\F(x)_i :=
\begin{cases}
f_{i}(x)&~\text{if}~i\not\in I~,\\
(Wx-c)_k &~\text{if}~i~=~i_k\in I~.
\end{cases}
\end{equation}
We call system~\eqref{consys}, the system \defword{augmented by conservation laws}. By construction, positive roots of the polynomial system $\F=0$ coincide with the positive steady states of~\eqref{sys} in the stoichiometric compatibility class~\eqref{eqn:invtPoly} defined by the total-constant vector $c$.

\subsection{Steady-state parametrizations}
The parametrizations defined below form a subclass of the ones in~\citep[Definition~3.6]{DPST} (specifically, we do not use ``effective parameters'' here).

\begin{defn}\label{def:Critical}
Let $G$ be a network with $m$ reactions, $s$ species, and
conservation-law matrix $W$.  Let $\F$ arise from $G$ and $W$ as in~\eqref{consys}.
A \defword{steady-state parametrization} is a map
 $	\phi : \mathbb{R}^{\hat m}_{>0} \times  \mathbb{R}^{\hat s}_{>0}
		\rightarrow
		 \mathbb{R}_{>0}^{m}\times  \mathbb{R}_{>0}^{s}$,
for some $\hat m \leq m$ and $\hat s \leq s$,
which we denote by
		$(\hat \kappa; \hat x) \mapsto \phi(\hat \kappa; \hat x)$,
such that:
	\begin{enumerate}[(i)]
	\item $\phi(\hat \kappa; \hat x)$ extends the vector $(\hat \kappa; \hat x)$. More precisely,
	for the natural projection $\pi: \mathbb{R}^{ m}_{>0}\times
	  \mathbb{R}_{>0}^{s}\to \mathbb{R}^{\hat m}_{>0} \times \mathbb{R}^{\hat s}_{>0} $,
	  the map $\pi \circ \phi$ is the identity map.
	\item The image of $\phi$ equals the following set:
\[
    \{ ( \kappa^*; x^*) \in \mathbb{R}^{m+s}_{>0} \mid x^* \text{ is a steady state of the system defined by } G \text{ and }  \kappa=\kappa^*\}~.
\]
 \end{enumerate}
For such a parametrization $\phi$, the \defword{critical function}
$C: \mathbb{R}^{\hat m}_{>0} \times  \mathbb{R}^{\hat s}_{>0}  \rightarrow   \mathbb{R}$ is given by:
\begin{equation*}
 C(\hat \kappa;  \hat x) \quad  = \quad \left(\det {\rm Jac}~\F \right)|_{(\kappa ; x)=\phi(\hat \kappa;  \hat x)}~,
\end{equation*}
where ${\rm Jac}(\F)$ denotes the Jacobian matrix of $\F$ with respect to $x$.
\end{defn}

The following result is implied by~\citep[Theorem 3.12]{DPST}:
\begin{proposition}[Multistationarity and critical functions] \label{prop:c-general}
Let $\phi$ be a steady-state parametrization (as in Definition~\ref{def:Critical}) for a network~$G$
that is conservative and has no boundary steady states in any compatibility class.
Let $N$ be the stoichiometric matrix of $G$.
\begin{enumerate}[(A)]
\item  {\bf Multistationarity.}
 $G$ is multistationary
if there exists
$(\hat \kappa^*;  \hat{x}^*) \in
\mathbb{R}^{\hat m}_{>0} \times  \mathbb{R}^{\hat s}_{>0} $
such that
\[
{\rm sign}( C(\hat \kappa^*;  \hat{x}^*) ) ~=~ (-1)^{\mathrm{rank}(N)+1}~.
\]
\item  {\bf Witness to multistationarity.}
Every $(\hat \kappa^*;  \hat{x}^*) \in
\mathbb{R}^{\hat m}_{>0} \times  \mathbb{R}^{\hat s}_{>0} $
with
	${\rm sign} (C(\hat \kappa^*,  x^*))=
	(-1)^{\mathrm{rank}(N)+1}$
		yields
 a witness
to multistationarity $(\kappa^*, c^*)$ as follows.
Let $(\kappa^*,x^*)=\phi(\hat \kappa^*,  \hat x^*)$.
Let $c^* = W x^*$
(so, $c^*$ is the total-constant vector defined by $x^*$,
where $W$ is the conservation-law matrix).
Then, for the mass-action system~\eqref{sys} arising from $G$ and $\kappa^*$,
there are two or more positive steady states in the stoichiometric compatibility class~\eqref{eqn:invtPoly} defined by $c^*$.
\end{enumerate}
\end{proposition}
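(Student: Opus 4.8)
The plan is to derive Proposition~\ref{prop:c-general} as a direct specialization of~\citep[Theorem 3.12]{DPST}. As noted just above Definition~\ref{def:Critical}, the parametrizations considered here are exactly the sub-case of the DPST parametrizations in which no ``effective parameters'' appear. So the first step is bookkeeping: I would check that a map $\phi$ satisfying (i)--(ii) of Definition~\ref{def:Critical} is an instance of a DPST parametrization with the effective-parameter coordinate taken to be trivial, and that the critical function $C = (\det \mathrm{Jac}\,\F)\circ\phi$ defined here coincides, under this identification, with the critical function of~\citep{DPST}. Once this dictionary is fixed and the two hypotheses — $G$ conservative and $G$ having no boundary steady states in any compatibility class — are matched to those required by~\citep[Theorem 3.12]{DPST}, parts (A) and (B) follow by quoting the corresponding conclusions of that theorem.

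Conceptually, the reason the sign of $C$ governs multistationarity is a Brouwer-degree argument inherited from~\citep{CFMW,DPST}, which I would record as the mechanism behind the cited theorem. Since $G$ is conservative, each stoichiometric compatibility class~\eqref{eqn:invtPoly} is compact, and since $G$ has no boundary steady states, the augmented map $\F$ of~\eqref{consys} has no zeros on the relevant boundary; hence its Brouwer degree is well defined and, being a homotopy invariant, equals a fixed value that the computation of~\citep{CFMW} identifies with $(-1)^{\mathrm{rank}(N)}$. At a nondegenerate steady state $x^*$ the local index of $\F$ is $\mathrm{sign}(\det\mathrm{Jac}\,\F(x^*))$. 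If a compatibility class contained a single steady state and it were nondegenerate, its index would necessarily equal the degree $(-1)^{\mathrm{rank}(N)}$; consequently, exhibiting a steady state whose determinant sign is the opposite value $(-1)^{\mathrm{rank}(N)+1}$ forces the class to contain at least two positive steady states.

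Given this mechanism, both parts are short. For part (A), suppose $(\hat\kappa^*;\hat x^*)$ satisfies $\mathrm{sign}(C(\hat\kappa^*;\hat x^*)) = (-1)^{\mathrm{rank}(N)+1}$. Setting $(\kappa^*,x^*)=\phi(\hat\kappa^*;\hat x^*)$ produces, by property (ii), a genuine steady state $x^*$ of the system with rate constants $\kappa^*$, and the value of $C$ is precisely $\det\mathrm{Jac}\,\F(x^*)$, whose sign is the opposite of the forced value $(-1)^{\mathrm{rank}(N)}$. By the degree argument, the class $\scc_{c^*}$ with $c^*=Wx^*$ then contains two or more positive steady states, so $G$ is multistationary. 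Part (B) is the same statement read as an explicit construction: any witness point $(\hat\kappa^*;\hat x^*)$ with the indicated sign yields the pair $(\kappa^*,c^*)$ exhibiting multiple positive steady states in $\scc_{c^*}$.

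The main obstacle I anticipate is not the logic but the careful matching of conventions. I would need to confirm that restricting Definition~\ref{def:Critical} to parametrizations without effective parameters is captured by~\citep[Theorem 3.12]{DPST} with no loss, and — most delicately — that the sign normalization $(-1)^{\mathrm{rank}(N)+1}$ used here agrees with the sign appearing in~\citep{DPST}. That agreement depends on how the conservation-law rows $Wx-c$ are ordered and signed inside $\F$ in~\eqref{consys}, and on the orientation conventions in the underlying degree computation; verifying it, together with the exact correspondence of the conservativity and no-boundary-steady-state hypotheses, is what makes the reduction rigorous rather than merely formal.
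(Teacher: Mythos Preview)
Your proposal is correct and matches the paper's approach exactly: the paper does not give a standalone proof but simply states that the result ``is implied by~\citep[Theorem 3.12]{DPST},'' which is precisely the reduction you outline. Your additional discussion of the underlying Brouwer-degree mechanism from~\citep{CFMW} and the sign-convention checks goes beyond what the paper records, but is consistent with how that theorem is proved.
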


\section{ERK networks} \label{sec:networks}
As mentioned in the Introduction, 
this work primarily concerns two networks, the minimally bistable ERK subnetwork and the reduced ERK network.  Here we recall from~\citep{OSTT} the ODEs arising from these networks and a Hopf-bifurcation criterion for the reduced ERK network (Proposition~\ref{prop:hopfcriterion}).
We also present a steady-state parametrization for the minimally bistable ERK subnetwork (Proposition~\ref{prop:param-irrev}).

\subsection{Minimally bistable ERK subnetwork}
\begin{table}[hbt]
  \centering
    \begin{tabular}[ht]{cccccccccccc} \hline
      $x_1$ & $x_2$ & $x_3$ & $x_4$ & $x_5$ & $x_6$ & $x_7$ & $x_8$ & $x_9$ & $x_{10}$ &
      $x_{11}$ & $x_{12}$
      \\ \hline
$S_{00}$ &  $E$ & $F$ & $S_{11} F$ & $S_{10} F$ & $S_{01} F$ & $S_{01} E$ & $S_{10}E$
	& $S_{01}$ & $S_{10}$  & $S_{00} E$  & $S_{11}$ \\
\hline
    \end{tabular}
  \caption{
      Assignment of variables to species for the 
      minimally bistable ERK subnetwork. }
  \label{tab:variables}
\end{table}

For the minimally bistable ERK subnetwork, let $x_1,x_2,\ldots,x_{12}$ denote the concentrations of the species in the order given in Table~\ref{tab:variables}.
We obtain the following ODE system~\eqref{sys}: 

\begin{align}\notag
   \dot{x_1} ~& =~ -k_1x_1x_2+\lcat x_5+n_3 x_6  & & ~=:~ f_1\\ \notag
     \dot{x_2} ~& =~ -k_1 x_1 x_2-\kon x_2 x_9-m_2x_{10}x_2+\kcat x_7+\koff x_7 +m_3 x_8  & & ~=:~ f_2\\ \notag
    \dot{x_3} ~& =~ -\ell_1 x_3 x_{12}-n_1 x_3 x_9+\lcat x_5+\loff x_5+n_3 x_6   & &~=:~ f_3\\ \notag
    \dot{x_4} ~& =~ \ell_1 x_3 x_{12} -\ell_3 x_4  & &~=:~ f_4\\ \notag
    \dot{x_5} ~& =~ \ell_3 x_4-\lcat x_5-\loff x_5  & &~=:~ f_5\\
    \label{eq:ODE-irre}
    \dot{x_6} ~& =~ n_1 x_3 x_9-n_3 x_6  & &~=:~ f_6\\ \notag
    \dot{x_7} ~& =~ \kon x_2 x_9+k_3 x_{11}-\kcat x_7-\koff x_7  & &~=:~ f_7\\ \notag
    \dot{x_8} ~& =~ m_2 x_2 x_{10}-m_3 x_8  & &~=:~ f_8\\ \notag
    \dot{x_9} ~& =~ -\kon x_2 x_9-n_1 x_3 x_9+\koff x_7  & &~=:~ f_9\\ \notag
    \dot{x_{10}} ~& =~ -m_2 x_2 x_{10}+\loff x_5  & &~=:~ f_{10}\\ \notag
    \dot{x_{11}} ~& =~ k_1x_1x_2 -k_3x_{11} & &~=:~ f_{11}\\ \notag
    \dot{x_{12}} ~& =~ -\ell_1 x_3 x_{12} +\kcat x_7+m_3 x_8 & &~=:~ f_{12} \notag
\end{align}

The 3 conservation equations correspond to the total amounts of substrate, kinase $E$, and phosphatase $F$, respectively:
\begin{align} \label{eq:cons-law-irre}
\notag
x_1+x_4+x_5+x_6+x_7+x_8+x_9+x_{10}+x_{11}+x_{12} ~&=~ S_\text{{tot}} ~=:~ c_1\\
x_2+x_7+x_8+x_{11} ~&=~ E_{\text{tot}} ~=:~ c_2\\
\notag
x_3+x_4+x_5+x_6 ~&=~ F_{\text{tot}}~=:~ c_3.
\end{align}

This network admits a steady-state parametrization (Proposition~\ref{prop:param-irrev} below).
Another parametrization for this network was given in~\citep[Section 3.2]{OSTT}, involving ``effective parameters'' (replacing, for instance, $\lcat/\kcat$ by a new parameter $a_1$).
That parametrization, however, does not give (direct) access to the rate constants $\kcat, \lcat, \koff, \loff $ involved in processivity levels.  We therefore need a new parametrization, as follows.

\begin{proposition}[Steady-state parametrization for minimally bistable ERK subnetwork]  \label{prop:param-irrev}
For the minimally bistable ERK subnetwork, with rate-constant vector denoted by $\kappa:=(k_1,
    k_3, \kcat, \kon, \koff,
    {\ell}_1,
    {\ell}_3, \lcat, 
    \loff,
    m_2, m_3,
    n_1, 
    n_3
    ) $,
a steady-state parametrization is given by:   \begin{align*} 
    \phi : \mathbb{R}^{13}_{>0}\times \mathbb{R}^{3}_{>0} ~& \to~ \mathbb{R}^{13}_{>0} \times \mathbb{R}^{12}_{>0} \\
    (\kappa ; ~x_1, x_2, x_{3} ) ~&\mapsto ~(\kappa ; ~x_1, x_2,\dots, x_{12})~, \notag
  \end{align*}
where
  \begin{align}\label{eq:param-irrev-details}
{ 
\begin{array}{ll}
x_4 ~=~ \frac{k_1\kcat (\lcat +\loff)(\kon x_2+n_1x_3)x_1x_2}{\ell_3\lcat (\kcat \kon x_2+\kcat n_1x_3+\koff n_1x_3)},\;\;\;&
x_5 ~=~ \frac{k_1\kcat (\kon x_2+n_1x_3)x_1x_2}{\lcat (\kcat \kon x_2+\kcat n_1x_3+\koff n_1x_3)}\\
x_6 ~=~ \frac{n_1k_1\koff x_1x_2x_3}{n_3(\kcat \kon x_2+\kcat n_1x_3+\koff n_1x_3)},\;\;\;
& x_7 ~=~ \frac{k_1(\kon x_2+n_1x_3)x_1x_2}{\kcat \kon x_2+\kcat n_1x_3+\koff n_1x_3}, \\
x_8 ~=~ \frac{k_1\kcat \loff(\kon x_2+n_1x_3)x_1x_2}{\lcat m_3(\kcat \kon x_2+\kcat n_1x_3+\koff n_1x_3)},\;\;\;& x_9 ~=~ \frac{k_1\koff x_1x_2}{\kcat \kon x_2+\kcat n_1x_3+\koff n_1x_3}, \\
x_{10} = \frac{k_1\kcat \loff(\kon x_2+n_1x_3)x_1}{\lcat m_2(\kcat \kon x_2+\kcat n_1x_3+\koff n_1x_3)}, \;\;\;& x_{11} = \frac{k_1x_1x_2}{k_3},\\
x_{12} = \frac{k_1\kcat (\lcat +\loff)(\kon x_2+n_1x_3)x_1x_2}{\lcat \ell_1(\kcat \kon x_2+\kcat n_1x_3+\koff n_1x_3)x_3} . &
\end{array}
}
\end{align}
\end{proposition}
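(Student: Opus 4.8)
The plan is to verify directly the two conditions of Definition~\ref{def:Critical}. Here $\hat m = m = 13$ (all rate constants are retained, so $\hat\kappa=\kappa$) and the free species are $\hat x = (x_1,x_2,x_3)$, so $\hat s = 3$. Condition~(i) is immediate: the map $\phi$ fixes the coordinates $(\kappa;x_1,x_2,x_3)$, so for the projection $\pi$ onto those coordinates, $\pi\circ\phi$ is the identity. All the content lies in condition~(ii), namely that the image of $\phi$ is exactly the set of pairs $(\kappa^*;x^*)$ for which $x^*$ is a \emph{positive} steady state of~\eqref{eq:ODE-irre} at $\kappa=\kappa^*$. I would prove this by establishing the two inclusions separately.

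First I would show that $\phi(\kappa;x_1,x_2,x_3)$ is always a positive steady state. The key structural observation is that the three conservation laws in~\eqref{eq:cons-law-irre} correspond to three polynomial identities obtained by summing the $f_i$ over the support of each row of the conservation-law matrix: $f_1+f_4+f_5+f_6+f_7+f_8+f_9+f_{10}+f_{11}+f_{12}=0$, $\;f_2+f_7+f_8+f_{11}=0$, and $\;f_3+f_4+f_5+f_6=0$. Since $\mathrm{rank}(N)=12-3=9$, it therefore suffices to arrange that the nine equations $f_4=f_5=\cdots=f_{12}=0$ hold, for then $f_1,f_2,f_3$ vanish automatically and need never be checked. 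These nine equations can be solved sequentially for $x_4,\dots,x_{12}$ in terms of $x_1,x_2,x_3$: the equation $f_{11}=0$ gives $x_{11}$; the coupled pair $f_7=f_9=0$ is a $2\times 2$ linear system in $(x_7,x_9)$ whose solution produces the common denominator $\kcat\kon x_2+\kcat n_1 x_3+\koff n_1 x_3$ appearing throughout~\eqref{eq:param-irrev-details}; then $f_6=0$ gives $x_6$; combining $f_4,f_5,f_8,f_{10},f_{12}$ yields $x_5 = \kcat x_7/\lcat$; and finally $f_8,f_{10}$ and then $f_4,f_5$ give $x_8,x_{10},x_{12},x_4$ in turn. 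Carrying out this elimination reproduces exactly the formulas in~\eqref{eq:param-irrev-details}, and since each denominator is a sum of monomials in positive quantities, every parametrized coordinate is positive, so the image lies in $\mathbb{R}^{12}_{>0}$.

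For the reverse inclusion, I would note that this elimination is in fact a \emph{unique} solve: any positive steady state satisfies $f_4=\cdots=f_{12}=0$, and solving these for $x_4,\dots,x_{12}$ (the denominators being nonzero by positivity) forces those coordinates to equal the formulas in~\eqref{eq:param-irrev-details}. Hence every positive steady state coincides with $\phi(\kappa;x_1,x_2,x_3)$ for its own values of $x_1,x_2,x_3$, and so lies in the image. The main difficulty is not conceptual but organizational: one must choose the order of elimination so that each step isolates a single new variable, the only genuinely coupled step being the simultaneous solve of $f_7=f_9=0$, and one must confirm that the three conservation-law identities are precisely as stated so that $f_1,f_2,f_3$ may be discarded rather than verified directly. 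With that bookkeeping in place, the remaining work is a routine substitution check.
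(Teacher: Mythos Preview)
Your proposal is correct and follows essentially the same approach as the paper: discard three of the twelve steady-state equations using the conservation-law linear dependencies, then solve the remaining nine sequentially for $x_4,\dots,x_{12}$, checking positivity and uniqueness along the way. The only cosmetic difference is that the paper drops $f_2,f_3,f_{12}$ (keeping $f_1$, which directly yields $x_5$ once $x_6$ is known), whereas you drop $f_1,f_2,f_3$ and recover $x_5$ from the telescoping sum $f_4+f_5+f_8+f_{10}+f_{12}$; since $f_1=-(f_4+\cdots+f_{12})$ by the first conservation identity you wrote down, the two choices are equivalent.
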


\begin{proof}
Due to the conservation laws~\eqref{eq:cons-law-irre}, it suffices to show that by solving the equations $f_i=0$  from~\eqref{eq:ODE-irre}, for all $i \neq 2, 3, 12$, we obtain the expressions in~\eqref{eq:param-irrev-details}.
We accomplish this as follows. By solving for $x_{11}$ in the equation $f_{11}=0$,
we obtain the desired expression for $x_{11}$.
Next, we solve for $x_7$ and $x_9$ in $f_7=f_9=0$, and use the expression for $x_{11}$,
plus the fact that each $x_i$ and each rate constant is positive, to obtain
the expressions for $x_7$ and $x_9$.
Our remaining steps proceed similarly: we use $f_6=0$ to obtain $x_6$,
then $f_1=0$ for $x_5$,
then $f_{10}=0$ for $x_{10}$,
then $f_8=0$ for $x_8$,
then $f_5=0$ for $x_4$, and finally
 $f_4=0$ for $x_{12}$.
\end{proof}

\subsection{Reduced ERK network}

\begin{table}[hbt]
  \centering
    \begin{tabular}[ht]{cccccccccc} \hline
      $x_1$ & $x_2$ & $x_3$ & $x_4$ & $x_5$ & $x_6$ & $x_7$ & $x_8$ & $x_9$ & $x_{10}$ \\ \hline
$S_{00}$ &  $E$ & $S_{00} E$ & $S_{01} E$ & $S_{11}$ & $S_{01}$ & $S_{10}$ & $F$ & $S_{11} F$ & $S_{10} F$\\
\hline
    \end{tabular}
  \caption{
      Assignment of variables to species for the reduced ERK network~in Figure~\ref{fig:red_erk_net}.
}
  \label{tab:variables-reduced}
\end{table}

The reduced ERK network has 10 rate constants:
$k_1, k_3, \kcat, \koff, m, \ell_1, \ell_3, \lcat, \loff, n$.
Letting $x_1,x_2, \ldots, x_{10}$ denote the species
    concentrations in the order given in
    Table~\ref{tab:variables-reduced},
the resulting mass-action kinetics ODEs are as follows:
\begin{align}
    \notag
\dot{x_1} ~&=~ -k_1 x_{1} x_{2}+n x_{6} x_{8}+\lcat x_{10} & &~=:~ f_1 \\
    \notag
\dot{x_2} ~&=~ -k_1 x_{1} x_{2}+\kcat x_{4}+\koff x_{4} & &~=:~ f_2 \\
    \notag
\dot{x_3} ~&=~ k_1 x_{1} x_{2}-k_3 x_{3} & &~=:~ f_3
\\
    \notag
\dot{x_4} ~&=~ k_3 x_{3}-\kcat x_{4}-\koff x_{4} & &~=:~ f_4 \\
\label{eq:ODE-reduced}
\dot{x_5} ~&=~ m x_{2} x_{7}-\ell_1 x_{5} x_{8}+\kcat x_{4} & &~=:~ f_5
\\
    \notag
\dot{x_6} ~&=~ -n x_{6} x_{8}+\koff x_{4} & &~=:~ f_6 \\
    \notag
\dot{x_7} ~&=~ -m x_{2} x_{7}+\loff x_{10} & &~=:~ f_7 \\
    \notag
\dot{x_8} ~&=~ -\ell_1 x_{5} x_{8}+\loff x_{10}+\lcat x_{10} & &~=:~ f_8 \\
    \notag
\dot{x_9} ~&=~ \ell_1 x_{5} x_{8}-\ell_3 x_{9} & &~=:~ f_9 \\
    \notag
\dot{x_{10}} ~&=~ -\loff x_{10}+\ell_3 x_{9}-\lcat x_{10} & &~=:~ f_{10}.
\end{align}


\subsection{Hopf-bifurcation criterion for the reduced ERK network}
At a \defword{simple Hopf bifurcation}, a single complex-conjugate pair of eigenvalues of the Jacobian matrix crosses the imaginary axis at nonzero speed, while all other eigenvalues remain with negative real parts.
{\color{black} Such a bifurcation generates
\defword{oscillations} or periodic orbits (see, e.g., the book of~\cite{bifurcation-book})}.

\begin{defn} \label{def:hurwitz}
The $i$-th \defword{Hurwitz matrix} of a univariate polynomial
$p(\lambda)= b_0 \lambda^n + b_{1} \lambda^{n-1} + \cdots + b_n$
is the following $i \times i$ matrix:
\[
H_i ~=~
\begin{pmatrix}
b_1 & b_0 & 0 & 0 & 0 & \cdots & 0 \\
b_3 & b_2 & b_1 & b_0 & 0 & \cdots & 0 \\
\vdots & \vdots & \vdots &\vdots & \vdots &  & \vdots \\
b_{2i-1} & b_{2i-2} & b_{2i-3} & b_{2i-4} &b_{2i-5} &\cdots & b_i
\end{pmatrix}~,
\]
where the $(k,l)$-th entry is $b_{2k-l}$
  as long as
  $0 \leq 2 k - l \leq 2k-l $, and
  $0$ otherwise.
\end{defn}

The following result is~\citep[Proposition 4.1]{OSTT}.
\begin{proposition}[Hopf criterion for reduced ERK] \label{prop:hopfcriterion}
Consider the reduced ERK network,
and let $f_1,f_2,\dots, f_{10}$
denote the
right-hand sides of the resulting ODEs, as in~\eqref{eq:ODE-reduced}.
Let $\hat{\kappa}:=(\kcat, \koff, \loff)$ and $x:=(x_1, x_2, \dots, x_{10})$.
Consider the map\footnote{The map $\phi$ is a steady-state parametrization~\citep{OSTT}.}
 $\phi : \mathbb{R}^{3+10}_{>0}  \to \mathbb{R}^{10+10}_{>0}$,
denoted by $(\kcat, \koff, \loff,~x_1, x_2,\dots, x_{10}) \mapsto
( k_1, k_3, \kcat,\koff, m ,\ell_1,\ell_3,\ell_{\rm cat},\loff, n ,~x_1, x_2,\dots, x_{10}),$
 where
 \begin{align} \notag 
k_1 ~&:=~ \dfrac{(\kcat+\koff) x_4}{x_1x_2} & \quad
k_3 ~&:=~ \dfrac{(\kcat+\koff)x_4}{x_3} & \quad
m ~&:=~ \dfrac{\loff x_{10}}{x_2x_7} & \quad
\ell_1 ~&:=~ \dfrac{\loff x_{10}+\kcat x_4}{x_5x_8}\\
\notag
\ell_3 ~&:=~ \dfrac{\loff x_{10}+\kcat x_4}{x_9}& \quad
\lcat ~&:=~ \dfrac{\kcat x_4}{x_{10}} & \quad
n ~&:=~ \dfrac{\koff x_4}{x_6x_8}.
    \end{align}
Then the following is a univariate, degree-7 polynomial in $\lambda$, with coefficients in $\mathbb{Q}(x)[\hat{\kappa}]$:
 \begin{align} \label{eq:reduced-char-poly}
    q(\lambda) ~:=~
        \frac{1}{\lambda^3} ~ \det \left( \lambda I - {\rm Jac} (f) \right) |_{(\kappa; x)= \phi(\hat{\kappa}; x)} ~.
 \end{align}
 Now let $\mathfrak{h}_i$, for $i=4,5,6$, denote the determinant of the $i$-th Hurwitz matrix of the polynomial~$q(\lambda)$ in~\eqref{eq:reduced-char-poly}.
Then the following are equivalent:
\begin{enumerate}[(i)]
    \item there exists a rate-constant vector $\kappa^* \in \mathbb{R}^{10}_{>0}$ such that the resulting system~\eqref{eq:ODE-reduced} exhibits a simple Hopf bifurcation, with respect to $\kcat$, at some
    $x^* \in \mathbb{R}^{10}_{>0}$, and
    \item there exist  $x^* \in \mathbb{R}^{10}_{>0}$ and $\hat{\kappa}^*\in \mathbb{R}^3_{>0}$ such that
    \begin{align} \label{eq:hurwitz-hopf-conditions-reduced-ERK}
   \mathfrak{h}_4 (\hat{\kappa}^*; x^*) >&0~, ~
    \mathfrak{h}_5 (\hat{\kappa}^*; x^*) >0~, ~
    \mathfrak{h}_6 (\hat{\kappa}^*; x^*) =0~, ~ 
    \frac{\partial}{\partial \kcat } 
    \mathfrak{h}_6 (\hat{\kappa}; x) |_{(\hat{\kappa}; x)=(\hat{\kappa}^*; x^*)} \neq 0~.
    \end{align}
\end{enumerate}
Moreover, given $\hat{\kappa}^*$ and $x^*$ as in {\color{blue} (ii)}, a simple Hopf bifurcation with respect to $\kcat$ occurs at~$x^*$ when the rate-constant
vector is $\kappa^*:= \widetilde{\pi} (\phi(\hat{\kappa}^*; x^*))$.
Here, $\widetilde{\pi}:  \mathbb{R}_{>0}^{10}\times  \mathbb{R}_{>0}^{10} \to \mathbb{R}_{>0}^{10}$ is the natural projection to the first 10 coordinates.
 \end{proposition}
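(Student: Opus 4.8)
My plan is to reduce the claim to a purely algebraic Hopf-detection criterion applied to the degree-$7$ polynomial $q(\lambda)$. First I would record the structural facts that make $q$ the right object. The reduced ERK network has three independent conservation laws (total substrate, kinase, and phosphatase), so $d=3$ and $\dim S = 7$; since $f$ maps into $S$ and the conservation laws annihilate $\mathrm{Jac}(f)$ on the left, the matrix $\mathrm{Jac}(f)(x^*)$ has an invariant splitting with a $3$-dimensional kernel transverse to $S$. Consequently $\det(\lambda I - \mathrm{Jac}(f)) = \lambda^3\, q(\lambda)$, where $q$ is monic of degree $7$ and equals the characteristic polynomial of the linearization restricted to $S$. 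This is exactly the factor isolated in~\eqref{eq:reduced-char-poly}, and it is the polynomial governing the dynamics on each stoichiometric compatibility class $\scc_c$.

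The dynamical family I would use is the one implicit in $\phi$: fix a positive state $x^*$ together with $\koff,\loff$, and let $\kcat$ vary. Because $\phi$ holds $x$ fixed while defining the rate constants, $x^*$ remains a steady state of the system for every $\kcat$, whereas $\mathrm{Jac}(f)(x^*)$ --- and hence the roots of $q$ --- move with $\kcat$. Thus a simple Hopf bifurcation with respect to $\kcat$ at $x^*$ is equivalent to the one-parameter polynomial family $\kcat \mapsto q(\lambda)$ having, at the critical value, a single conjugate pair of purely imaginary roots crossing transversally with all remaining roots in the open left half-plane. A pleasant feature of this setup is that the transversality is measured by an honest partial derivative in $\kcat$ (with $x^*,\koff,\loff$ fixed), which removes any total-versus-partial-derivative subtlety.

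To convert this into Hurwitz conditions I would invoke the classical criterion of Liu, which for a monic degree-$n$ family asserts that a simple Hopf bifurcation occurs exactly when the constant term is positive, $\mathfrak{h}_1,\dots,\mathfrak{h}_{n-2}>0$, $\mathfrak{h}_{n-1}=0$, and $\tfrac{d}{d\kcat}\mathfrak{h}_{n-1}\neq 0$. For $n=7$ this reads $b_7>0$, $\mathfrak{h}_1,\dots,\mathfrak{h}_5>0$, $\mathfrak{h}_6=0$, and $\partial_{\kcat}\mathfrak{h}_6\neq 0$. The stated condition~\eqref{eq:hurwitz-hopf-conditions-reduced-ERK} lists only $\mathfrak{h}_4>0$, $\mathfrak{h}_5>0$, $\mathfrak{h}_6=0$ and the transversality, so the crux is to show that, after substituting $\phi$, the remaining Liu conditions $b_7>0$ and $\mathfrak{h}_1,\mathfrak{h}_2,\mathfrak{h}_3>0$ hold automatically for all positive $(\hat\kappa;x)$. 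I would verify this by symbolic computation: compute the coefficients $b_1,\dots,b_7$ of $q$ as rational functions of $(\hat\kappa;x)$ via $\phi$, and exhibit $b_7,\mathfrak{h}_1,\mathfrak{h}_2,\mathfrak{h}_3$ as ratios of polynomials all of whose monomials carry positive coefficients in the positive variables $\hat\kappa,x$, so that positivity is manifest. With these four inequalities free of charge, Liu's criterion collapses exactly to the four conditions in (ii); since $\phi$ is a bijection onto the pairs consisting of a rate-constant vector and a positive steady state, the equivalence (i)$\Leftrightarrow$(ii) and the recipe $\kappa^*=\widetilde{\pi}(\phi(\hat\kappa^*;x^*))$ follow, with the positivity of $\mathfrak{h}_1,\dots,\mathfrak{h}_5$ guaranteeing simplicity (the crossing pair is the only imaginary-axis eigenvalue).

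The main obstacle is this last symbolic step. The Jacobian is $10\times 10$ with entries depending on ten rate constants, so forming $q$, substituting the parametrization, and computing $\mathfrak{h}_4,\mathfrak{h}_5,\mathfrak{h}_6$ (while checking the automatic positivity of $b_7,\mathfrak{h}_1,\mathfrak{h}_2,\mathfrak{h}_3$) produces very large expressions; the delicate part is arranging the automatically-positive quantities so their positivity is genuinely manifest rather than merely numerically plausible, which may require clearing common positive denominators and grouping monomials carefully. A secondary point to make rigorous is the identification of $q$ with the characteristic polynomial of the $S$-restricted linearization and the fact that $\phi$ parametrizes all positive steady states (the latter is the cited footnote), so that the passage between the full-space dynamics and the dynamics on $\scc_c$ is clean.
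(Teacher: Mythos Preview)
The paper does not prove this proposition; it is quoted from \cite[Proposition~4.1]{OSTT} and stated here without proof, so there is no argument in the present paper to compare yours against.

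Your outline is the standard route and is almost certainly what the cited source does: the three conservation laws force $\lambda^{3}\mid\det(\lambda I-\mathrm{Jac}(f))$, so the dynamically relevant factor is the degree-$7$ polynomial $q$; the Yang/Liu criterion then characterizes a simple Hopf bifurcation via the Hurwitz determinants of $q$; and the conditions absent from~\eqref{eq:hurwitz-hopf-conditions-reduced-ERK} (namely $b_7>0$ and $\mathfrak{h}_1,\mathfrak{h}_2,\mathfrak{h}_3>0$) must be shown to hold automatically after substituting $\phi$. The paper itself corroborates this picture later: in the proof of Theorem~\ref{thm:pre-hopf-all-process-levels-epsilon-close-to-1} it notes that $\mathfrak{h}_4$ is already a sum of positive terms, confirming that symbolic positivity is how the lower Hurwitz conditions are disposed of.

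One point deserves care. Your one-parameter family varies $\kcat$ with $x^*,\koff,\loff$ held fixed (so the remaining seven rate constants move along $\phi$), and for that family the transversality is exactly the partial derivative $\partial_{\kcat}\mathfrak{h}_6$ appearing in~(ii). A literal reading of~(i), however, varies $\kcat$ with the other nine rate constants fixed, in which case the steady state moves and the transversality is a total derivative along the induced curve in $(\hat\kappa,x)$-space. These two one-parameter families pass through the same critical pair $(\kappa^*,x^*)$ and share the eigenvalue configuration there, but their crossing-speed conditions are not identical in general. Your argument cleanly gives (ii)$\Rightarrow$(simple Hopf in the $\phi$-family), which is what the later theorems actually use; establishing the equivalence with~(i) as literally phrased needs either the convention that the bifurcation parameter is $\kcat$ in the $(\hat\kappa,x)$-chart, or an extra line relating the two derivatives.
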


\section{Bistability} \label{sec:bistable}
In this section, we show that, for {\em every} choice of processivity levels, the minimally bistable ERK network is multistationary  (Theorem~\ref{thm:MSS}).  We also give 
evidence suggesting that in fact, when we have multistationarity, we always have bistability (Section~\ref{sec:bistab-evidence}).
Finally, we investigate multistationarity numerically for processivity levels close to 1 (Section~\ref{sec:numerics}).

\subsection{Multistationarity at all processivity levels} \label{sec:mss}

\begin{theorem}[Multistationarity at all processivity levels] \label{thm:MSS}
Consider the minimally bistable ERK subnetwork.
For every choice of processivity levels
$p_k \in (0,1)$ and  $p_{\ell} \in (0,1)$,
there is a rate-constant vector
$(k_1^*,
    k_3^*, \kcat^*, \kon^*, \koff^*,
    {\ell}_1^*,
    {\ell}_3^*, \lcat^*, 
    \loff^*,
    m_2^*, m_3^*,
    n_1^*, 
    n_3^*
    ) \in \mathbb{R}^{13}_{>0}$
such that
    \begin{enumerate}
        \item $p_k={\kcat^*}/{(\kcat^* + \koff^*)}$ and $p_{\ell}={\lcat^*}/{(\lcat^*+\loff^*)}$, and
        \item the resulting system admits multiple positive steady states (in some compatibility class).
    \end{enumerate}
\end{theorem}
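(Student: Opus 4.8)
The plan is to invoke Proposition~\ref{prop:c-general}(A), using the steady-state parametrization $\phi$ from Proposition~\ref{prop:param-irrev}. That result says the network is multistationary as soon as we exhibit a single point $(\hat\kappa^*;\hat x^*)$ in the parameter domain at which the critical function $C=\left(\det \jac\,\F\right)|_{\phi}$ has sign $(-1)^{\mathrm{rank}(N)+1}$. Since the stoichiometric matrix $N$ for this $12$-species network has rank $12-3=9$, the target sign is $(-1)^{10}=+1$; equivalently I must show $C$ can be made negative for the opposite convention, so the first concrete task is to nail down $\mathrm{rank}(N)$ and hence the precise sign we are chasing. Before that, though, Proposition~\ref{prop:c-general} has two hypotheses to discharge: the network must be \emph{conservative} and must have \emph{no boundary steady states}. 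Conservativity is immediate from the three mass-conservation laws in~\eqref{eq:cons-law-irre}, which are nonnegative and cover every species. The absence of boundary steady states I would either cite from the companion analysis in~\citep{OSTT} or verify directly by checking that setting any species concentration to zero forces a contradiction with the steady-state relations together with the positive totals.

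The heart of the argument is then to produce, for each prescribed pair $(p_k,p_\ell)\in(0,1)^2$, a witness making $C$ the right sign \emph{while simultaneously realizing the prescribed processivity levels}. Here is where the two requirements of the theorem interact. The processivity constraints $p_k=\kcat/(\kcat+\koff)$ and $p_\ell=\lcat/(\lcat+\loff)$ are each a single homogeneous-degree-zero condition on a pair of rate constants, so they amount to fixing the ratios $\koff/\kcat=(1-p_k)/p_k$ and $\loff/\lcat=(1-p_\ell)/p_\ell$. The parametrization $\phi$ treats the full rate-constant vector $\kappa$ as free input $\hat\kappa\in\mathbb{R}^{13}_{>0}$, so I can simply feed in any $\kappa^*$ respecting these two ratios; the constraints carve out a $13-2=11$-dimensional open subset of rate-constant space, leaving ample freedom. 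The plan is to compute $C(\hat\kappa;\hat x)$ symbolically (the parametrized Jacobian determinant is a rational function of the $13$ rate constants and the three free concentrations $x_1,x_2,x_3$), substitute $\koff=\frac{1-p_k}{p_k}\kcat$ and $\loff=\frac{1-p_\ell}{p_\ell}\lcat$, and exhibit an explicit choice of the remaining free parameters that yields the target sign.

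\textbf{The main obstacle} is establishing that such a witness exists \emph{uniformly across the entire open square} $(p_k,p_\ell)\in(0,1)^2$, rather than merely at isolated sample points or near the processive limit $p_k,p_\ell\to 1$. A brute symbolic evaluation gives one point at a time; to cover all levels I would aim to factor $C$ after the processivity substitution and analyze the sign of the lowest-degree term in a convenient variable. The cleanest route is to treat one of the free concentrations (say $x_3$, or alternatively a free rate constant such as $n_1$) as a large or small parameter and show that, holding $p_k,p_\ell$ and the other quantities fixed at a generic positive value, the dominant term of $C$ in that limit carries the desired sign. If $C$ admits a factorization into a manifestly positive factor times a simpler polynomial whose sign I can control by one scalar knob for \emph{every} $(p_k,p_\ell)$, the theorem follows; the crux is verifying that this controlling factor does not degenerate (change sign or vanish identically) as $(p_k,p_\ell)$ ranges over the open unit square. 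I expect this to reduce to checking positivity/negativity of an explicit polynomial in $p_k,p_\ell$ and the chosen scaling parameter, which can be certified by inspecting its coefficients or by a suitable change of variables that clears denominators and exposes a sum of monomials of a single sign.
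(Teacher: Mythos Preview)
Your plan is correct and follows essentially the same route as the paper: apply Proposition~\ref{prop:c-general}(A) via the parametrization of Proposition~\ref{prop:param-irrev}, reduce the processivity constraints to fixing the two ratios $\koff/\kcat$ and $\loff/\lcat$ (the paper simply sets $\koff=\loff=1$ and lets $\kcat,\lcat$ range freely, which is the same maneuver), and then force the correct sign of $C$ uniformly in $(p_k,p_\ell)$ by a one-parameter scaling that isolates a dominant term whose coefficient is positive for all $\kcat,\lcat>0$. The paper makes this last step concrete via a Newton-polytope vertex: it specializes $(x_1,x_2,x_3,k_1,k_3,\ell_1,\ell_3,\dots)$ along a ray $t^v$ so that the numerator of $C$ has leading $t$-coefficient $2\kcat^2\lcat(\lcat+1)>0$, exactly the kind of ``dominant term carries the sign for every $(p_k,p_\ell)$'' argument you outline; your generic scaling idea is the right instinct, but be aware that a single-variable limit in $x_3$ or $n_1$ alone may not isolate a sign-definite vertex, so you should expect to scale several coordinates simultaneously.
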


\begin{proof}
Let $C( \kappa; \hat x)$ (where $\hat x=(x_1, x_2, x_3)$) denote the critical function of the steady-state parametrization \eqref{eq:param-irrev-details} in Proposition~\ref{prop:param-irrev}.

By setting $\koff^*=\loff^*=1$ and allowing $\kcat^*$ and $\lcat^*$ to be arbitrary positive values, we obtain all processivity levels $p_k={\kcat^*}/{(\kcat^*+\koff^*)}$ and $p_{\ell}={\lcat^*}/{(\lcat^*+\loff^*)}$ in $(0,1)$.
Also, the rank of stoichiometric matrix $N$ for this network is 9; hence, $(-1)^{\mathrm{rank}(N)+1} = 1$.
So, by Proposition~\ref{prop:c-general}, it suffices to show that for all $\kcat^*>0$ and $\lcat^*>0$, the following specialization of the critical function
is positive when we further specialize at some choice of
$(k_1,
    k_3, \kon,
    {\ell}_1,
    {\ell}_3,
    m_2, m_3,
    n_1, 
    n_3
    )  \in \mathbb{R}^9_{>0}$,
and $\hat x \in \mathbb{R}^3_{>0}$:
\begin{align} \label{eq:C-specialized}
        C( \kappa; \hat x) |_{\koff=\loff=1,~\kcat=\kcat^*,~\lcat=\lcat^*}
\end{align}

To see that the function~\eqref{eq:C-specialized} can be positive,
first note that the denominator of
$   C( \kappa; \hat x) |_{\koff=\loff=1}$, shown here, is always positive (all rate constants and $x_i$'s are positive):
\[
        (\kcat  \kon  x_2 + \kcat  n_1  x_3 +n_1  x_3 )^2 \lcat  x_3~.
\]
(See the supplementary file {\tt minERK-mss-bistab.mw}.)
Thus, it suffices to analyze the numerator of
$    C( \kappa; \hat x) |_{\koff=\loff=1}$. We denote this numerator by $\widetilde{C}$, and specialize as follows to obtain (see the supplementary file):
\begin{align} \label{eq:specialized-c}
 &
\widetilde{C}|_{
k_1=t^{-1},
    k_3=t^{-1},
    \kon=1,
    {\ell}_1=t,
    {\ell}_3=t^{-1},
    m_2=1,
    m_3=1,
    n_1=1,
    n_3=1,
    x_1=t, x_2=t, x_3=1
    } \\
    & \quad
~=~
   (2 \kcat^2 \lcat^2+2 \kcat^2 \lcat) t^5 \label{eq:specialized-c-leading-coeff}
   \\
    \notag
   & \quad \quad \quad
   +(-4 \kcat^3 \lcat^2-3 \kcat^3 \lcat+3 \kcat^2 \lcat^2-\kcat^3+9 \kcat^2 \lcat+3 \kcat \lcat^2+2 \kcat^2+3 \kcat \lcat) t^4  \\  \notag
      & \quad \quad \quad
   + \text{lower-order terms in } t.
\end{align}
Therefore, for all $\kcat>0$ and $\lcat>0$, the leading coefficient with respect to $t$ in~\eqref{eq:specialized-c-leading-coeff} is positive and so
the specialization of $\widetilde{C}$ is positive for sufficiently large $t$, which yields the desired values for the rate constants shown in~\eqref{eq:specialized-c}.
\end{proof}

\begin{remark} \label{rmk:NP}
In the proof of Theorem~\ref{thm:MSS}, the specialization~\eqref{eq:specialized-c} was obtained by viewing $\widetilde{C}$ as a polynomial 
in which each coefficient is a polynomial in $\kon$, $\kcat$ and $\lcat$, and then analyzing the resulting Newton polytope in a standard way (cf.\ \citep[Lemma B.3]{OSTT}), as follows.  We first found a vertex of the polytope whose corresponding coefficient is a positive polynomial (namely, the leading coefficient in~\eqref{eq:specialized-c-leading-coeff}).
Next, we chose a vector $v$ (specifically, $v = [1, 1, 0, -1, -1, 1, -1, 0, 0, 0, 0]$) in the interior of the corresponding cone in the polytope's outer normal fan.
So, by substituting $\kon=1$ and $t^{v_1}, t^{v_2},\dots $ for the variables $x_1, x_2, x_3, k_1, k_3, \ell_1, \ell_3, m_2, m_3, n_1, n_3$, the resulting polynomial is positive for large $t$.
\end{remark}

\subsection{Evidence for bistability} \label{sec:bistab-evidence}
Theorem~\ref{thm:MSS} states that the minimally bistable ERK network is multistationary at all processivity levels.  Multistationarity is a necessary condition for bistability, which is the focus of the original Question~\ref{q:subnet-bistable} from the Introduction.
Accordingly, we show bistability at many processivity levels with $p_k=p_{\ell}$ (Proposition~\ref{prop:bistab-equalprocesslevels}).
Furthermore, we give additional evidence for bistability at {\em all} processivity levels (Remark~\ref{rmk:bistab-random}), which we state as Conjecture~\ref{conj:bistab}.

\begin{remark}[Assessing bistability is difficult] \label{rmk:difficulty-bistab}
Although there are many criteria for checking whether a network is multistationary, there are relatively few for checking bistability~\citep{torres-feliu}.  Moreover, here we consider a more difficult question: does our network exhibit bistability for an infinite family of parameters (rather than a single parameter vector), encompassing all processivity levels?
Thus, it is perhaps unsurprising that we obtain only partial results in this direction.  Another ``infinite'' analysis of bistability was performed recently by \cite{tang-wang}, 
who proved that an infinite family of sequestration networks all are bistable. 
\end{remark}

\begin{proposition}[Bistability at many processivity levels] \label{prop:bistab-equalprocesslevels}
Consider the minimally bistable ERK subnetwork.
For each of the following processivity levels:
\begin{align} \label{eq:specific-values}
p_k ~=~ p_{\ell} ~\in~ \{0.1, 0.2, \dots, 0.9,~ 0.91, 0.92, \ldots, 0.99\}~,
\end{align}
there is a rate-constant vector
$(k_1^*,
    k_3^*, \kcat^*, \kon^*, \koff^*,
    {\ell}_1^*,
    {\ell}_3^*, \lcat^*, 
    \loff^*,
    m_2^*, m_3^*,
    n_1^*, 
    n_3^*
    ) \in \mathbb{R}^{13}_{>0}$
such that
        $p_k={\kcat^*}/{(\kcat^* + \koff^*)}$ and $p_{\ell}={\lcat^*}/{(\lcat^*+\loff^*)}$, and
        the resulting system admits multiple \underline{exponentially stable} positive steady states (in some compatibility class).
\end{proposition}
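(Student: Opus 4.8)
The goal is to exhibit, for each of the eighteen processivity levels in~\eqref{eq:specific-values}, a rate-constant vector realizing that level and producing two or more \emph{exponentially stable} positive steady states in a common compatibility class. Since Proposition~\ref{prop:c-general} and Theorem~\ref{thm:MSS} already give us multistationarity via the critical function, the first move is to harness that machinery to \emph{locate} candidate witnesses, and then upgrade ``multistationarity'' to ``bistability'' by a direct eigenvalue check. Concretely, for each fixed value $p_k=p_\ell=p$, I would set $\koff^*=\loff^*=1$ and $\kcat^*=\lcat^*=p/(1-p)$ (so that the processivity constraint in part~1 holds automatically), and then search over the remaining free parameters $(k_1,k_3,\kon,\ell_1,\ell_3,m_2,m_3,n_1,n_3)$ and the parametrizing coordinates $\hat x=(x_1,x_2,x_3)$ for a point at which the specialized critical function~\eqref{eq:C-specialized} has the sign $(-1)^{\mathrm{rank}(N)+1}=+1$ dictated by Proposition~\ref{prop:c-general}(A).

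\textbf{Producing and certifying the witnesses.}
By Proposition~\ref{prop:c-general}(B), any such sign-witness $(\hat\kappa^*;\hat x^*)$ yields, through $\phi$ and $c^*=Wx^*$, a concrete pair $(\kappa^*,c^*)$ for which the compatibility class $\scc_{c^*}$ contains two or more positive steady states. The key point distinguishing this proposition from Theorem~\ref{thm:MSS} is that degree theory guarantees only \emph{multiple} steady states, not their stability; so for each chosen witness I would then explicitly compute the steady states in $\scc_{c^*}$ (numerically solving $\F=0$ from~\eqref{consys}), and for each one evaluate the $\sigma=\mathrm{rank}(N)=9$ nonzero eigenvalues of $\mathrm{Jac}(f)$ restricted to the stoichiometric subspace $\St$. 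A steady state is exponentially stable exactly when all nine have negative real part. The claim is established for a given $p$ once I find two steady states in the same class that both pass this test. Because these are finitely many explicit checks, they can be carried out in a computer-algebra file (cf.\ the supplementary file {\tt minERK-mss-bistab.mw} already cited), and the proof amounts to tabulating, for each of the eighteen values of $p$, one validated rate-constant vector together with the two stable steady states and their eigenvalue data.

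\textbf{The main obstacle.}
The genuine difficulty is not finding multistationarity---Theorem~\ref{thm:MSS} already hands us a region of parameter space with the correct critical-function sign for every $p\in(0,1)$---but ensuring that \emph{two} of the resulting steady states are simultaneously stable. A positive sign of the critical function only forces an odd number $\geq 3$ of steady states (generically three), and in the standard bistability picture the outer two are stable while the middle one is a saddle; however, nothing in the degree-theoretic argument certifies this, and the scaling $t\to\infty$ used in the proof of Theorem~\ref{thm:MSS} drives the witnesses toward the boundary of parameter space where stability can degenerate. Thus the real work is a guided numerical search: starting from the multistationarity witnesses supplied by Theorem~\ref{thm:MSS} and perturbing the free parameters to pull the witness into the interior of the bistable region, while monitoring all nine eigenvalues at the outer steady states until both have spectra in the open left half-plane. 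This search must succeed separately for each of the eighteen values in~\eqref{eq:specific-values}, and it becomes progressively more delicate as $p\to 1$ (e.g.\ at $p=0.99$), where---consistent with the S-curve degenerating into a steep Hill function noted in Section~\ref{sec:bistable}---the stable branches approach each other and the stability margins shrink, so the eigenvalue computations must be performed carefully to remain conclusive.
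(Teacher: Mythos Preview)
Your proposal is correct and follows essentially the same approach as the paper: fix $\koff^*=\loff^*=1$ and $\kcat^*=\lcat^*=p/(1-p)$, invoke the critical-function witness from Theorem~\ref{thm:MSS} via Proposition~\ref{prop:c-general}(B), then numerically compute the steady states in the resulting compatibility class and check their eigenvalues. The one simplification the paper enjoys over your anticipated ``guided search with perturbation'' is that no perturbation turns out to be needed---the exact $t$-specialization~\eqref{eq:specialized-c} from the proof of Theorem~\ref{thm:MSS}, with a single tabulated value of $T$ for each of the eighteen levels, already produces three steady states of which two are exponentially stable (verified in the supplementary file).
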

\begin{proof}
As in the proof of Theorem~\ref{thm:MSS},
we achieve
each value of $p_k^*=p_{\ell}^*$, as in~\eqref{eq:specific-values},
by setting $\koff^* = \loff^* =1$ and $\kcat^* = \lcat^* = p_k^*/(1-p_k^*)$.

Next, we follow the proof of Theorem~\ref{thm:MSS} to find a witness to multistationarity.
Recall that the specialized numerator of the critical function given in~\eqref{eq:specialized-c}, which is a polynomial in $\kcat$, $\lcat$, and $t$, is positive (indicating multistationarity) for sufficiently large $t$. That is, there exists
a $T\in {\mathbb R}_{>0}$, which depends on the value of $p_k^*=p_{\ell}^*$,
at which the specialized critical function is positive for all $t\geq T$. For each value of $p_k^*=p_{\ell}^*$, we pick such a positive number $T$, as follows:
\begin{center}
\begin{tabular}{c|ccccccccc}
\hline
$p_k^*=p_{\ell}^*$ & 0.1 & 0.2 & 0.3 & 0.4 & 0.5 & 0.6 & 0.7 & 0.8 & 0.9 \\
$T$ & 3 & 3 & 3 & 3 & 4 & 5 & 7 & 10 & 20\\ \hline \hline
$p_k^*=p_{\ell}^*$ & 0.91 & 0.92 & 0.93 & 0.94 & 0.95 & 0.96 & 0.97 & 0.98 & 0.99\\
$T$ & 22 & 25 & 28 & 33 & 40 & 50 & 66 & 100 & 200\\
\hline
\end{tabular}
\end{center}

It follows, from~\eqref{eq:specialized-c} and Proposition~\ref{prop:c-general}(B), that with the following rate-constant vector:
\begin{align}
    \notag
    \kappa^*
   ~:=~ &
   (k_1^*,
    k_3^*, \kcat^*, \kon^*, \koff^*,
    {\ell}_1^*,
    {\ell}_3^*, \lcat^*, 
    \loff^*,
    m_2^*, m_3^*,
    n_1^*, 
    n_3^*
    ) \\
    ~=~
    &
    (
    T^{-1},~
    T^{-1}, ~
    p_k^*/(1-p_k^*),~
    1,  ~
    1,~
    T,~
    T^{-1},~
        p_k^*/(1-p_k^*),~
    1, ~
    1, ~
    1,~
    1, ~
    1
    )~, \label{eq:special-rate-constants}
\end{align}
there are multiple steady states in the compatibility class containing
$x^*:=\pi(\phi(\kappa^*; 1,T,1))$, 
where $\phi: \mathbb{R}^{13}_{>0}\times \mathbb{R}^{3}_{>0} ~\to~ \mathbb{R}^{13}_{>0} \times \mathbb{R}^{12}_{>0}$ is the steady-state parametrization in Proposition~\ref{prop:param-irrev} and $\pi:\mathbb{R}^{13}_{>0} \times \mathbb{R}^{12}_{>0} \to \mathbb{R}^{12}_{>0}$ denotes the canonical projection to the last 12 coordinates.

Finally, for each such $x^*$ (one for each choice of $p_k^*=p_{\ell}^*$),
the stoichiometric compatibility class
of $x^*$
contains exactly three positive steady states (arising from the rate-constant vector $\kappa^*$);
see {\tt minERK-mss-bistab.mw}.
Moreover, two of the steady states each have
three zero eigenvalues and the remaining eigenvalues having strictly negative real parts (indicating that these two steady states are exponentially stable), and one steady state has a (single) non-zero eigenvalue with positive real part (indicating it is unstable); see the supplementary file. 
Therefore, we have bistability for each of the processivity levels
in~\eqref{eq:specific-values}.
%
\end{proof}


%

Proposition~\ref{prop:bistab-equalprocesslevels} showed bistability for certain processivity levels with $p_k=p_{\ell}$.  Even when $p_k \neq p_{\ell}$ (see Remark~\ref{rmk:bistab-random}), we found -- in every instance we examined -- bistability.
\begin{remark}[Bistability at random processivity levels] \label{rmk:bistab-random}
For the minimally bistable ERK subnetwork,
we generated random pairs of processivity levels $p_k$ and $p_{\ell}$ between 0 and 1 (Table~\ref{tab:bistab-ss-evals}).
For all such pairs, following the procedure described in the proof of Proposition~\ref{prop:bistab-equalprocesslevels}, we found bistability.  For details, see the supplementary file {\tt minERK-MSS-bistab.mw}.
\begin{table}[ht!]
\centering
\begin{tabular}{cccccccccc}

\hline
$p_k$ & {0.01570} & {0.02229} & {0.06748} & {0.2203} & {0.2268} & {0.2576} & {0.2897} & {0.4613} & {0.5378}\\
$p_{\ell}$ & {0.05004} & {0.3476} & {0.6011} & {0.6076} & {0.9461} & {0.2263} & {0.9883} & {0.4217} & {0.3770}\\
\hline \hline
$p_k$ & {0.5893} & {0.6613} & {0.6968} & {0.9076} & {0.9307} & {0.9598} & {0.9771} & {0.9845} \\
$p_{\ell}$ & {0.5289} & {0.04355} & {0.1351} & {0.2668} & {0.9010} & {0.6118} & {0.07128} & {0.9809}\\
\hline
\end{tabular}




















\caption{
Randomly generated pairs of processivity levels, rounded to four significant digits.  At every such pair, the minimally bistable ERK network exhibits bistability (in some compatibility class).
Computations are in the supplementary file {\tt minERK-MSS-bistab.mw}.
}
\label{tab:bistab-ss-evals}
\end{table}


\end{remark}

In light of Proposition~\ref{prop:bistab-equalprocesslevels} and
Remark~\ref{rmk:bistab-random}, we conjecture that, in Theorem~\ref{thm:MSS}, multistationarity can be strengthened to bistability.  In other words, we conjecture that the answer to Question~\ref{q:subnet-bistable} is ``yes'':

\begin{conjecture}[Bistability at all processivity levels] \label{conj:bistab}
Consider the minimally bistable ERK subnetwork.
For every choice of processivity levels
$p_k \in (0,1)$ and  $p_{\ell} \in (0,1)$,
there is a rate-constant vector
$(k_1^*,
    k_3^*, \kcat^*, \kon^*, \koff^*,
    {\ell}_1^*,
    {\ell}_3^*, \lcat^*, 
    \loff^*,
    m_2^*, m_3^*,
    n_1^*, 
    n_3^*
    ) \in \mathbb{R}^{13}_{>0}$
such that
        $p_k={\kcat^*}/{(\kcat^* + \koff^*)}$ and $p_{\ell}={\lcat^*}/{(\lcat^*+\loff^*)}$, and
        the resulting system admits multiple \underline{exponentially stable} positive steady states (in some compatibility class).
\end{conjecture}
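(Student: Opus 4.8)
The plan is to reduce the conjecture to a pure stability statement, since multistationarity at every $(\pk,\pl)$ is already furnished by Theorem~\ref{thm:MSS}. Fix arbitrary $\pk,\pl \in (0,1)$ and reuse the witness construction from the proof of Theorem~\ref{thm:MSS}: set $\koff=\loff=1$, $\kcat=\pk/(1-\pk)$, $\lcat=\pl/(1-\pl)$, and take the one-parameter family of the remaining rate constants and of $\hat x=(x_1,x_2,x_3)$ given by the substitution in~\eqref{eq:specialized-c}, parametrized by $t$. For $t$ beyond some threshold $T(\pk,\pl)$ the specialized critical function is positive, so Proposition~\ref{prop:c-general}(B) produces a compatibility class containing (at least) three positive steady states. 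Everything that follows concerns only the \emph{exponential stability} of two of these three states; the rate-constant vector and the class are already pinned down.

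For the stability analysis I would mimic the Hopf set-up of Proposition~\ref{prop:hopfcriterion}, but for the minimally bistable subnetwork. Restricting $\jac(f)$ to the stoichiometric subspace $\St$ (equivalently, dividing out the three zero eigenvalues forced by the $d=3$ conservation laws) yields a degree-$9$ characteristic polynomial $\chi(\lambda)$ whose coefficients, after evaluation along the parametrization~\eqref{eq:param-irrev-details}, are rational functions of $t$ and of $(\pk,\pl)$. Exponential stability of a given steady state is then equivalent to the Routh--Hurwitz conditions: positivity of all leading Hurwitz determinants $\h_i$ of $\chi$ at that state. The goal is to prove, for each fixed $(\pk,\pl)$ and all sufficiently large $t$, that these determinants are positive at the two ``outer'' steady states (ordered by, say, $x_1$), while the ``middle'' state carries a single eigenvalue of positive real part, consistent with a nondegenerate fold.

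To make this tractable I would exploit the $t\to\infty$ scaling as a singular perturbation. Under the substitution in~\eqref{eq:specialized-c} several rate constants scale as powers of $t$ ($k_1,k_3,\ell_3\sim t^{-1}$ and $\ell_1\sim t$), inducing a fast--slow splitting of the dynamics. I would show the fast eigenvalues have real parts tending to $-\infty$ and that the slow dynamics reduce to a low-dimensional system whose stability can be read off directly; a Tikhonov/Fenichel-type argument would then transfer stability of the reduced system to the full one. In parallel, the leading-in-$t$ coefficient of each $\h_i$ could be analyzed by the Newton-polytope technique already used for the critical function in Remark~\ref{rmk:NP}: it would suffice to certify that this top coefficient is a positive function of $(\pk,\pl)$ on $(0,1)^2$, which forces $\h_i>0$ for large $t$.

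The main obstacle is the uniformity of this eigenvalue control across the entire two-parameter family. Unlike multistationarity, which follows from the sign of a single determinant, exponential stability requires the simultaneous positivity of a full system of Hurwitz determinants, each an enormous polynomial in $t$ and $(\pk,\pl)$; certifying their leading coefficients symbolically---rather than numerically at the sampled levels of Proposition~\ref{prop:bistab-equalprocesslevels} and Remark~\ref{rmk:bistab-random}---is the crux. The difficulty is sharpest as $\pk,\pl\to 1$: the numerics of Section~\ref{sec:numerics} show the multistationarity fold deforming into a steep Hill-type curve, so the stabilizing window of $t$ (and the threshold $T(\pk,\pl)$) appears to blow up, threatening any uniform bound and making the limit $\pk=\pl=1$---where bistability is known to vanish---a genuine singular boundary of the argument.
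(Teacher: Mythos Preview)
The statement you are addressing is a \emph{conjecture} in the paper, not a theorem: the authors do not prove it, and they say so explicitly. Their evidence consists only of the finitely many numerically verified cases in Proposition~\ref{prop:bistab-equalprocesslevels} and Remark~\ref{rmk:bistab-random}, together with Remark~\ref{rmk:difficulty-bistab} flagging exactly the difficulty you flag. So there is no ``paper's own proof'' to compare against, and any complete argument you give would be new.

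Your proposal is a reasonable line of attack, and the reduction step is correct: Theorem~\ref{thm:MSS} and Proposition~\ref{prop:c-general}(B) do supply, for each $(\pk,\pl)$ and large enough $t$, a rate-constant vector and a compatibility class with at least two positive steady states (in fact three, per the computations behind Proposition~\ref{prop:bistab-equalprocesslevels}). The remaining work is exactly what you identify, and this is where the proposal stops short of a proof. Two concrete gaps:
\begin{itemize}
\item You sketch two mechanisms for stability---a Tikhonov/Fenichel fast--slow reduction and a Newton-polytope analysis of the leading-$t$ coefficients of the Hurwitz determinants $\h_1,\dots,\h_9$---but neither is carried out. The singular-perturbation route requires identifying the slow manifold and verifying hyperbolicity of the layer equations uniformly in $(\pk,\pl)$; the Newton-polytope route requires that the dominant vertex of each $\h_i$ carry a coefficient that is a \emph{positive} polynomial in $(\pk,\pl)$, which is not guaranteed and must be checked for nine very large polynomials. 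Until one of these is executed, there is no proof.
\item The three steady states in the compatibility class are only known to exist; you do not have explicit expressions for the two ``outer'' ones as functions of $(\pk,\pl,t)$, so you cannot directly evaluate $\jac(f)$ at them. The Routh--Hurwitz analysis you propose is written as if it applies along the parametrization~\eqref{eq:param-irrev-details}, but that parametrization passes through the \emph{middle} steady state $x^*$ (the one built into the witness), which is expected to be unstable. You would need either to track the outer states asymptotically in $t$, or to argue indirectly via the Brouwer-degree count (as in Theorem~\ref{thm:no-coexistence}) that if the middle state has index $-1$ then the outer two cannot both fail to be stable---but index alone does not give exponential stability.
\end{itemize}
In short, your diagnosis of the obstacle is accurate and matches the paper's; the proposal is a plan, not a proof, and the conjecture remains open.
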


If Conjecture~\ref{conj:bistab} holds, then \citep[Theorem 3.1]{Joshi-Shiu-2013} implies that bistability ``lifts'' to the original ERK network.  In other words, this would answer in the affirmative the original Question~\ref{q:main}, for bistability.

\subsection{Numerical investigation
for
processivity levels near 1}
\label{sec:numerics}

In this subsection, we numerically investigate multistationarity of the minimally bistable ERK network, for processivity levels close to 1.
Specifically, we examine how processivity levels near 1 affect the S-shaped steady-state curves
(as in~\citep[Figure 9.6]{distefano}) usually associated with multistationarity.  We focus in particular on the concentration of the fully phosphorylated substrate ($x_{12}$), as
this species is arguably the most interesting in our signaling network.
Indeed, this substrate is generally further processed by other signaling
modules.

\begin{figure}
  \centering

  \begin{subfigure}{0.4\textwidth}
    \includegraphics[width=0.9\linewidth]{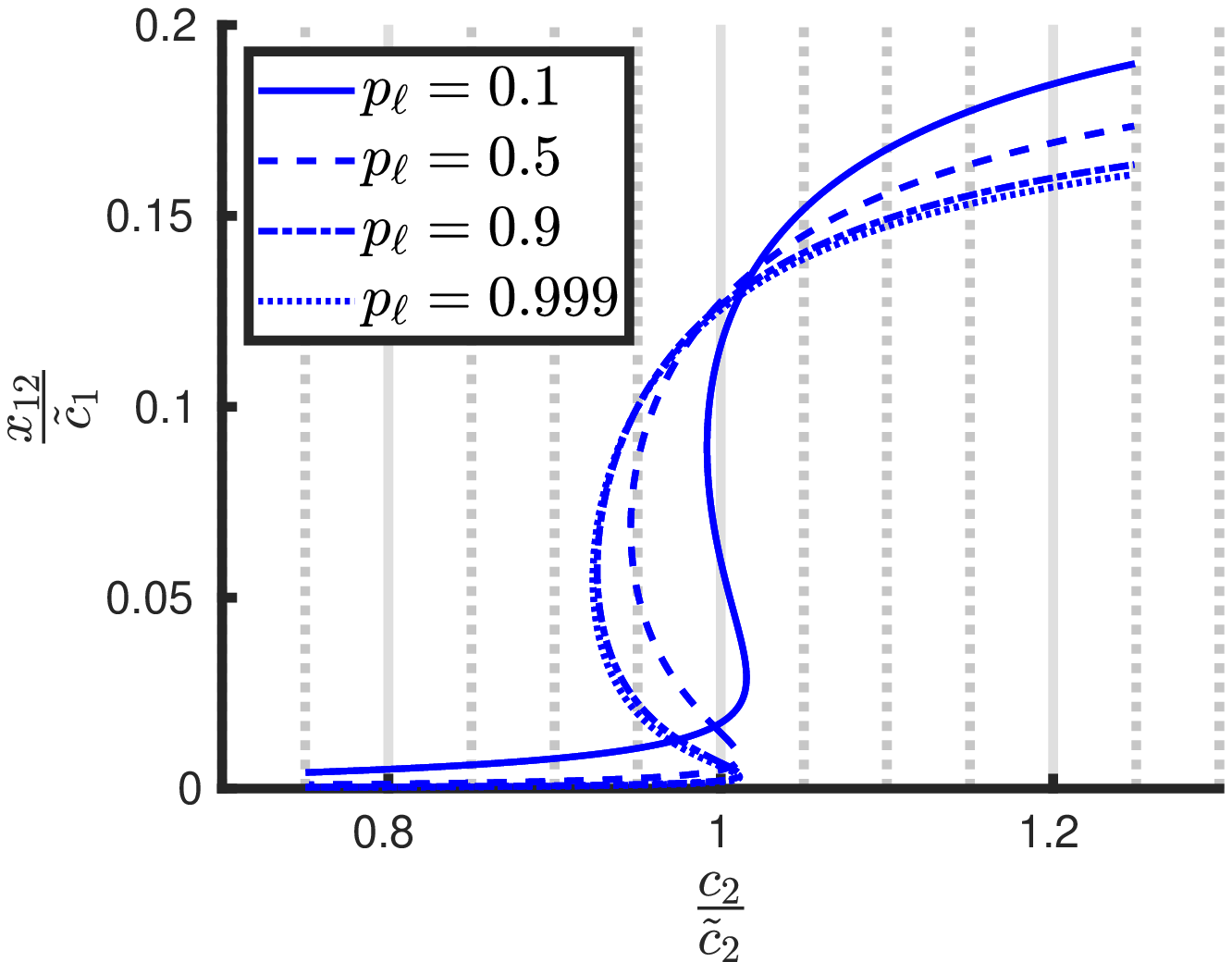}
    \subcaption{\label{fig:pl_only_full}$0.75\leq \frac{c_2}{\tilde c_2} \leq 1.25$}
  \end{subfigure}
  \hfill
  \begin{subfigure}{0.4\textwidth}
    \includegraphics[width=0.9\linewidth]{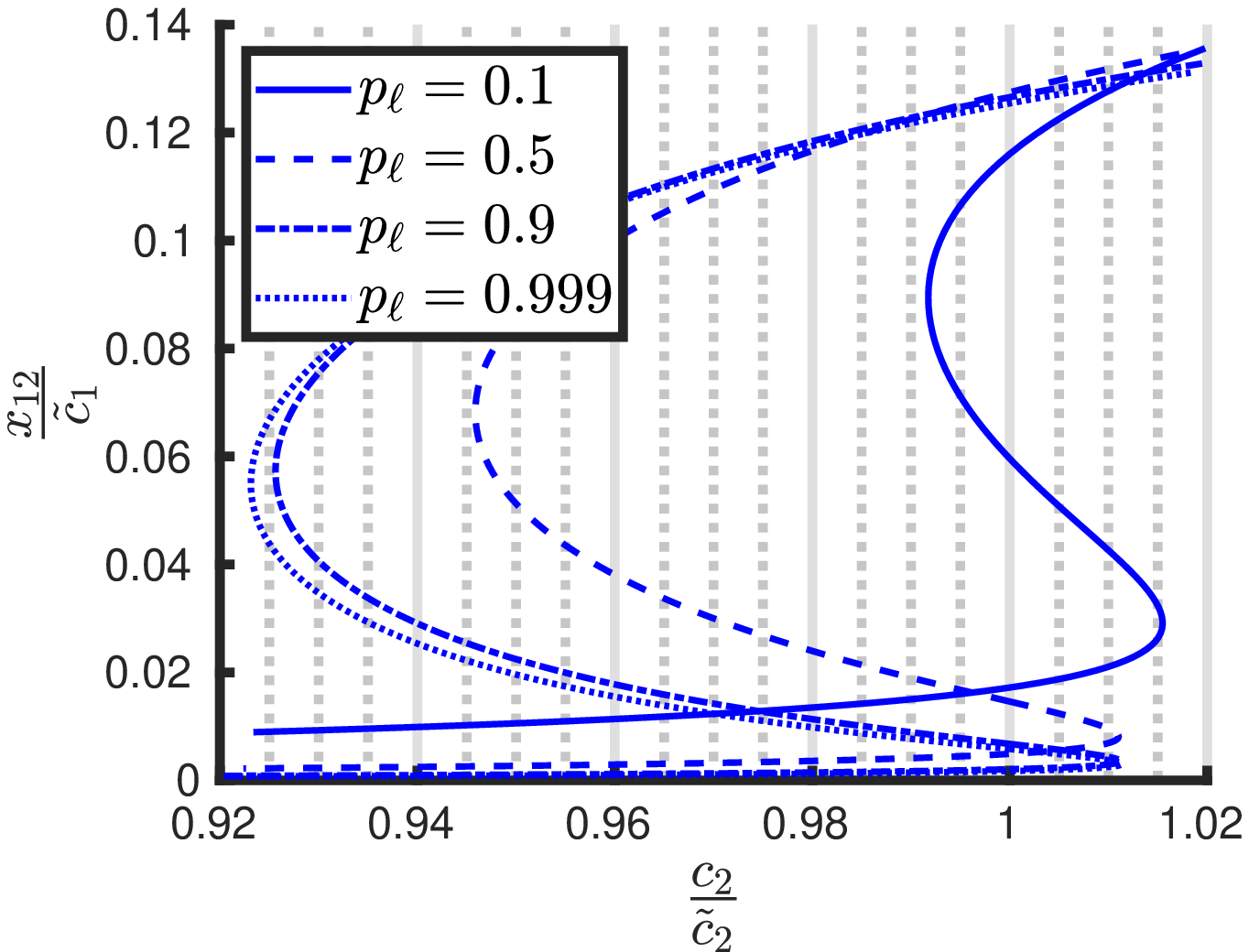}
    \subcaption{\label{fig:pl_only_zoom}$0.9\leq \frac{c_2}{\tilde c_2}\leq 1.05$}
  \end{subfigure}

  \caption{\label{fig:pl_only}
    Numerical investigation of multistationarity as $p_\ell \to 1$
    (for $p_k=0.1$;
        see Section~\ref{sec:setup-figures} and Appendix~\ref{app:num-mss} for figure setup and generation).
    An increase of $p_\ell$ leads to a (small) decrease of
    $\frac{x_{12}}{\tilde c_1}$ at $\frac{c_2}{\tilde c_2} \approx
    1.25$ (display~\subref{fig:pl_only_full}) and to a larger multistationarity interval (from approximately $0.99 \leq
    \frac{c_2}{\tilde c_2} \leq 1.01$ to $0.92 \leq \frac{c_2}{\tilde
      c_2} \leq 1.02$) (display~\subref{fig:pl_only_zoom}).
  }
\end{figure}

\subsubsection{Setup for Figures~\ref{fig:pl_only}--\ref{fig:pk_pl_099}.} \label{sec:setup-figures} Figures~\ref{fig:pl_only}--\ref{fig:pk_pl_099} were generated by
numerical continuation using {\tt Matlab} and
   {\tt Matcont}.
   Further details on how we obtained these figures are in Appendix~\ref{app:num-mss}.
   In particular, parameter values, total concentrations, and initial
    conditions were obtained by equation  (\ref{eq:special-rate-constants}) and also (in the appendix) (\ref{eq:num_C})--(\ref{eq:ss_para})
    and the values in Tables~\ref{tab:pk_pl_T_1_2}--\ref{tab:pk_pl_T_3}. In all figures, the x-axis
    is the relative total
    amount of kinase ($c_2 / \tilde c_2$ obtained in
    step~\ref{item:compute_tots} of the procedure described in the appendix), and the y-axis is the 
    relative
    amount of fully phosphorylated substrate ($x_{12} / \tilde c_1$), also obtained in step~\ref{item:compute_tots}).
    The reason for examining relative (rather than actual) amounts is that, as $p_k$ and/or $p_\ell$ approach 1,
certain total amounts differ by orders of
magnitude, and so it is more meaningful to compare
values relative to a reference point.

\subsubsection{Results} \label{sec:results-figures}
Figure~\ref{fig:pl_only} shows that,
for $p_k=0.1$ and various values of $p_\ell$, we obtain
classical S-shaped curves often associated with
multistationarity.
We also see that increasing
$p_\ell$ alone has only a modest effect on the curve:
at
the relative total concentration $\frac{c_2}{\tilde c_2} \approx
    1.25$,
the fraction of fully
phosphorylated substrate $\frac{x_{12}}{\tilde c_1}$ (at steady state) decreases but only by a small amount
(see Figure~\ref{fig:pl_only_full}).

Next, we investigate the interval of values of ${c_2}/{\tilde c_2}$ at which multistationarity occurs, which we call the \defword{multistationarity interval}.
We see in Figure~\ref{fig:pl_only_zoom}
{\color{black} (which is a ``zoomed in'' version of Figure~\ref{fig:pl_only_full})}
that as $\pl$ increases (with $p_k=0.1$), the multistationarity interval
enlarges (see the caption of Figure~\ref{fig:pl_only_zoom}).
We can view the size of this interval as a measure
of the robustness of
multistationarity with respect to 
fluctuations of the total
amount of kinase.
Hence, Figure~\ref{fig:pl_only} motivates us to
conjecture that increasing only one processivity level leads to increased robustness of multistationarity, as follows:
{\em When one processivity level is fixed and close to 0,
increasing the other
processivity level leads to a larger multistationarity interval.}

Next, we fix $p_{\ell}$ at a high value
(namely,
$p_\ell=0.9$) and increase $p_k$ (see~Figure~\ref{fig:pk_pl_09}). Again, increasing $p_k$ reduces the
fraction of fully phosphorylated substrate $\frac{x_{12}}{\tilde c_1}$ at
$\frac{c_2}{\tilde c_2}\approx 1.25$, now substantially. Moreover,
the multistationarity interval shrinks
(see Figures~\ref{fig:pk_pl_09_zoom1} and \ref{fig:pk_pl_09_zoom2}). This
motivates the following conjecture:
{\em When one processivity level is fixed and close to 1,
increasing the other
processivity level leads to a smaller multistationarity interval.}


\begin{figure}[htb]
  \centering

  \begin{subfigure}{0.3\textwidth}
    \includegraphics[width=0.9\linewidth]{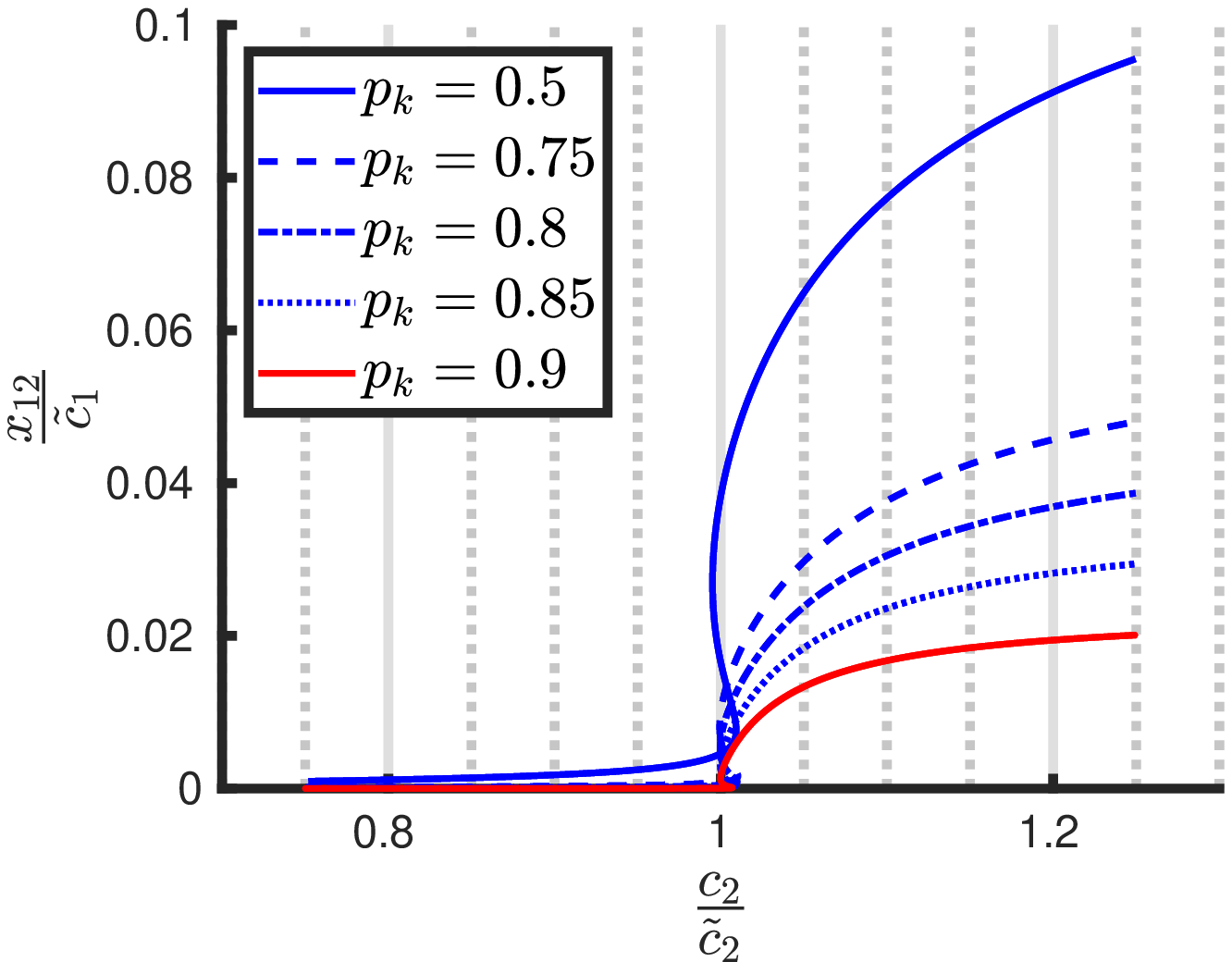}
    \subcaption{\label{fig:pk_pl_09_full}$0.75\leq \frac{c_2}{\tilde c_2} \leq 1.25$}
  \end{subfigure}
  \hfill
  \begin{subfigure}{0.3\textwidth}
    \includegraphics[width=0.9\linewidth]{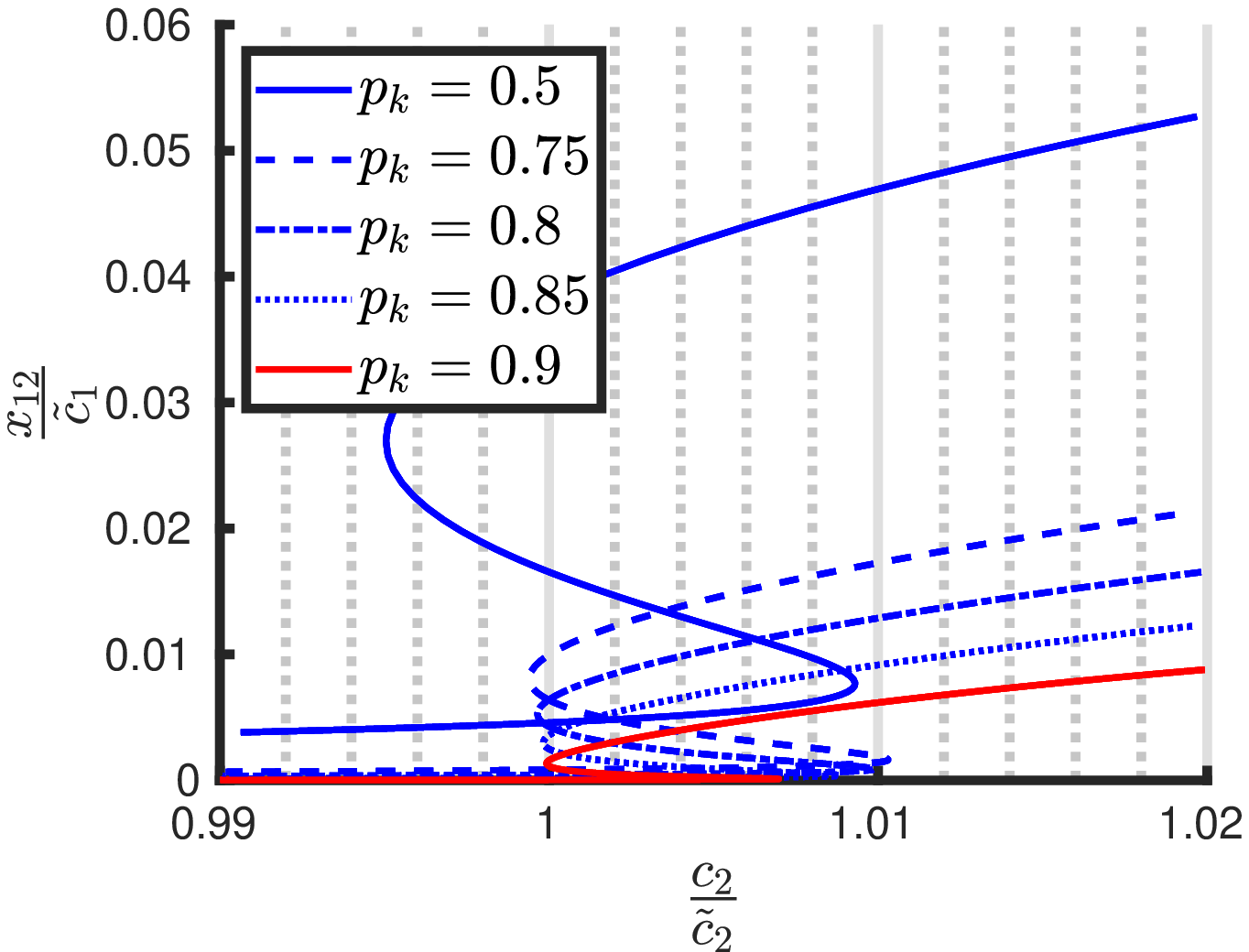}
    \subcaption{\label{fig:pk_pl_09_zoom1}$0.99\leq \frac{c_2}{\tilde c_2}\leq 1.05$}
  \end{subfigure}
  \hfill
  \begin{subfigure}{0.3\textwidth}
    \includegraphics[width=0.9\linewidth]{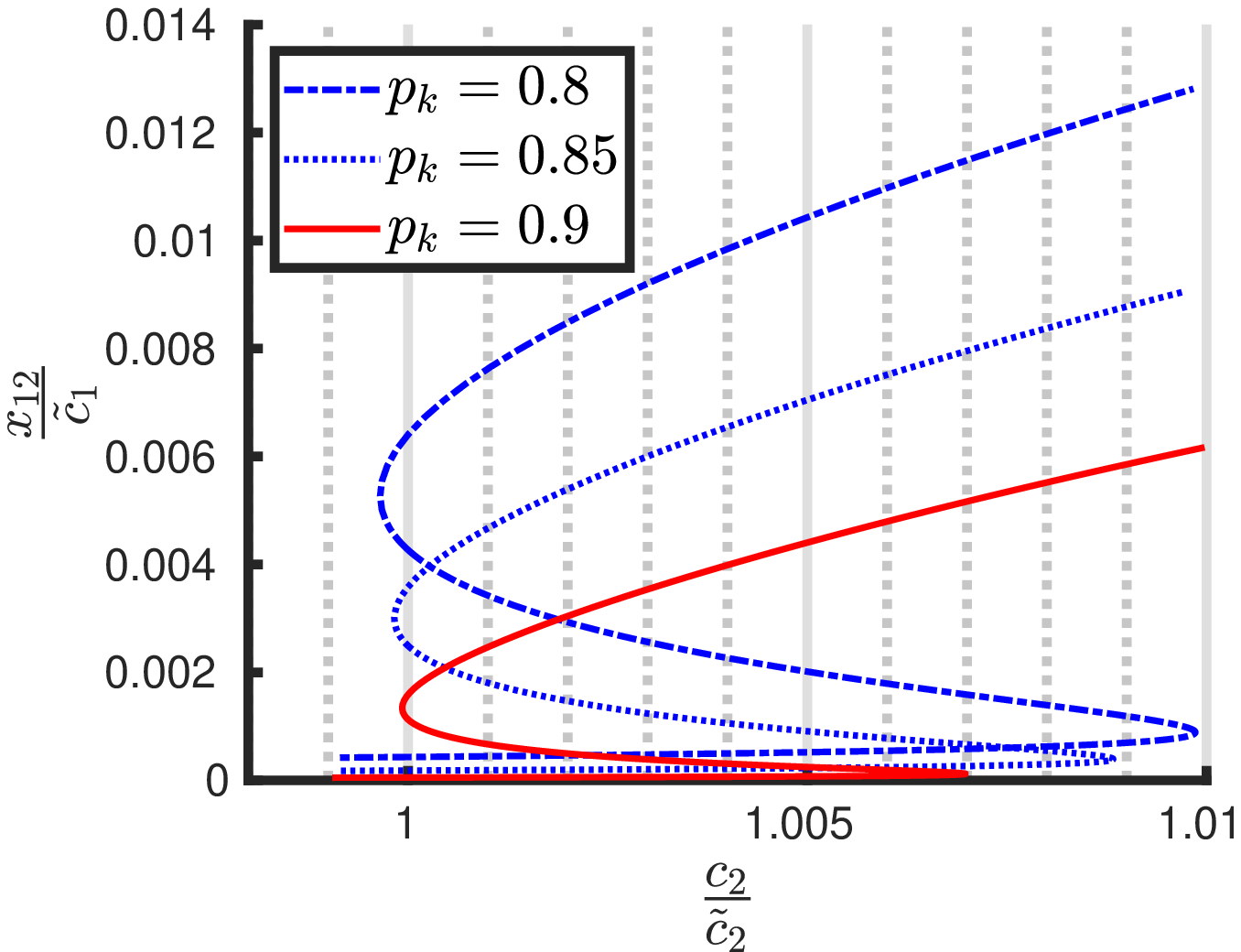}
    \subcaption{\label{fig:pk_pl_09_zoom2}$0.99\leq \frac{c_2}{\tilde c_2}\leq 1.01$}
  \end{subfigure}

  \caption{\label{fig:pk_pl_09}
    Numerical investigation of multistationarity as $p_k \to 1$
    (for $p_\ell=0.9$).
        An increase in $p_k$ leads
    to a substantial decrease in $\frac{x_{12}}{\tilde c_1}$ at
    $\frac{c_2}{\tilde c_2} \approx 1.25$
    (display~\subref{fig:pk_pl_09_full}) and a smaller
    multistationarity interval (from $0.992 \leq \frac{c_2}{\tilde
      c_2} \leq 1.01$ to $1 \leq \frac{c_2}{\tilde c_2} \leq 1.007$)
    (displays~\subref{fig:pk_pl_09_zoom1}--\subref{fig:pk_pl_09_zoom2}).
  }
\end{figure}

Finally, in Figure~\ref{fig:pk_pl_099},
we investigate values of $p_k = p_\ell$ close to 1.
Now the multistationarity interval becomes vanishingly small (see, in particular, Figure~\ref{fig:pk_pl_099_zoom22}), leading to a steady-state function that approaches a steep Hill function.
We conjecture that this phenomenon is the norm: {\em
As both processivity levels approach 1, the length of the multistationarity interval approaches 0.}


\begin{figure}[htb]
  \centering

  \begin{subfigure}{0.3\textwidth}
    \includegraphics[width=0.9\linewidth]{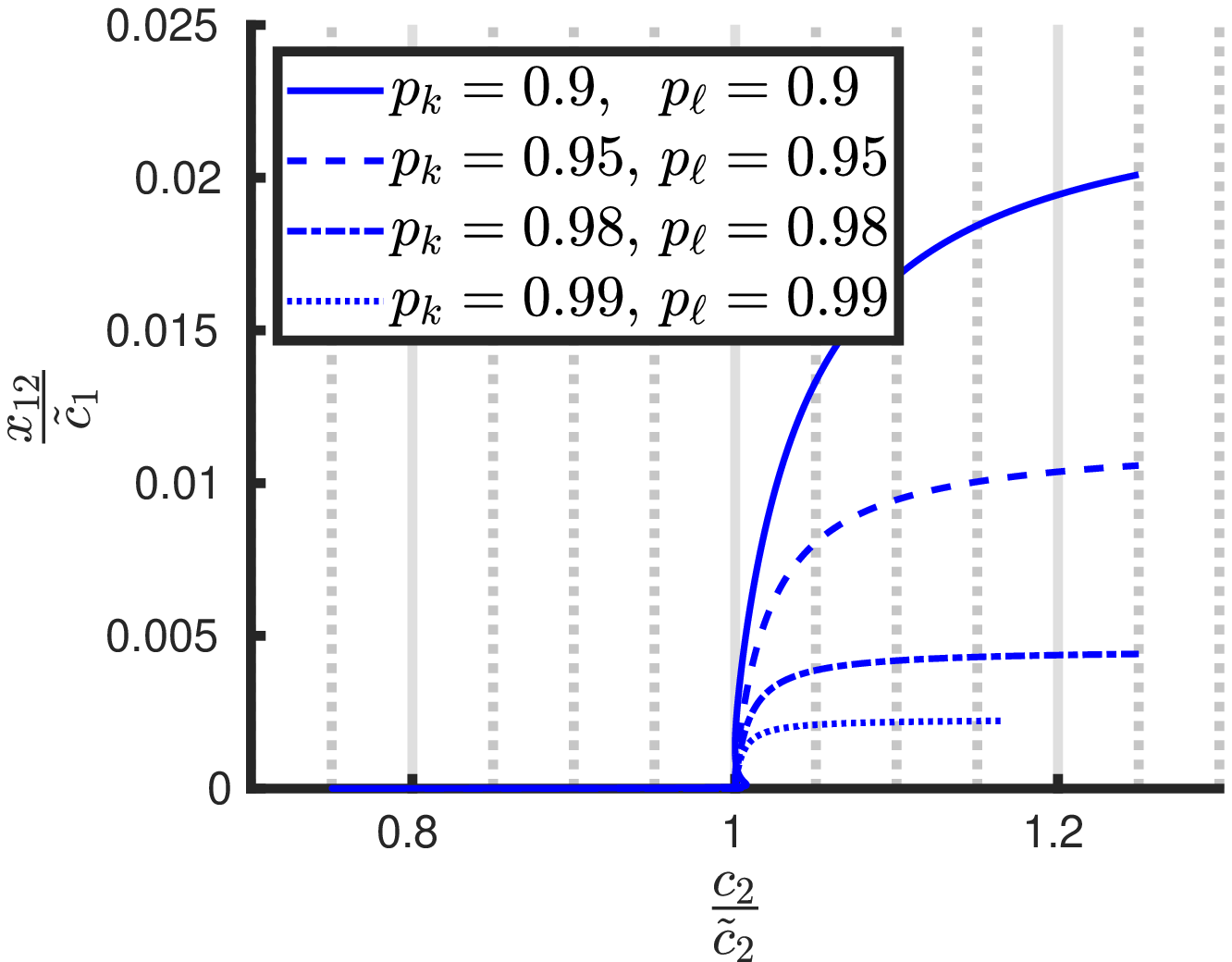}
    \subcaption{\label{fig:pk_pl_09_full2}$0.75\leq \frac{c_2}{\tilde c_2} \leq 1.25$}
  \end{subfigure}
  \hfill
  \begin{subfigure}{0.3\textwidth}
    \includegraphics[width=0.9\linewidth]{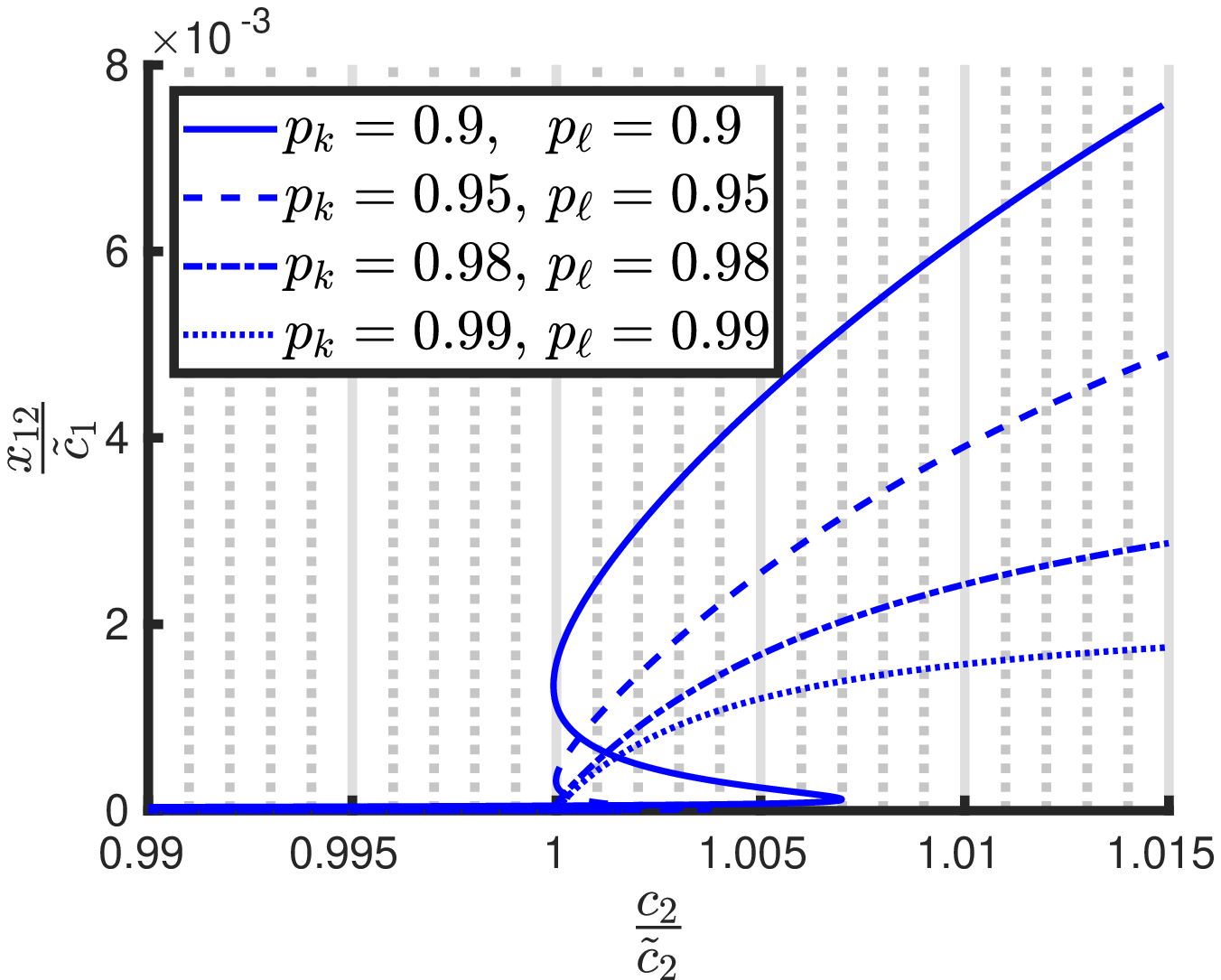}
    \subcaption{\label{fig:pk_pl_099_zoom12}$0.99\leq \frac{c_2}{\tilde c_2}\leq 1.05$}
  \end{subfigure}
  \hfill
  \begin{subfigure}{0.3\textwidth}
    \includegraphics[width=0.9\linewidth]{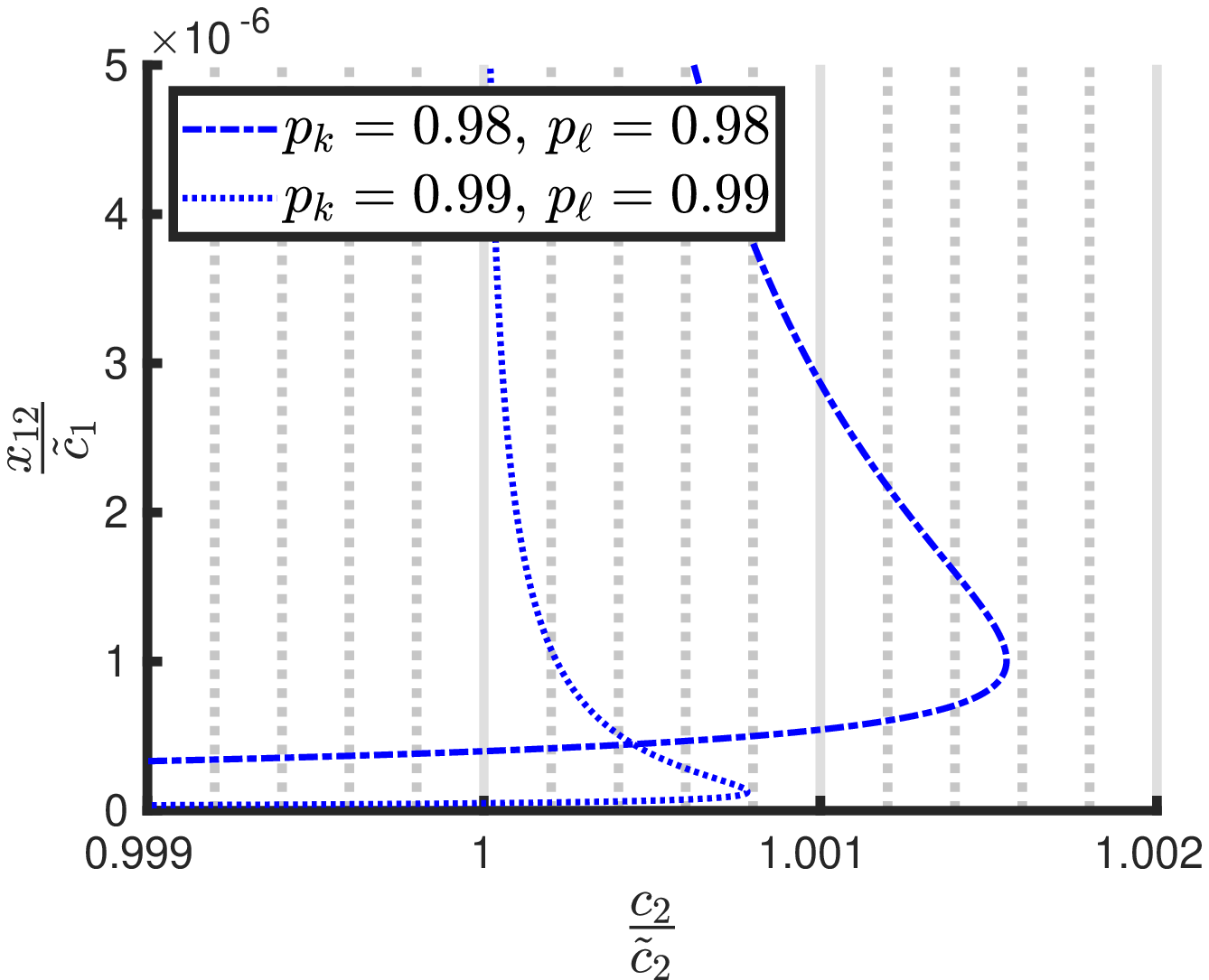}
    \subcaption{\label{fig:pk_pl_099_zoom22}$0.99\leq \frac{c_2}{\tilde c_2}\leq 1.01$}
  \end{subfigure}

  \caption{\label{fig:pk_pl_099}
    Numerical investigation of multistationarity for $p_k = p_\ell$ close to 1.
    An increase in $p_k$ and $p_\ell$ leads to a
    decrease in
    $\frac{x_{12}}{\tilde c_1}$ at $\frac{c_2}{\tilde c_2}
    \approx 1.25$ (display~\subref{fig:pk_pl_09_full2}).
    Also,
    when there is multistationarity,
    the values of $\frac{x_{12}}{\tilde c_1}$ (at all three steady states) decrease (possibly approaching 0) as $p_k$ and $ p_\ell$ approach 1.
    (display~\subref{fig:pk_pl_099_zoom12}).
    Finally, as $p_k$ and $ p_\ell$ approach 1, the multistationarity interval becomes so small that
    the curve approaches a step function
    (displays~\subref{fig:pk_pl_09_full2}--\subref{fig:pk_pl_099_zoom22}).
  }
\end{figure}




{\color{black}
\begin{remark}
\label{rem:ref-1}
    In the limiting case of $p_k\to1$ and $p_\ell\to1$, multistationarity deforms to monostationarity.
    It would be interesting to investigate what happens to the steady states; for instance, do two of them merge to form one?
    One setup for studying this in a controlled way is to fix $\kcat$ and $\lcat$, and then let $\koff$ and $\loff$ go to 0.
\end{remark}
}

\section{Hopf bifurcations and oscillations} \label{sec:osc}
In this section, we investigate Hopf bifurcations and oscillations in the reduced ERK network.
%
%
First, we answer Question~\ref{q:reduced-osc} in the affirmative:
Theorem~\ref{thm:pre-hopf-all-process-levels-epsilon-close-to-1}
asserts
that a Hopf bifurcation exists 
at all processivity levels $p_k$ and $p_{\ell}$ arbitrarily close to 1 -- and in fact for all levels greater than 0.003.
Subsequently, we perform a numerical investigation into oscillations.

\begin{theorem}[Hopf bifurcations at all processivity levels] \label{thm:pre-hopf-all-process-levels-epsilon-close-to-1}
Consider the reduced ERK network.
For all $0.002295< \epsilon < 1$, there exists a rate-constant vector
$\kappa^* = ( k_1^*, k_3^*, \kcat^*,\koff^*, m^* ,\ell_1^*, \ell_3^*, \lcat^*,\loff^*, n^*)$
such that
    \begin{enumerate}
        \item $p_k={\kcat^*}/{(\kcat^*+\koff^*)}>\epsilon$ and
         $p_{\ell} ={\lcat^*}/{(\lcat^* +\loff^*)} > \epsilon$, and
        \item         the resulting system \eqref{eq:ODE-reduced} admits a simple Hopf bifurcation (with respect to $\kcat$).
    \end{enumerate}
\end{theorem}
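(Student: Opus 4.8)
The plan is to apply the Hopf-bifurcation criterion of Proposition~\ref{prop:hopfcriterion}, which reduces the existence of a simple Hopf bifurcation (with respect to $\kcat$) to the existence of a single point $(\hat{\kappa}^*; x^*) \in \mathbb{R}^3_{>0} \times \mathbb{R}^{10}_{>0}$, with $\hat{\kappa}^* = (\kcat^*, \koff^*, \loff^*)$, satisfying the four Hurwitz conditions $\mathfrak{h}_4 > 0$, $\mathfrak{h}_5 > 0$, $\mathfrak{h}_6 = 0$, and $\partial \mathfrak{h}_6 / \partial \kcat \neq 0$. Once such a point is found, the remaining rate constants are fixed by the parametrization $\phi$; in particular $\lcat^* = \kcat^* x_4^* / x_{10}^*$. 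The two processivity levels are then $p_k = \kcat^*/(\kcat^* + \koff^*)$ and $p_\ell = \lcat^*/(\lcat^* + \loff^*)$, so the theorem becomes: for each $\epsilon$ in the stated range, produce a point obeying the four conditions with $p_k, p_\ell > \epsilon$.

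To drive both processivity levels toward $1$, the key observation is that $p_k \to 1$ as $\koff^* \to 0$ (with $\kcat^*$ bounded away from $0$), and likewise $p_\ell \to 1$ as $\loff^* \to 0$ (with $\lcat^* = \kcat^* x_4^*/x_{10}^*$ bounded away from $0$). I would therefore introduce a single scaling parameter controlling processivity --- say set $\koff^* = \loff^* = \delta$ --- and fix the remaining data $x^*$, possibly rescaling a few coordinates of $x^*$ by powers of $\delta$ in the manner of the Newton-polytope construction of Remark~\ref{rmk:NP}, so that the off-rates shrink while the catalytic and binding processes do not degenerate. Since the bifurcation parameter is $\kcat$ and the transversality condition is $\partial \mathfrak{h}_6/\partial \kcat \neq 0$, the natural move is to treat $\mathfrak{h}_6 = 0$ as an equation to be solved for $\kcat^*$: along this family, $\mathfrak{h}_6$ is a polynomial in $\kcat$ whose coefficients are explicit functions of $\delta$ and $x^*$, and I would locate a positive root $\kcat^* = \kcat^*(\delta)$ via the intermediate value theorem together with a sign analysis of its extreme coefficients.

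With $\kcat^*(\delta)$ in hand, the remaining work is to verify $\mathfrak{h}_4 > 0$, $\mathfrak{h}_5 > 0$, and $\partial \mathfrak{h}_6/\partial \kcat \neq 0$ at the located root, and to confirm that $\kcat^*(\delta)$ stays bounded below so that $p_k, p_\ell \to 1$ as $\delta \to 0$. Because large processivity corresponds to small $\delta$, the family should sweep out all processivity levels up to (but not equal to) $1$; the explicit lower bound $\epsilon > 0.002295$ then emerges as the largest $\delta$ (equivalently, the smallest processivity) for which the positive root persists and the two Hurwitz inequalities remain strict. I expect the sign control for small $\delta$ to follow from a leading-term / outer-normal-fan argument in the spirit of Remark~\ref{rmk:NP} and the proof of Theorem~\ref{thm:MSS}, while the precise threshold in the large-$\delta$ regime is pinned down by direct symbolic computation.

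The main obstacle I anticipate is the algebraic size of $\mathfrak{h}_6$: it is the determinant of the $6 \times 6$ Hurwitz matrix of the degree-$7$ polynomial $q(\lambda)$ from~\eqref{eq:reduced-char-poly}, whose coefficients already lie in $\mathbb{Q}(x)[\hat{\kappa}]$, so $\mathfrak{h}_6$ is an unwieldy rational expression in ten $x$-variables and three $\hat{\kappa}$-parameters. The difficulty is not conceptual but in choosing the family $x^*(\delta)$ cleverly enough that (i) the specialized $\mathfrak{h}_6$ simplifies enough to exhibit a positive root in $\kcat$, and (ii) the specialized $\mathfrak{h}_4, \mathfrak{h}_5$ have provably constant positive sign across the whole parameter range. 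Both verifications are almost certainly carried out in a computer-algebra supplementary file, with the delicate endpoint behavior --- establishing the exact threshold $0.002295$ --- requiring careful numerical or interval certification rather than a closed-form argument.
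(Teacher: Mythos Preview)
Your overall framework---invoke Proposition~\ref{prop:hopfcriterion} and exhibit a one-parameter family along which $\mathfrak h_4,\mathfrak h_5>0$ while $\mathfrak h_6$ changes sign---is exactly the paper's strategy, but the specific parametrization you propose is essentially the opposite of what the paper does. You send $\koff^*=\loff^*=\delta\to 0$ and then solve $\mathfrak h_6=0$ for $\kcat^*$; the paper instead fixes $\koff^*=\loff^*=1$, chooses $\kcat^*$ \emph{freely} (any value $>\epsilon/(1-\epsilon)$, so that $p_k>\epsilon$ is immediate), specializes $x^*=(1,1,1,t^2,1,t^2,1/t,1,t^2,1)$, and solves $\mathfrak h_6=0$ for the auxiliary variable $t$ via the intermediate value theorem ($\mathfrak h_6>0$ at $t=1$, $\mathfrak h_6\to-\infty$ as $t\to\infty$). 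Because $\lcat^*=\kcat^* x_4^*/x_{10}^*=t^{*2}\kcat^*$ and $t^*>1$, one gets $p_\ell>p_k>\epsilon$ for free. The paper's route has the practical advantage that the processivity level is set directly by the free choice of $\kcat^*$, decoupled from the root-finding; in your route $\kcat^*$ is implicitly determined, so you would still owe an argument that $\kcat^*(\delta)/\delta\to\infty$.

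Your account of the threshold $0.002295$ is not right and is worth correcting, since it is the one place where a nontrivial inequality must be checked. The bound does not come from ``the largest $\delta$ for which the positive root persists.'' In the paper's specialization, $\mathfrak h_4>0$ holds unconditionally (it is a sum of positive terms), and the root of $\mathfrak h_6$ in $t$ exists for every $\kcat^*>0$; the constraint is that $\mathfrak h_5(\hat\kappa^*;x^*)>0$ along the family holds only when $\kcat^*>0.0023$. Since $\koff^*=1$, this translates to $p_k=\kcat^*/(\kcat^*+1)>0.0023/1.0023\approx 0.002295$. So the numerical threshold is a lower bound on $\kcat^*$ forced by the $\mathfrak h_5$ inequality, not an endpoint of an off-rate parameter range.
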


\begin{proof}
Fix $0.002295< \epsilon < 1$.
Observe that, for every choice of rate constants for which
(a) $\kcat^*>\epsilon/(1-\epsilon)> 0.002295/(1-0.002295)\approx 0.0023$,
(b) $\lcat^*:= t^2 \kcat^* $ (for any choice of $t>1$), and
(c) $\koff^*= \loff^* := 1$,
we obtain the desired inequalities for $p_k$ and $p_{\ell}$:
\begin{align} \label{eq:bounds}
 \epsilon ~<~ \dfrac{\kcat^*}{\kcat^*+1} ~=~ p_k ~ <~  \dfrac{t^2 \kcat^* }{t^2 \kcat^* + 1} ~=~ p_{\ell}~.
\end{align}

Next, we show that a Hopf bifurcation exists, by verifying the conditions on $\h_4$, $\h_5$, and $\h_6$ (as in Proposition~\ref{prop:hopfcriterion}).
First, we show in the supplementary file {\tt redERK-Hopf.mw} that
$\h_4(\hat{\kappa};~x)$ is a sum of positive terms, and thus $\h_4(\hat{\kappa};~x)>0$ for all $\hat{\kappa} = (\kcat, \koff, \loff) \in \mathbb{R}^3_{>0}$ and $x \in \mathbb{R}^{10}_{>0}$.

Next, let
$(\widehat{\kappa};~x):= 
(\kcat^* , 1, 1;~ 1,
    1,
    1,
    t^2,
    1,
    t^2,
    1/t,
    1,
    t^2,
    1).
$
We verify (using {\tt Mathematica}) that if $\kcat^*>0.0023$,
then $\h_5(\hat{\kappa}^*;x)>0$ for all $t>0$;
see the supplementary file {\tt h5pos.nb}. Fix $\kcat^*>0.0023$.
Substituting $t^*=1$ into $\h_6(\widehat{\kappa}^*;x^*)$ yields a positive polynomial (in $\kcat^*$):
{\scriptsize
\begin{align*}
\h_6(\widehat{\kappa}^*;x^*)|_{t^*=1} =
({\kcat^*} +1)^2 \bigg(&31824000 {\kcat^*}^{18}+713988320 {\kcat^*}^{17}+7660517072 {\kcat^*}^{16}+52115784592 {\kcat^*}^{15} +251452795392 {\kcat^*}^{14}\\
&+912214161728 {\kcat^*}^{13}+2574990720896 {\kcat^*}^{12}+5775757031984 {\kcat^*}^{11}+10424374721840 {\kcat^*}^{10}\\
&+15237491111424 {\kcat^*}^{9} +18065664178000 {\kcat^*}^{8}+17318286301088 {\kcat^*}^{7}+13314668410544 {\kcat^*}^{6} \\
&+8093460125184 {\kcat^*}^{5}+3802097816832 {\kcat^*}^{4}+1331324403072 {\kcat^*}^{3} +327072356352 {\kcat^*}^{2}\\
&+50292006912 {\kcat^*} +3641573376\bigg)~.
\end{align*}
}%
Also, as $t\rightarrow \infty$, the limit of $\h_6(\hat{\kappa}^*;x^*)$ is $-\infty$.
Hence, there exists $t^*>1$ such that
$\h_6(\widehat{\kappa}^*;x^*)=0$ (where $x^*= \left(1,
    1,
    1,
    {t^*}^2,
    1,
    {t^*}^2,
    1/{t^*},
    1,
    {t^*}^2,
    1\right)$);
see the supplementary file {\tt redERK-Hopf.mw}. Finally, we check that $\pd{\h_6}{\kcat}(\widehat{\kappa}^*;x^*)\neq 0$ whenever $\h_6(\widehat{\kappa}^*;x^*)=0$ --
we verified this using the {\tt Julia} package {\tt HomotopyContinuation.jl}  \citep{HomotopyContinuation.jl} (see the supplementary file {\tt nondegen-close-to-1.txt}).

Thus, the reduced ERK system admits a Hopf bifurcation at
\begin{align} \label{eq:x-star}
    x^*~:=~ (x^*_1, x^*_2,\dots, x^*_{10} ) ~=~
\left(    1,
    1,
    1,
    {t^*}^2,
    1,
    {t^*}^2,
    1/t^*,
    1,
    {t^*}^2,
    1\right)~,
\end{align}
when the rate-constant vector is
\begin{align} \label{eq:kappa-star}
    \kappa^* ~:&=~ ( k_1^*, k_3^*, \kcat^*,\koff^*, m^* ,\ell_1^*, \ell_3^*, \lcat^*,\loff^*, n^*)  \\
    ~&=
    \left((\kcat^*+1){t^*}^2,
    (\kcat^*+1){t^*}^2,
    \kcat^*,
    1,
    t^*,
    \kcat^* {t^*}^2+1,
    (\kcat^* {t^*}^2+1)/{t^*}^2,
    \kcat^* {t^*}^2,
    1,
    1\right)~. \notag
\end{align}
By construction, these rate constants satisfy the conditions
(a), (b) (with $t=t^*>1$), and~(c) listed at the beginning of the proof.
So, the inequalities~\eqref{eq:bounds} hold.
\end{proof}

\begin{remark}
Following the proof of Theorem~\ref{thm:pre-hopf-all-process-levels-epsilon-close-to-1},
we provide witnesses for the Hopf bifurcation for several values of
$p_k$ and $p_{\ell}$
in the supplementary file {\tt redERK-Hopf.mw} (under the ``First Vertex Analysis'' section) for the interested reader. 
For instance, when $\epsilon=0.89$, then then the choices $\kcat^*=9$ and $t^*\approx 124.02$
satisfy the conditions imposed in the proof, and so we obtain, as in~\eqref{eq:bounds},
the processivity levels $p_k=0.9$
and
$\pl \approx 0.999993$. 
Thus, from (\ref{eq:x-star}), there is a Hopf bifurcation at
$    x^*
\approx (
1,
1,
1,
15380.68,
1,
15380.68,
0.008,
1,
15380.68,
1
)
$ when the rate-constant vector is as in (\ref{eq:kappa-star}):
\begin{align*}
    \kappa^*
    ~&~\approx(153806.78,
    153806.78,
    9,
    1,
    124.02,
    138427.1,
    9.00,
    138426.11,
    1,
    1)~.
\end{align*}
\end{remark}

\begin{remark}[Relation to Question~\ref{q:main}]\label{rmk:reln-to-orig-q}
As noted earlier, Theorem~\ref{thm:pre-hopf-all-process-levels-epsilon-close-to-1}
addresses
Question~\ref{q:reduced-osc},
the reduced-ERK version of the original Question~\ref{q:main}.
We focused on the reduced ERK network rather than the original ERK network, because analyzing the original one is computationally challenging.

Nevertheless, we conjecture that
Theorem~\ref{thm:pre-hopf-all-process-levels-epsilon-close-to-1} ``lifts'' to the original ERK network.  Indeed,
to go from the reduced ERK network to the original ERK network, we make some reactions reversible (which is known to preserve oscillations~\citep{banaji-inheritance}) and add some intermediate complexes (which is conjectured to preserve oscillations~\citep{banaji-inheritance}).
More precisely, we hope for a future result that states that
adding intermediates preserves oscillations and Hopf bifurcations, while the ``old'' rate constants are only slightly perturbed.  Such a result would help us to elevate Theorem~\ref{thm:pre-hopf-all-process-levels-epsilon-close-to-1} to an answer to Question~\ref{q:main} for the original ERK network.  {\color{black} An approach to achieving such a result is to use the results of
\cite{feliu2019quasi}
to write the reduced system as a limiting case of the original system, where some parameter goes to zero,
and then give an argument like that in \cite[\S 3]{sustained}.
}
\end{remark}

\begin{remark} \label{rmk:bound}
The bounds  $p_k, p_{\ell} > 0.002295$ in Theorem~\ref{thm:pre-hopf-all-process-levels-epsilon-close-to-1} arose from our choice of specialization in the proof, namely, $(\widehat{\kappa};~x):= 
(\kcat^* , 1, 1;~ 1,
    1,
    1,
    t^2,
    1,
    t^2,
    1/t,
    1,
    t^2,
    1)$.  Another specialization (that admits a Hopf bifurcation) would give rise to other bounds on $p_k$ and $p_{\ell}$.  Nevertheless, as our interest is in $p_k$ and $p_{\ell}$ close to 1, our bounds are not restrictive.
\end{remark}

Next, we relax the hypothesis $p_k > 0.002295$ in
Theorem~\ref{thm:pre-hopf-all-process-levels-epsilon-close-to-1}
to allow for all values of $p_k>0$.  However, we cannot also simultaneously control $p_{\ell}$.

\begin{proposition}[Hopf bifurcations at all $p_k$] \label{prop:hopf-at-all-pk}
Consider the reduced ERK network.
For every choice of processivity level
$p_k \in (0,1)$,
there exists a rate-constant vector
$\kappa^* = ( k_1^*, k_3^*, \kcat^*,\koff^*, m^* ,\ell_1^*, \ell_3^*, \lcat^*,\loff^*, n^*)$
such that
    \begin{enumerate}
        \item $p_k={\kcat^*}/{(\kcat^*+\koff^*)}$, and
        \item the resulting system admits a Hopf bifurcation.
    \end{enumerate}

\noindent Moreover, by symmetry of $\kcat$ and $\lcat$ in the reduced ERK network, we have the analogous result for all choices of $p_{\ell}$.
\end{proposition}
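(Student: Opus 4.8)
The plan is to mimic the proof of Theorem~\ref{thm:pre-hopf-all-process-levels-epsilon-close-to-1}, but now treat $\kcat^*$ as the quantity we tune to achieve a prescribed $p_k \in (0,1)$, while leaving $p_\ell$ uncontrolled. First I would again set $\koff^* := 1$ and $\loff^* := 1$, so that $p_k = \kcat^*/(\kcat^*+1)$ and $p_\ell = \lcat^*/(\lcat^*+1)$. The key difference from the theorem is that to hit an \emph{arbitrary} $p_k \in (0,1)$ we simply solve $\kcat^* = p_k/(1-p_k)$, which is a positive number for every $p_k \in (0,1)$; in particular this is allowed to be small (below the $0.0023$ threshold that the theorem could not reach). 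I would then verify the Hopf conditions of Proposition~\ref{prop:hopfcriterion} at a suitable specialization of the remaining data, just as in the theorem.

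The heart of the argument is to exhibit, for each fixed $\kcat^* > 0$, a point $(\hat\kappa^*; x^*)$ satisfying the four conditions $\h_4 > 0$, $\h_5 > 0$, $\h_6 = 0$, and $\partial \h_6/\partial\kcat \neq 0$ from~\eqref{eq:hurwitz-hopf-conditions-reduced-ERK}. I would reuse the same one-parameter family of witnesses as in the theorem, namely $(\hat\kappa^*; x^*) = (\kcat^*, 1, 1;\, 1, 1, 1, t^2, 1, t^2, 1/t, 1, t^2, 1)$ with $t > 1$ a free parameter (this forces $\lcat^* = \kcat^* t^2$ and hence $p_\ell = \kcat^* t^2/(\kcat^* t^2 + 1)$, which is whatever it turns out to be — we do not constrain it). Positivity of $\h_4$ holds for all positive arguments (already established in the supplementary file {\tt redERK-Hopf.mw}), so that condition is automatic. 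For $\h_6 = 0$: I would argue that $\h_6$, as a function of $t$, is positive at $t = 1$ (the displayed degree-$18$ polynomial in $\kcat^*$ in the theorem's proof has all positive coefficients, hence is positive for \emph{every} $\kcat^* > 0$, not just $\kcat^* > 0.0023$) and tends to $-\infty$ as $t \to \infty$; by the intermediate value theorem some $t^* > 1$ gives $\h_6 = 0$. The nondegeneracy $\partial\h_6/\partial\kcat \neq 0$ at such $t^*$ is checked as before.

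The main obstacle is the $\h_5 > 0$ condition. In the theorem, positivity of $\h_5$ along this family was verified only under the hypothesis $\kcat^* > 0.0023$ (supplementary file {\tt h5pos.nb}), and that hypothesis is precisely what produced the bound $p_k > 0.002295$ that we are now trying to remove. So for small $\kcat^*$ (equivalently small $p_k$) the \emph{same} specialization may fail to keep $\h_5$ positive, and I expect this to be the crux. The remedy is to allow the specialization of $x^*$ (and of the auxiliary coordinates) to depend on $\kcat^*$: since we have dropped the requirement of controlling $p_\ell$, we have more freedom in choosing the witness, and in particular $t$ need not be the \emph{same} function of $\kcat^*$ as in the theorem. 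Concretely, I would exhibit a (possibly different) one-parameter or two-parameter family of points at which $\h_4, \h_5 > 0$ for the full range $\kcat^* \in (0,\infty)$, with $\h_6$ changing sign in the family, and then apply the intermediate value theorem exactly as above; the new family would need to be certified symbolically (e.g.\ writing $\h_5$ as a sum of positive terms after a substitution, or verifying positivity over $\kcat^* > 0$ via {\tt Mathematica}), which I expect to be the technical work requiring a supplementary computation. Finally, the symmetry of $\kcat$ and $\lcat$ in the reduced ERK network~\eqref{eq:ODE-reduced} immediately yields the analogous statement for all $p_\ell \in (0,1)$, as claimed.
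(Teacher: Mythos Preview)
Your diagnosis is correct and matches the paper's approach exactly: the obstruction is indeed $\h_5$, and the remedy is precisely to pass to a different one-parameter specialization now that $p_\ell$ is unconstrained. The paper uses $x^* = (1,1,1,1/{t^*}^2,1,1,t^*,1,1/{t^*}^2,1)$ (with $\koff^*=\loff^*=1$ as you propose), for which both $\h_4$ and $\h_5$ are verified positive for \emph{all} $\kcat^*>0$ and $t>0$, and $\h_6$ again changes sign in $t$; so your outline is the paper's proof, with only the specific witness family left to be written down.
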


\begin{proof}
As in the proof of Theorem~\ref{thm:pre-hopf-all-process-levels-epsilon-close-to-1},
we achieve any desired value of $p_k \in (0,1)$ by setting
$\koff^*=1$ and $\kcat^* = p_k/(1-p_k)$. Accordingly, consider any $\kcat^*\in \R_{>0}$.
We will show, using Proposition~\ref{prop:hopfcriterion}, that there exists $t^*>0$ such that the reduced ERK network admits a Hopf bifurcation at
\begin{align*}
    x^*~:=~ (x^*_1, x^*_2,\dots, x^*_{10} ) ~=~
\left(    1,
    1,
    1,
    1/{t^*}^2,
    1,
    1,
    t^*,
    1,
    1/{t^*}^2,
    1\right)~,
\end{align*}
when the rate-constant vector is
\begin{align*}
( &k_1^*, k_3^*, \kcat^*,\koff^*, m^* ,\ell_1^*, \ell_3^*, \lcat^*,\loff^*, n^*)\\
&= \left((\kcat^*+1)/{t^*}^2, (\kcat^*+1)/{t^*}^2, \kcat^*, 1, 1/{t^*}, ({t^*}^2+\kcat^*)/{t^*}^2, ({t^*}^2+\kcat^*)/{t^*}^4, \kcat^*/{t^*}^2, 1,  1/{t^*}^2\right).
\end{align*}
Indeed, we verify in the supplementary file {\tt redERK-Hopf-all-pk-values.mw} that
$\h_4(\widehat{\kappa};~x)>0$ and $\h_5(\widehat{\kappa};~x)>0$ for all $\hat{\kappa}= (\kcat, 1,1) \in \mathbb{R}^3_{>0}$ and $x = (1,1,1,x_4,1,1,x_7,1,x_9,1) \in \mathbb{R}^{10}_{>0}$, and that
$\h_6(\widehat{\kappa}^*;~x^*)=0$ for some $t^*>0$. Finally, in the supplementary file {\tt nondegen-all-process.txt}, we show that $\pd{\h_6}{\kcat}(\widehat{\kappa}^*;x^*)\neq 0$ whenever $\h_6(\widehat{\kappa}^*;x^*)=0$.
\end{proof}

We end this section with a numerical investigation into the effect of processivity levels on oscillations arising from the Hopf bifurcations analyzed above.
    Again we focus on the concentration of the fully phosphorylated substrate, in this case $x_5$.
We see in Figure~\ref{fig:osci} that indeed processivity levels have a large effect on the dynamics: as $p_k$ and $p_\ell$ approach 1, the amplitude decreases while the period increases -- at least for
the rate-constant vectors $\kappa^*$ and initial conditions we investigated (see the caption of Figure~\ref{fig:osci}).
It is an interesting question whether or not this phenomenon arises at other regions of parameter space.
We conjecture that indeed oscillations always dampen as as $p_k$ and $p_\ell$ approach 1.



\begin{figure}
    \centering
    \includegraphics[width=\linewidth]{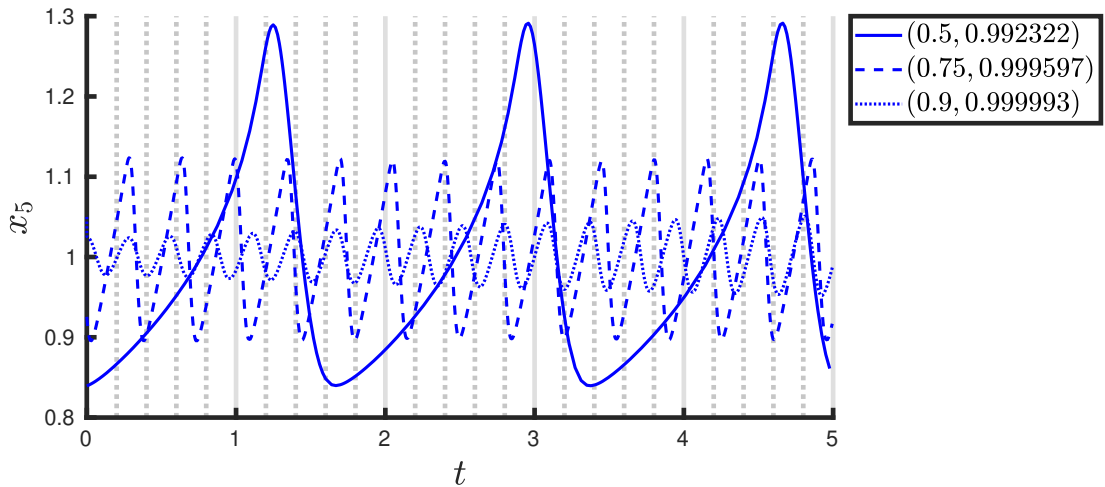}
    \caption{\label{fig:osci}
        For the reduced ERK network,
        oscillations in $x_5$ arising from three pairs of processivity levels $(p_k, p_{\ell})$.
        The rate-constant vectors $\kappa^*$ were obtained from~\eqref{eq:kappa-star}, using the values in  Table~\ref{tab:osci_table}.
        The initial conditions were chosen to be close to -- and in the same compatibility class as -- the corresponding Hopf bifurcation $x^*$ from \eqref{eq:x-star} (using the values in Table~\ref{tab:osci_table}); specifically, we perturbed $x^*$ by adding 0.05 to $x^*_5$ and subtracting $0.05$ from $x_6^*$.
    }
\end{figure}

\begin{table}[th]
    \centering
    \begin{tabular}{|c|c|c|c|}\hline
         $\kcat^*$ & $t^*$ & $p_k$ & $p_\ell$ \\ \hline \hline
         1 & 11.3685 & 0.5 & 0.992322 \\ \hline
         3 & 28.7451 & 0.75 & 0.999597 \\ \hline
         9 & 130.22 & 0.9 & 0.999993 \\ \hline
    \end{tabular}
    \caption{Values of $\kcat^*$ and $t^*$ used for Figure~\ref{fig:osci}, and resulting processivity levels, as in~\eqref{eq:bounds}.}
    \label{tab:osci_table}
\end{table}

\section{Coexistence of bistability and oscillations} \label{sec:coexist}
Having shown that multistationarity and Hopf bifurcations exist in certain ERK systems for (nearly) all possible processivity levels, we now investigate whether these two dynamical phenomena can occur together.
The first question is whether
bistability and oscillations
can coexist in the same compatibility class (Section~\ref{sec:preclude-coexistence}), and then we consider coexistence in distinct compatibility classes (Section~\ref{sec:coexist-distinct-classes}).

\subsection{Precluding coexistence in a compatibility class} \label{sec:preclude-coexistence}
The next result, which {\color{black} applies to general networks,}
forbids bistability and Hopf bifurcations from occurring in the same compatibility class, when there are up to 3 steady states and certain other conditions are satisfied.
These conditions allow us to apply (in the proof) results from degree theory.

\begin{theorem} \label{thm:no-coexistence}
Consider a reaction system $(G,\kappa)$.  Let $\mathcal{S}_c$ be a compatibility class such that (1) the system is dissipative\footnote{{\em Dissipative} means that there is a compact subset of $\mathcal{S}_c$ that every trajectory eventually enters;
being dissipative is automatic when the network is conservative~\citep{CFMW}.}
with respect to $\mathcal{S}_c$, and (2) $\mathcal{S}_c$ contains at most 3 steady states and no boundary steady states.
Then $\mathcal{S}_c$ does \underline{not} contain both a simple Hopf bifurcation and two stable steady states.
\end{theorem}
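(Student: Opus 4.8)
The plan is to argue by contradiction, using the degree-theoretic signed count of steady states that underlies Proposition~\ref{prop:c-general} (from \citep{CFMW,DPST}). Suppose, for contradiction, that $\mathcal{S}_c$ contains a steady state $x_H$ at which the system undergoes a simple Hopf bifurcation, together with two exponentially stable positive steady states $x_1,x_2$ (the notion of stability relevant to bistability). Since $x_1$ and $x_2$ are exponentially stable, each has its $\sigma := \dim(\St) = \mathrm{rank}(N)$ nonzero eigenvalues of $\mathrm{Jac}(f)|_\St$ in the open left half-plane, whereas the Jacobian at $x_H$ has a simple pair of nonzero purely imaginary eigenvalues $\pm i\omega$ together with $\sigma-2$ eigenvalues of negative real part. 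In particular $x_H \notin \{x_1,x_2\}$, and none of the three points has a zero eigenvalue of $\mathrm{Jac}(f)|_\St$, so all three are nondegenerate. By hypothesis~(2), $\mathcal{S}_c$ contains at most three steady states and no boundary steady state, so $x_H, x_1, x_2$ are precisely the steady states of the system in $\mathcal{S}_c$, and they are all positive and nondegenerate.

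Next I would invoke the Poincar\'e--Hopf/degree identity made available by hypotheses~(1) and~(2): because the system is dissipative on $\mathcal{S}_c$ with no boundary steady states, the vector field $f$ points into $\mathcal{S}_c$ along the boundary of a sufficiently large region, so $0$ is a regular value whose signed preimage count is a fixed topological invariant,
\[
\sum_{\substack{x^* \in \mathcal{S}_c \\ f(x^*)=0}} \mathrm{sign}\!\left( \det\!\left( \mathrm{Jac}(f)(x^*)|_\St \right)\right) ~=~ (-1)^{\mathrm{rank}(N)}.
\]
This is exactly the invariant behind the multistationarity sign condition $(-1)^{\mathrm{rank}(N)+1}$ in Proposition~\ref{prop:c-general}: up to a fixed nonzero factor, $\det \mathrm{Jac}(\F)$ agrees with $\det(\mathrm{Jac}(f)|_\St)$, so the identity may equivalently be written for the critical function of the augmented system~\eqref{consys}. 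I would import this identity from \citep{CFMW,DPST} rather than reprove it.

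The heart of the argument is then a short parity computation of each index. At an exponentially stable steady state, the $\sigma$ eigenvalues consist of $r$ negative reals and $c$ conjugate pairs with negative real part, where $r+2c=\sigma$; each conjugate pair contributes a positive factor to the determinant, so $\mathrm{sign}(\det(\mathrm{Jac}(f)|_\St)) = (-1)^{r} = (-1)^{\sigma} = (-1)^{\mathrm{rank}(N)}$. At the Hopf point $x_H$, the purely imaginary pair $\pm i\omega$ contributes the positive factor $\omega^2$, while the remaining $\sigma-2$ eigenvalues (all of negative real part) contribute, by the same count, the sign $(-1)^{\sigma-2}=(-1)^{\sigma}$; hence the index of $x_H$ is again $(-1)^{\mathrm{rank}(N)}$. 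Thus all three indices coincide, and the left-hand side of the displayed identity equals $3\,(-1)^{\mathrm{rank}(N)}$, which is different from $(-1)^{\mathrm{rank}(N)}$. This contradiction shows that $\mathcal{S}_c$ cannot contain both a simple Hopf bifurcation and two exponentially stable steady states.

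I expect the main obstacle to be justifying the degree identity in the stated generality, rather than the index bookkeeping: one must ensure that dissipativity together with the absence of boundary steady states really pins the Brouwer degree of $f|_{\mathcal{S}_c}$ to the constant $(-1)^{\mathrm{rank}(N)}$ (reading it off from the inward-pointing flow on a large sphere, or in the conservative case on $\partial \mathcal{S}_c$), independently of $\kappa$. The only genuinely new check is that a simple Hopf bifurcation point is nondegenerate in the sense used here, i.e.\ that its imaginary eigenvalues are nonzero so that $0$ stays a regular value of $f|_{\mathcal{S}_c}$; this is immediate from the definition of a simple Hopf bifurcation and is what licenses counting $x_H$ with the well-defined index $(-1)^{\mathrm{rank}(N)}$.
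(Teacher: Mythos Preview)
Your proof is correct and follows the same degree-theoretic strategy as the paper's: assume the three steady states are the Hopf point and two stable ones, show each contributes the same sign $(-1)^{\mathrm{rank}(N)}$ to the Brouwer degree, and contradict the fact that this degree is $\pm 1$. The only difference is cosmetic---the paper obtains this common sign by citing Yang's Hopf criterion and Routh--Hurwitz (to show the coefficient of $\lambda^d$ in $\det(\lambda I - \mathrm{Jac}(f))$ is positive at each point, then relating it to $\det \mathrm{Jac}\, f_{c,\kappa}$), whereas you compute the same sign directly as the product of the nonzero eigenvalues.
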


\begin{proof}
Let $W$ be a $d \times s$ (row-reduced) conservation-law matrix, where $d$ is the number of conservation laws and $s$ is the number of species.
Let $f_{c,\kappa}$ be the resulting augmented system.

We examine, for certain $x^*$ in $\mathcal{S}_c$,
the coefficient of $\lambda^d$ in $\det(\lambda I - {\rm Jac} f)|_{x=x^*}$.
If $x^*$ is a Hopf bifurcation, then (by
a criterion of \cite{yang-hopf},
restated in~\citep[Proposition 2.3]{mixed}) the coefficient is positive.
Similarly, if
$x^*$ is a stable steady state, then (by the Routh-Hurwitz criterion) the coefficient
is positive.
Finally, 
by a straightforward generalization of~\citep[Proposition 5.3]{wiuf-feliu-power-law},
the coefficient equals $(-1)^{s-d} \det {\rm Jac} f_{c,\kappa}|_{x=x^*}$.

Assume for contradiction that $\mathcal{S}_c$ contains a simple Hopf bifurcation $x^{(1)}$ and two stable steady states $x^{(2)}$ and $x^{(3)}$ (and hence no more steady states by hypothesis).  Then (by definition~\citep{CFMW} and by above) the Brouwer degree of $f_{c,\kappa}$ with respect to $\mathcal{S}_c$ is as follows:
    \begin{align*}
            {\rm sign} \det {\rm Jac} f_{c,\kappa}|_{x=x^{(1)}}
    +
    {\rm sign} \det {\rm Jac} f_{c,\kappa}|_{x=x^{(2)}}
    +
    &
    {\rm sign} \det {\rm Jac} f_{c,\kappa}|_{x=x^{(3)}} \\
    & \quad
    ~=~
    (-1)^{s-d}
    +
    (-1)^{s-d}
    +
    (-1)^{s-d}~,
    \end{align*}
which yields a contradiction, as the degree must be $\pm 1$ (see~\citep{CFMW}).
\end{proof}

For the minimally bistable ERK subnetwork, Theorem~\ref{thm:no-coexistence}
implies that,
{\em if the following conjecture holds, Hopf bifurcations and bistability do \underline{not} coexist in compatibility classes}:
\begin{conjecture} \label{conj:min-bistab-at-most-3}
For the minimally bistable ERK subnetwork, the maximum number of positive steady states (in any compatibility class, for any choice of rate constants) is 3.
\end{conjecture}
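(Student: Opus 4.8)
The plan is to turn the counting of positive steady states into a real-root count for a single univariate polynomial, using the steady-state parametrization of Proposition~\ref{prop:param-irrev}. By that parametrization, the positive steady states in the compatibility class $\scc_c$ with total-constant vector $c=(c_1,c_2,c_3)$ are in bijection with the positive triples $(x_1,x_2,x_3)$ solving the three conservation-law equations~\eqref{eq:cons-law-irre} once $x_4,\dots,x_{12}$ are replaced by their expressions in~\eqref{eq:param-irrev-details}. The key structural feature I would exploit is that \emph{every} one of those expressions is proportional to $x_1$; hence all three conservation equations are linear in $x_1$, of the form
\begin{align*}
c_1 ~=~ x_1\bigl(1+R_1(x_2,x_3)\bigr),\qquad
c_2 ~=~ x_2 + x_1\,R_2(x_2,x_3),\qquad
c_3 ~=~ x_3 + x_1\,R_3(x_2,x_3),
\end{align*}
where $R_1,R_2,R_3$ are rational functions that are positive on the positive orthant and whose only nonconstant denominator factors are $D:=\kcat\kon x_2+(\kcat+\koff)n_1x_3$ and, for $R_1$, the factor $x_3$.

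Next I would eliminate $x_1$. The first equation gives $x_1=c_1/(1+R_1)$, which is automatically positive and lies in $(0,c_1)$; substituting this into the second and third equations and clearing the positive common denominator produces two polynomials $F(x_2,x_3)$ and $G(x_2,x_3)$, with coefficients in the rate constants and in $c_1,c_2,c_3$, whose common positive zeros are exactly the sought steady states. I would then eliminate one further variable by an elimination step (resultant or Gröbner basis), say $h(x_3)=\mathrm{Res}_{x_2}(F,G)$, taking care to discard extraneous factors coming from vanishing leading coefficients. To make the count faithful I must show that each admissible positive root $x_3$ of $h$ determines $x_2$, and then $x_1$, uniquely and positively, so that positive steady states correspond bijectively to positive roots of $h$ in the feasible range; establishing this back-substitution monotonicity is a technical but expected step, analogous to the one-to-one recovery in the classical dual-site futile-cycle analysis. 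The choice of which variable to eliminate should be made judiciously, so as to keep both this recovery monotone and the degree of $h$ small.

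The conclusion would then follow from a uniform bound of $3$ on the number of positive roots of $h(x_3)$, valid for all positive rate constants and all $c$; the matching lower bound is already in hand, since Proposition~\ref{prop:bistab-equalprocesslevels} exhibits compatibility classes with exactly three positive steady states. My first attempt at the upper bound would be Descartes' rule of signs: write $h$ as a polynomial in $x_3$ whose coefficients are polynomials in the (positive) parameters, and show that the coefficient sequence has at most $3$ sign variations for every positive parameter choice. A cleaner and more desirable route would be to find a single ``good'' coordinate in which the steady-state condition is genuinely cubic (as it is for the standard two-site futile cycle), which would give the bound $3$ at once and bypass the resultant entirely; identifying such a coordinate, if one exists, is itself the crux.

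The main obstacle is precisely this last step. The eliminant $h$ has large degree in $x_3$, and its coefficients are enormous polynomials in the thirteen rate constants and the three totals, so a naive Descartes count need not yield only three sign changes, and controlling the signs uniformly over the whole positive parameter region is a genuine real-algebraic-geometry problem — presumably the reason the statement is still only a conjecture. I would therefore expect to need either a structural sign-compatible decomposition of the coefficients, in the spirit of the Newton-polytope argument of Remark~\ref{rmk:NP}, or computer-assisted certification, together with the fact that the irreversibility of this subnetwork removes many reverse-reaction terms and so can only decrease the number of sign variations relative to the fully reversible ERK network.
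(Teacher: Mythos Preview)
The statement you are attempting to prove is a \emph{conjecture}; the paper does not prove it, and your proposal --- as you yourself acknowledge at the end --- does not close the gap either. What you have outlined is essentially the same partial attack the paper carries out in Section~\ref{sec:number-steady-states}: substitute a steady-state parametrization into the three conservation laws to get three equations in $x_1,x_2,x_3$ (Proposition~\ref{prop:3-eqns}), exploit exactly the linearity in $x_1$ that you noticed to reduce to two polynomials $g_1,g_2$ in $x_2,x_3$ (Proposition~\ref{prop:2-eqns}), and take a resultant with respect to $x_2$ to reach a single univariate polynomial $R(x_3)$ (Proposition~\ref{prop:resultant}). The paper even imposes your back-substitution uniqueness --- that $g_1(x_2,x_3^*)=0$ have at most one positive solution in $x_2$ for each fixed $x_3^*$ --- as an explicit hypothesis rather than proving it in general.

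The obstacle you name is precisely where the paper stops: a uniform Descartes bound of three sign changes on the coefficients of $R$ is not established; the paper obtains it only in restricted parameter regimes (Corollaries~\ref{cor:resultant-1} and~\ref{cor:resultant-2}). So your proposal is a faithful reconstruction of the paper's strategy, with the same unresolved step, not a proof. The remaining differences are cosmetic: the paper works in the effective parameters $a_i$ of~\citep{OSTT} rather than the raw rate constants, and it eliminates $x_1$ by solving all three conservation equations for $x_1$ and taking pairwise differences, rather than substituting from one into the other two --- algebraically equivalent choices.
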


\noindent
The maximum number of positive steady states is at most 5~\citep{OSTT}, and a version of this conjecture was stated earlier (see \citep[Propositions~5.8--5.9 and Conjecture 5.10]{OSTT}).
We pursue the conjecture in Section~\ref{sec:number-steady-states}.

\subsection{Coexistence in distinct compatibility classes} \label{sec:coexist-distinct-classes}

Theorem~\ref{thm:no-coexistence} precludes, for certain reaction systems, the coexistence of
bistability and a simple Hopf bifurcation in a single compatibility class.
Next, for ERK systems, we ask about coexistence in {\em distinct} compatibility classes.

\begin{question} \label{q:coexistence-diff-classes}
Is it possible in one of the ERK networks (the original one or the minimally bistable ERK subnetwork\footnote{The reduced ERK network is not in this list, as it does not admit bistability~\citep{OSTT}.})
to have -- for some choice of positive rate constants --
2 stable steady states in one compatibility class
and a simple Hopf bifurcation in another? 
\end{question}

As an initial investigation
we examine the minimally bistable ERK network (see the  supplementary file {\tt min-bistab-ERK-Hopf-and-Bistability.mw}).
This network yields a Hopf bifurcation when $\kon= 4.0205$ and the other rate constants are as in~\citep[Equation~(23)]{OSTT} (these non-$\kon$ rate constants yield oscillations in the fully irreversible ERK network).  However, for this choice of rate constants, there is no bistability (in any compatibility class), which we determined by computing the critical function, much like in the proof of~\citep[Proposition 4.5]{OSTT}.

\section{Maximum number of steady states} \label{sec:number-steady-states}
In this section, we pursue Conjecture~\ref{conj:min-bistab-at-most-3},
which states that the maximum number of positive steady states of the minimally bistable ERK subnetwork is 3.
The idea is first to reduce to a system of 3 equations in 3 variables (Proposition~\ref{prop:3-eqns}) and then, using resultants, to further reduce to a single univariate polynomial
(Proposition~\ref{prop:resultant}).

Our methods are similar to
the approach that \cite{WangSontag}
took to analyze the fully distributive, dual-site phosphorylation system.
Namely, we substitute a steady-state parametrization from~\citep{OSTT} for the minimally bistable ERK subnetwork into the conservation laws, which yields a polynomial system in only 3 variables.  We then show that the maximum number of positive roots of this family of polynomial systems is equal to the maximum number of steady states (as in Conjecture~\ref{conj:min-bistab-at-most-3}).

\begin{proposition} \label{prop:3-eqns}
Consider the family of polynomial systems in $x_1,x_2,x_3$ given by:
{\small
\begin{eqnarray} \label{eq:3-eqns}
    \notag
    c_1-c_2-c_3
    & = &
     x_1-x_2-x_3+\frac{a_5 a_9 a_{10} x_1 x_2}{a_8 x_2+a_{13} x_3+a_4 a_9 a_{13} x_3}
    +\frac{a_5a_7a_{10}x_1(a_8 x_2+a_{13} x_3)}{ a_1 a_{11}( a_8 x_2+a_{13} x_3+a_4 a_9 a_{13} x_3)}
     \\
    & & 
    +\frac{a_5 a_{10} x_1 x_2 (a_8 x_2+a_2 a_7 a_8 x_2+ a_{13} x_3+a_2 a_7 a_{13} x_3)}{a_1a_3 a_{12}  x_3 (a_8 x_2+a_{13} x_3+a_4 a_9 a_{13} x_3)} ~,  \\
    \label{eq:3-eqns-2}
	c_2 & = &
     x_2+\frac{  a_5a_{10} x_1x_2(a_8 x_2+a_{13} x_3)}{
     a_8 x_2+a_{13} x_3+a_4 a_9 a_{13} x_3}
     +\frac{a_5a_7a_{10}  x_1x_2(a_8 x_2+a_{13} x_3) }{a_1(a_8 x_2+a_{13} x_3+a_4 a_9 a_{13} x_3)}+a_{10} x_1 x_2 ~,
    \\
	\notag
	c_3 & = &
    x_3+
    \frac{a_5a_{10}x_1x_2( a_8 x_2+a_2 a_7 a_8 x_2+a_{13} x_3+a_2 a_7 a_{13} x_3)}{a_1a_3(a_8 x_2+a_{13} x_3+a_4 a_9 a_{13} x_3)}
     \\  \label{eq:3-eqns-3}
    & & +\frac{a_5a_{10}x_1x_2(a_8 x_2+a_{13} x_3)}{a_1(a_8 x_2+a_{13} x_3+a_4 a_9 a_{13} x_3)}+\frac{a_5a_9a_{10}a_{13}x_1x_2x_3}{a_8 x_2+a_{13} x_3+a_4 a_9 a_{13} x_3}~,
\end{eqnarray}
}
where the coefficients $a_i$ and $c_i$ are arbitrary positive real numbers.  Then the maximum number of positive roots $x^* \in \mathbb{R}^3_{>0}$, among all such systems, equals the maximum number of positive steady states of the minimally bistable ERK network.
\end{proposition}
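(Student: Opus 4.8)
The plan is to realize the three equations~\eqref{eq:3-eqns}--\eqref{eq:3-eqns-3} as the image of the conservation laws under the steady-state parametrization of~\citep{OSTT} (the one using effective parameters $a_i$), and then to argue that passing from rate constants to the $a_i$ neither loses nor gains positive solutions. That parametrization is a steady-state parametrization in the sense of Definition~\ref{def:Critical}: it expresses $x_4, x_5, \ldots, x_{12}$ as explicit ratios of positive monomials in $x_1, x_2, x_3$ and the $a_i$, it extends the triple $(x_1,x_2,x_3)$ (so $\pi\circ\phi$ is the identity), and its image is exactly the set of positive steady states. Write these expressions as $x_j = g_j(a; x_1, x_2, x_3)$ for $j = 4, 5, \ldots, 12$.

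First I would substitute the $g_j$ into the three conservation laws~\eqref{eq:cons-law-irre}. For the first equation I would use the combination $c_1 - c_2 - c_3$, which cancels $x_4, x_5, x_6$ (coming from $c_3$) and $x_7, x_8, x_{11}$ (coming from $c_2$), leaving $x_1 - x_2 - x_3 + x_9 + x_{10} + x_{12}$; substituting the parametrization for $x_9, x_{10}, x_{12}$ then reproduces~\eqref{eq:3-eqns}, while the laws for $c_2$ and $c_3$ yield~\eqref{eq:3-eqns-2} and~\eqref{eq:3-eqns-3}. This is a direct, if lengthy, symbolic computation that I would verify in the supplementary file.

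Next I would set up a count-preserving correspondence at fixed parameters. Fix a rate-constant vector $\kappa$, let $a = a(\kappa)$ be the induced effective parameters, and fix a total-constant vector $c = (c_1, c_2, c_3)$. Since $\phi$ extends $(x_1,x_2,x_3)$ and has the positive steady states as its image, the projection to $(x_1,x_2,x_3)$ is injective on that image; combined with the fact that each $g_j$ is a ratio of positive monomials (hence automatically positive), the positive steady states in the compatibility class $\mathcal{S}_c$ are in bijection with the positive solutions $(x_1,x_2,x_3)\in\mathbb{R}^3_{>0}$ of~\eqref{eq:3-eqns}--\eqref{eq:3-eqns-3} for the matching data $(a,c)$. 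This already gives that the maximum number of positive roots of~\eqref{eq:3-eqns}--\eqref{eq:3-eqns-3} is at least the maximum number of positive steady states.

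The reverse inequality is the main obstacle, since it requires that ranging over \emph{all} positive $a_i$ does not enlarge the family beyond what rate constants realize. The $c_i$ are manifestly free (any positive totals arise from suitable initial conditions), so the crux is to show that the reparametrization map $\kappa \mapsto a(\kappa)$, from $\mathbb{R}^{13}_{>0}$ onto the positive orthant of effective-parameter space, is surjective. Because each $a_i$ is a Laurent monomial in the rate constants, this map becomes linear after taking logarithms, and surjectivity reduces to the exponent matrix having full row rank -- a finite linear-algebra check I would carry out explicitly. Granting this, every positive choice of $(a_i,c_i)$ is realized by some rate constants and total constants, so every positive root of~\eqref{eq:3-eqns}--\eqref{eq:3-eqns-3} corresponds to a genuine positive steady state in the matching compatibility class; together with the correspondence above, this yields equality of the two maxima.
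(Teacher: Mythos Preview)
Your proposal is correct and follows essentially the same route as the paper: substitute the effective-parameter parametrization from~\citep{OSTT} for $x_4,\dots,x_{12}$ into the conservation laws (replacing the first by $c_1-c_2-c_3$), and then argue that positive roots of the resulting $3\times 3$ system correspond exactly to positive steady states. The paper's proof is a one-liner that defers both the derivation and the two-sided correspondence to the definition of ``effective steady-state function'' in~\citep{DPST,OSTT}; you instead unpack that definition, spelling out the bijection at fixed parameters and the surjectivity of $\kappa\mapsto a(\kappa)$ via the monomial (log-linear) structure---which is precisely the content that definition encodes.

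One small terminological slip: you invoke Definition~\ref{def:Critical}, but that definition in \emph{this} paper is explicitly the restricted version \emph{without} effective parameters, so the OSTT map with the $a_i$'s is not literally an instance of it. The properties you actually use (extension of $(x_1,x_2,x_3)$, image equal to the positive steady states, positivity of the $g_j$) are correct and are what the more general DPST notion of effective steady-state function/parametrization supplies; just cite that rather than Definition~\ref{def:Critical}.
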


\begin{proof}
The equations~\eqref{eq:3-eqns}--\eqref{eq:3-eqns-3}  are obtained as follows.
Using the ``effective steady-state function'' $h_{c,a}$ from~\citep[Proposition 3.1]{OSTT}, we solve for $x_4,x_5,\dots, x_{12}$ in terms of $x_1,x_2,x_3$ (and the $a_i$'s), and then substitute the resulting expressions into the conservation equations~\eqref{eq:cons-law-irre},
except we replace the first conservation equation by the first one minus the sum of the second and third.
Now the result follows from the definition of ``effective steady-state function'' \citep{DPST, OSTT}.
\end{proof}

Next, we go from the 3 equations (in $x_1,x_2, x_3$) in~\eqref{eq:3-eqns}--\eqref{eq:3-eqns-3} to 2 equations (in $x_2$ and $x_3$), as follows.
All 3 equations in~\eqref{eq:3-eqns}--\eqref{eq:3-eqns-3} are linear in $x_1$, so we solve each for $x_1$, obtaining equations of the form
$x_1=\gamma_1(x_2,x_3)$,
$x_1=\gamma_2(x_2,x_3)$, and
$x_1=\gamma_3(x_2,x_3)$,
respectively.
Now, let
$g_1:=\gamma_3 - \gamma_2$ and
$g_2:=\gamma_1 - \gamma_2$.
These $g_i$'s are polynomials in $x_2$ and $x_3$ (with coefficients which are polynomials in the $a_i$'s and $c_i$'s).
By construction, and by Proposition~\ref{prop:3-eqns}, we immediately obtain the following result:

\begin{proposition} \label{prop:2-eqns}
Let $g_1$, $g_2$, and $\gamma_1$ be as above.
Then for the system $g_1=g_2=0$
(where the coefficients $a_i$ and $c_i$ are arbitrary positive real numbers),
the maximum number of positive roots $(x_2^*, x_3^*) \in \mathbb{R}^2_{>0}$ with $\gamma_1(x_2^*, x_3^*)>0$,
is equal to the maximum number of (positive) steady states of the minimally bistable ERK network.
\end{proposition}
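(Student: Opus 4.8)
The plan is to establish, for each fixed choice of positive parameters $(a_i)$ and $(c_i)$, a bijection between the positive solutions $(x_1,x_2,x_3)\in\R^3_{>0}$ of the system~\eqref{eq:3-eqns}--\eqref{eq:3-eqns-3} and the positive roots $(x_2,x_3)\in\R^2_{>0}$ of $g_1=g_2=0$ that satisfy $\gamma_1(x_2,x_3)>0$. The proposition then follows by taking the maximum over all such parameters and invoking Proposition~\ref{prop:3-eqns}, which already identifies the maximum number of positive solutions of~\eqref{eq:3-eqns}--\eqref{eq:3-eqns-3} with the maximum number of positive steady states of the minimally bistable ERK network.

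First I would confirm that each of the three equations~\eqref{eq:3-eqns}--\eqref{eq:3-eqns-3} is genuinely linear in $x_1$ and that the coefficient of $x_1$ is strictly positive on the open positive orthant: in each case that coefficient is a sum of positive monomials in the $a_i$ and $x_2,x_3$ (for~\eqref{eq:3-eqns} it even contains the constant summand $1$). This guarantees that solving each equation for $x_1$ is legitimate and yields well-defined rational functions $\gamma_1,\gamma_2,\gamma_3$ of $(x_2,x_3)$ whose denominators (sums of positive terms such as $a_8 x_2+a_{13}x_3+a_4 a_9 a_{13}x_3$) do not vanish when $x_2,x_3>0$. Consequently, on $\R^2_{>0}$ the polynomial equations $g_1=\gamma_3-\gamma_2=0$ and $g_2=\gamma_1-\gamma_2=0$ obtained after clearing those positive denominators are equivalent to the equalities $\gamma_3=\gamma_2$ and $\gamma_1=\gamma_2$ of rational functions, so passing to honest polynomials neither creates nor destroys positive roots.

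Next I would verify the two directions of the bijection. In the forward direction, any positive solution $(x_1,x_2,x_3)$ of~\eqref{eq:3-eqns}--\eqref{eq:3-eqns-3} forces $x_1=\gamma_1(x_2,x_3)=\gamma_2(x_2,x_3)=\gamma_3(x_2,x_3)$, whence $g_1=g_2=0$ and $\gamma_1(x_2,x_3)=x_1>0$. In the reverse direction, given $(x_2,x_3)\in\R^2_{>0}$ with $g_1=g_2=0$ and $\gamma_1(x_2,x_3)>0$, the vanishing of $g_1$ and $g_2$ gives $\gamma_1=\gamma_2=\gamma_3$, so setting $x_1:=\gamma_1(x_2,x_3)>0$ recovers a point satisfying all three equations. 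The projection $(x_1,x_2,x_3)\mapsto(x_2,x_3)$ and the section $(x_2,x_3)\mapsto(\gamma_1(x_2,x_3),x_2,x_3)$ are mutually inverse, establishing the bijection for each fixed parameter value; taking maxima over all positive $(a_i,c_i)$ and applying Proposition~\ref{prop:3-eqns} then yields the claim.

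I expect the only delicate point to be the bookkeeping in the second paragraph: one must check that clearing denominators to obtain the polynomials $g_1,g_2$ neither introduces spurious positive roots (where an original denominator would vanish) nor drops genuine ones, and that the single scalar condition $\gamma_1>0$ exactly encodes $x_1>0$. Both hinge on the positivity, on the open positive orthant, of the $x_1$-coefficients and of the common denominators, which is why the explicit sign analysis of the coefficients is the crux rather than any substantive computation.
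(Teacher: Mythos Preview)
Your proposal is correct and follows exactly the route the paper intends: the paper's own proof is the single sentence ``By construction, and by Proposition~\ref{prop:3-eqns}, we immediately obtain the following result,'' and your write-up simply unpacks that ``by construction'' into the explicit bijection $(x_1,x_2,x_3)\leftrightarrow(x_2,x_3)$ via $x_1=\gamma_1(x_2,x_3)$. The positivity checks on the $x_1$-coefficients and denominators you flag are the right details to verify, and your observation that $\gamma_1>0$ alone suffices (since $g_1=g_2=0$ forces $\gamma_1=\gamma_2=\gamma_3$) is exactly what makes the single side condition in the statement work.
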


Let $R$ be the resultant~\citep{cox-little-oshea} of $g_1$ and $g_2$, with respect to $x_2$ (this resultant is shown in the supplementary files {\tt maxNUMss.mw} and {\tt resultant.txt}).  We apply a standard argument using resultants to obtain the following result:

 \begin{proposition}\label{prop:resultant}
    Let
    $(a^*;c^*) = (a_1^*, \dots,a_{13}^*,c^*_1,c^*_2, c^*_3) \in \mathbb{R}^{16}_{>0}$.
    Let $R$ be as above.
    If the univariate polynomial $R|_{(a^*;c^*)}$
    has at most
    3 
    roots 
    in the interval $(0,\min \{c_1,c_3\})$, and if
    for every $x_3^* \in \mathbb{R}_{>0}$, the equation $g_1(x_2, x_3^*)|_{(a^*;c^*)}=0$ has at most one positive solution for $x_2$,
    then system~\eqref{eq:3-eqns}--\eqref{eq:3-eqns-3}, when specialized at $(a^*;c^*)$,  has at most 3 positive roots $x^* \in \mathbb{R}^3_{>0}$.
 \end{proposition}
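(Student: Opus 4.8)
The plan is to run the classical \emph{eliminate-one-variable-and-bound-the-fiber} argument, building directly on Proposition~\ref{prop:2-eqns}. By that proposition, the positive steady states of the specialized system \eqref{eq:3-eqns}--\eqref{eq:3-eqns-3} are in bijection with the pairs $(x_2^*,x_3^*)\in\mathbb{R}^2_{>0}$ satisfying $g_1=g_2=0$ and $\gamma_1(x_2^*,x_3^*)>0$, the first coordinate being recovered as the single value $x_1^*=\gamma_1(x_2^*,x_3^*)$. Call such a pair \emph{admissible}. It therefore suffices to show there are at most $3$ admissible pairs. First I would set up this dictionary, noting that each admissible pair determines exactly one positive triple $x^*$, so no multiplicity is introduced when we lift from $(x_2^*,x_3^*)$ back to $(x_1^*,x_2^*,x_3^*)$.

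The elimination step is the heart of the argument. At an admissible pair, the univariate polynomials $g_1(\,\cdot\,,x_3^*)$ and $g_2(\,\cdot\,,x_3^*)$ share the common root $x_2^*$, so by the defining property of the resultant~\citep{cox-little-oshea}, $x_3^*$ is a root of $R|_{(a^*;c^*)}$. Here two routine points must be checked. First, the hypothesis that $R|_{(a^*;c^*)}$ has at most $3$ roots in the interval already forces $R|_{(a^*;c^*)}\not\equiv 0$, so this is meaningful. Second, one must verify that the implication ``common root $\Rightarrow$ $R(x_3^*)=0$'' survives possible degeneration of the leading coefficients (in $x_2$) of $g_1$ or $g_2$ at $x_3=x_3^*$; since a genuine common root forces the resultant to vanish regardless of such degeneration, this single direction holds unconditionally. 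We only use this one direction, so any spurious roots of $R$ (from vanishing leading coefficients or from complex common roots) merely inflate the upper bound and cause no harm.

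Next I would confine $x_3^*$ to the stated interval using the conservation laws \eqref{eq:cons-law-irre}. Positivity of all species together with phosphatase conservation gives $0<x_3^*<c_3$; equivalently this is visible directly from \eqref{eq:3-eqns-3}, where $c_3$ equals $x_3$ plus strictly positive terms. Combined with the remaining conservation relation, this places every admissible $x_3^*$ inside $(0,\min\{c_1,c_3\})$, where by hypothesis $R|_{(a^*;c^*)}$ has at most $3$ roots. Finally, the second hypothesis---that for each fixed $x_3^*$ the equation $g_1(x_2,x_3^*)=0$ has at most one positive $x_2$-solution---shows each admissible value of $x_3^*$ produces at most one admissible pair (the unique positive $x_2^*$ must additionally satisfy $g_2=0$ and $\gamma_1>0$). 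Multiplying the fiber bound by the count of admissible $x_3^*$ yields at most $3$ admissible pairs, hence at most $3$ positive steady states.

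I expect the main obstacle to be the bookkeeping around the resultant's specialization: ensuring the one implication we rely on is not lost when the $x_2$-leading coefficients of $g_1,g_2$ drop rank at a candidate $x_3^*$, and confirming $R\not\equiv 0$. The second delicate point is pinning down the endpoints of the interval from \eqref{eq:cons-law-irre} precisely---the $c_3$ bound is immediate, and I would supply the complementary bound by $c_1$ from the substrate conservation law---since the whole count collapses if an admissible $x_3^*$ can escape $(0,\min\{c_1,c_3\})$. Everything else is the standard template, and the two explicit hypotheses of the proposition are exactly what is needed to close the two places (roots of $R$ in the interval, and uniqueness of the positive $x_2$-fiber) where the argument could otherwise lose control.
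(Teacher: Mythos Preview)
Your proposal is correct and follows essentially the same approach as the paper: project to the $x_3$-axis via the resultant, bound the number of $x_3$-values using the hypothesis on $R$, and then lift using the at-most-one-positive-$x_2$ hypothesis. The only cosmetic difference is that where you invoke the elementary fact ``common root $\Rightarrow$ resultant vanishes,'' the paper phrases this via elimination ideals and the closure theorem from \cite{cox-little-oshea}, obtaining $\overline{\pi(\mathcal{V}(g_1,g_2))}\subseteq \mathcal{V}(R)$; and the paper does not spell out the interval confinement $x_3^*<\min\{c_1,c_3\}$ that you (rightly) flag as needing care.
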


    \begin{proof}
    By \citep[Page 163, Chapter 3,  Sec. 6, Proposition 1(i)]{cox-little-oshea},
     \begin{align} \label{eq:clo-1}
             R ~\in~ \langle g_1, g_2\rangle\cap {\mathbb Q}[a_1,a_2,\dots, a_{13}, c_1,c_2,c_3 , x_3]~.
      \end{align}
     By \citep[Page 125, Chapter 3, Sec. 2, Theorem 3(i)]{cox-little-oshea},
    \begin{align} \label{eq:clo-2}
        \overline{\pi\left({\mathcal V} \left(g_1, g_2\right)\right)} ~=~ {\mathcal V} \left( ~ \langle g_1, g_2\rangle\cap {\mathbb Q}[a_1,a_2,\dots, a_{13}, c_1,c_2,c_3 , x_3] ~ \right)~,
            \end{align}
    where $\pi: \mathbb{C}^{18} \to \mathbb{C}^{17}$ denotes the standard projection 	
given by $(a; c; x_3, x_2) \mapsto   (a;c;x_3)$, ${\mathcal V}(\cdot)$ denotes zero set over $\mathbb{C}$ of a set of polynomials,
and $\overline{S}$ denotes the Zariski closure in ${\mathbb C}^n$ \citep[Chapter 4]{cox-little-oshea} of a subset $S \subseteq \mathbb{C}^n$.
    So, by~\eqref{eq:clo-1} and~\eqref{eq:clo-2},
    \[\overline{\pi\left({\mathcal V}\left(g_1, g_2\right)\right)} ~\subseteq~ {\mathcal V}\left(R\right)~.\]
    Thus, for a given $(a^*;c^*) \in \mathbb{R}^{16}_{>0}$, because $R|_{(a^*;c^*)}$ has at most $3$ positive roots $x_3$ in the interval $(0,\min \{c_1,c_3\})$,
    it follows that the solutions of
    the system $g_1|_{(a^*;c^*)}=g_2|_{(a^*;c^*)}=0$
    have up to 3 possibilities for $x_3$-coordinates
    in the interval $(0,\min \{c_1,c_3\})$.
    Next, we use the hypothesis that (for every $x_3^* \in \mathbb{R}_{>0}$) the equation $g_1(x_2, x_3^*)|_{(a^*;c^*)}=0$ has at most 1 positive solution for $x_2$,
    to conclude that $g_1|_{(a^*;c^*)}=g_2|_{(a^*;c^*)}=0$  has
    at most 3 positive solutions $(x_2, x_3) \in \mathbb{R}^2_{>0}$ with $ x_3 < \min \{c_1,c_3\}$.
    Thus, by construction of $g_1$ and $g_2$ (see the paragraph before Proposition~\ref{prop:2-eqns}),
    the original system~\eqref{eq:3-eqns}--\eqref{eq:3-eqns-3}, when specialized at $(a^*;c^*)$,  has at most 3 positive roots $x^* \in \mathbb{R}^3_{>0}$.
    \end{proof}

As an example of how we can use Proposition~\ref{prop:resultant} to tackle Conjecture~\ref{conj:min-bistab-at-most-3}, we next give two corollaries.
We hope to pursue this direction more in future work.
\begin{corollary} \label{cor:resultant-1}
For every choice of $c_1^*, c_2^*, c_3^*, a_9^*\in {\mathbb R}_{>0}$, if all other $a_i^*$'s are equal to $1$, then the (specialized at $(a^*;c^*)$) original system~\eqref{eq:3-eqns} has at most 3 positive roots $x^* \in \mathbb{R}^3_{>0}$.
\end{corollary}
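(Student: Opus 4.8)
The plan is to verify the two hypotheses of Proposition~\ref{prop:resultant} under the specialization $a_1^*=\cdots=a_8^*=a_{10}^*=\cdots=a_{13}^*=1$, leaving $a_9^*,c_1^*,c_2^*,c_3^*$ as arbitrary positive reals; the bound of at most $3$ positive roots then follows at once. First I would substitute this specialization into the resultant $R$ recorded in the supplementary files, obtaining a univariate polynomial $R|_{(a^*;c^*)}$ in $x_3$ whose coefficients are polynomials in the four remaining positive parameters. After dividing out any monomial factor $x_3^j$ and any manifestly positive factor, I would be left with a ``core'' polynomial $\widetilde{R}(x_3)$ whose positive roots agree with those of $R|_{(a^*;c^*)}$.

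For the first hypothesis I must show $\widetilde{R}$ has at most $3$ roots in $(0,\min\{c_1^*,c_3^*\})$. My approach is to apply Descartes' rule of signs to $\widetilde{R}$ viewed as a polynomial in $x_3$: I would determine, as a function of $a_9^*,c_1^*,c_2^*,c_3^*>0$, the sign of each coefficient, and then argue that the induced sign sequence has at most three sign changes for \emph{every} choice of positive parameters. Because only four parameters survive the specialization, the coefficients are comparatively small, and one expects each to be either a sum of positive monomials (hence positive) or to carry a fixed sign, so that the Descartes count can be pinned down uniformly. If this already bounds \emph{all} positive roots by $3$, the interval restriction is automatic; otherwise I would use the restriction to $(0,\min\{c_1^*,c_3^*\})$ directly.

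For the second hypothesis I must show that, for each fixed $x_3^*>0$, the equation $g_1(x_2,x_3^*)|_{(a^*;c^*)}=0$ has at most one positive root $x_2$. Here I would exploit the construction of $g_1$ preceding Proposition~\ref{prop:2-eqns}: each $\gamma_i$ arises from an equation that is \emph{linear} in $x_1$, so $g_1=\gamma_3-\gamma_2$ (after clearing denominators) has low degree in $x_2$. Under the present specialization I would check that, as a polynomial in $x_2$, either $g_1(x_2,x_3^*)$ is linear, or its $x_2$-coefficients exhibit exactly one sign change (equivalently, it is strictly monotone on $x_2>0$), which forces at most one positive solution and closes the argument through Proposition~\ref{prop:resultant}.

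The main obstacle is the uniform sign analysis of $\widetilde{R}$ in the first hypothesis: although fixing twelve of the $a_i$'s at $1$ shrinks the resultant dramatically, its coefficients remain polynomials in $a_9^*,c_1^*,c_2^*,c_3^*$ whose signs must be controlled \emph{simultaneously} over the whole positive orthant. The delicate case is a coefficient that is not sign-definite; to handle it I anticipate grouping terms, completing squares, or splitting the parameter region according to the order of $c_1^*$ and $c_3^*$, and---if a clean Descartes bound over $(0,\infty)$ is unavailable---falling back on a Sturm-sequence or Budan--Fourier count of roots confined to the interval $(0,\min\{c_1^*,c_3^*\})$.
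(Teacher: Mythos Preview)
Your proposal is correct and follows essentially the same approach as the paper: verify the two hypotheses of Proposition~\ref{prop:resultant} by stripping positive factors from the specialized resultant and applying Descartes' rule of signs, and then showing that $g_1$ (which is quadratic in $x_2$) has at most one positive root via a sign-change count. The execution turns out to be simpler than you fear: the core factor of $R|_{(a^*;c^*)}$ has degree~$4$ with positive leading coefficient $C_4=2a_9^{*2}+12a_9^*$ and non-positive constant term $C_0=-2c_3^*(c_2^*-c_3^*)^2$, which alone forces at most three sign changes (or a root at $x_3=0$ when $C_0=0$), and for $g_1$ one observes that positivity of the constant term implies $c_3^*>c_2^*$ and hence positivity of the middle coefficient---so no case-splitting, Sturm sequences, or Budan--Fourier counts are needed.
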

\begin{proof}
To apply Proposition~\ref{prop:resultant}, we first show that
the univariate polynomial $R|_{(a^*;c^*)}$
    has at most
    3 positive
    roots $x_3$. When all $a_i^*$'s except $a_9^*$ are equal to $1$, then this specialized resultant (see the supplementary file {\tt maxNUMss.mw}) is as follows:
  \begin{align} \label{eq:resultant-1}
  R|_{(a^*;c^*)} ~=~ a_9^*x_3^2(a_9^*x_3+3c_2^*+3x_3)(C_4x_3^4+ C_3x_3^3 + C_2x_3^2 + C_1x_3+ C_0)~,
  \end{align}
  where
  \[C_4~=~2{a_9^*}^2+12a_9^*~, \quad \quad C_0~=~ -2 c_3^*(c_2^*-c_3^*)^2,\]
  and $C_1, C_2, C_3 \in {\mathbb Q}[a_9^*;c^*]$. By inspection, $C_4>0$ and $C_0 \leq 0$, for all $c_1^*, c_2^*, c_3^*, a_9^*\in {\mathbb R}_{>0}$.
  We consider two cases.
  If $C_0=0$, then $x_3=0$ is solution of $R|_{(a^*;c^*)}=0$, and so (because the ``relevant'' factor of $R|_{(a^*;c^*)}=0$ in~\eqref{eq:resultant-1} has degree four) $R|_{(a^*;c^*)}=0$ has at most 3 positive roots $x_3$. If $C_0<0$, then the sequence $C_4, C_3, C_2, C_1, C_0$ has at most $3$ sign changes, and so, by Descartes' rule of signs,  $R|_{(a^*;c^*)}=0$ has at most 3 positive roots $x_3$.

Second, we show that for every $x_3^* \in \mathbb{R}_{>0}$, the equation $g_1(x_2, x_3^* )|_{(a^*;c^*)}=0$ has at most one positive solution for $x_2$.
When all $a_i^*$'s except $a_9^*$ are equal to $1$,
we have
(see the supplementary file {\tt maxNUMss.mw}):
  \[g_1(x_2, x_3^*)|_{(a^*;c^*)} = 3x_2^2+(a_9^*x_3^*-3c_2^*+3c_3^*)x_2
  -x_3^*(x_3^*+c_2^*-c_3^*)(a_9^*+3)~.\]
Viewing $g_1(x_2, x_3^*)|_{(a^*;c^*)}$ as a polynomial in $x_2$, the leading coefficient is $3$, which is positive.  So, by Descartes' rule of signs, it suffices to show that either the constant term is non-positive or the coefficient of $x_2$ is positive.  In other words, we must show that if the constant term is positive, then the coefficient of $x_2$ is positive.  Indeed, if $-x_3(x_3+c_2^*-c_3^*)(a_9^*+3)>0$, then $c_3^*>c_2^*$, and so the coefficient of $x_2$ is
$a_9^*x_3^*-3c_2^*+3c_3^*=a_9^*x_3^* + 3 (c_3^* - c_2^*)>0$.
%

By the above two steps and Proposition \ref{prop:resultant}, we conclude that
the system~\eqref{eq:3-eqns}
-- when specialized at $(a^*;c^*)$ --
has at most 3 positive roots $x^* \in \mathbb{R}^3_{>0}$.
\end{proof}

\begin{corollary} \label{cor:resultant-2}
 For every choice of $c_1^*, c_3^* \in \mathbb{R}_{>0}$, if
 \begin{itemize}
 \item[(i)]$a^*_9$ and $c_2^*$ are sufficiently large,
 \item[(ii)]all other $a_i^*$'s are equal to the same value $b$ and are sufficiently large, and also
 \item[(iii)]$b>c_2^*/c_3^*>1$ and $c_2^*>c_3^*+1$,
 \end{itemize}
    then the (specialized at $(a^*;c^*)$) original system~\eqref{eq:3-eqns}--\eqref{eq:3-eqns-3} has at most 3 positive roots $x^* \in \mathbb{R}^3_{>0}$.
\end{corollary}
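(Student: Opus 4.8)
The plan is to verify the two hypotheses of Proposition~\ref{prop:resultant} under the stated specialization, following the template of Corollary~\ref{cor:resultant-1} but now retaining the common value $b$ of the non-$a_9$ parameters (rather than setting it to $1$) and exploiting that $a_9^*$, $c_2^*$, and $b$ are all large. Concretely, I would first substitute $a_i^* = b$ for all $i \neq 9$ into the resultant $R$, obtaining a univariate polynomial $R|_{(a^*;c^*)}(x_3)$ whose coefficients lie in $\mathbb{Q}[a_9^*, b, c_1^*, c_2^*, c_3^*]$ (this specialized resultant, as with Corollary~\ref{cor:resultant-1}, is recorded in the supplementary file {\tt maxNUMss.mw}). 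After dividing out any factor $x_3^{\,j}$, whose only root is the non-positive $x_3 = 0$, and any factors that are manifestly positive on the interval $(0, \min\{c_1^*, c_3^*\})$, I would read off the sign of each remaining coefficient from its dominant monomial in the large parameters $a_9^*$, $c_2^*$, $b$. The inequalities in hypothesis~(iii), namely $b > c_2^*/c_3^* > 1$ and $c_2^* > c_3^* + 1$, are precisely what is needed to pin down those dominant signs unambiguously. Descartes' rule of signs applied to the resulting sign sequence should then bound the number of positive roots $x_3$ by $3$, and in particular the number of roots in $(0, \min\{c_1^*, c_3^*\})$ by $3$, establishing the first hypothesis of Proposition~\ref{prop:resultant}.

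For the second hypothesis, I would specialize $g_1(x_2, x_3^*)$ at $a_i^* = b$ ($i \neq 9$) and view it as a polynomial in $x_2$ with coefficients depending on $x_3^*$, $a_9^*$, $b$, and the $c_i^*$; as in Corollary~\ref{cor:resultant-1} this should remain quadratic in $x_2$ with positive leading coefficient. The key simplification is that the hypothesis $c_2^* > c_3^*$ from~(iii) makes the constant term negative (exactly as in Corollary~\ref{cor:resultant-1}, where the constant term $-x_3^*(x_3^* + c_2^* - c_3^*)(a_9^* + 3)$ is negative once $c_2^* > c_3^*$). A positive leading coefficient together with a negative constant term forces a single sign change, so by Descartes' rule of signs $g_1(x_2, x_3^*)|_{(a^*;c^*)} = 0$ has at most one positive solution $x_2$ for every $x_3^* \in \mathbb{R}_{>0}$. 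Combining the two bounds via Proposition~\ref{prop:resultant} then yields at most $3$ positive roots $x^* \in \mathbb{R}^3_{>0}$ of the system~\eqref{eq:3-eqns}--\eqref{eq:3-eqns-3}.

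The main obstacle is the first step. Unlike in Corollary~\ref{cor:resultant-1}, replacing the non-$a_9$ parameters by a large variable $b$ rather than by $1$ destroys the clean factorization of~\eqref{eq:resultant-1}, so $R|_{(a^*;c^*)}$ remains a high-degree polynomial in $x_3$ with many coefficients, each a nontrivial polynomial in the three large parameters. The delicate, likely computer-assisted, part of the argument is to show that for all sufficiently large $a_9^*$, $c_2^*$, $b$ subject to~(iii), the leading-order asymptotics of these coefficients is genuinely controlled by~(iii) and produces a sign sequence with at most three sign changes, with no spurious variations coming from competing subdominant terms. This is why the hypotheses are phrased in terms of parameters being ``sufficiently large'' together with the structural constraints in~(iii): the former lets us pass to dominant monomials, while the latter fixes their signs.
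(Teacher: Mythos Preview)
Your plan is correct and matches the paper's approach: verify the two hypotheses of Proposition~\ref{prop:resultant} by (a) factoring out the positive part of the specialized resultant and applying Descartes' rule of signs to the remaining factor, using the large parameters and the inequalities in~(iii) to fix the signs of the dominant monomials, and (b) showing that $g_1(x_2,x_3^*)|_{(a^*;c^*)}$ is quadratic in $x_2$ with positive leading coefficient and negative constant term.

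A few execution details differ from what you anticipate. First, the difficulty you flag in step~(a) turns out to be mild: after discarding a manifestly positive factor, the relevant factor of $R|_{(a^*;c^*)}$ has degree only~$5$ in $x_3$, and the paper does \emph{not} determine the signs of all six coefficients. It pins down only three --- $C_5>0$ (using $b>1$), and $C_1<0$, $C_0<0$ (using the large parameters; the inequality $b>c_2^*/c_3^*$ is needed precisely here, to make the dominant monomial of $C_0$ negative). A sign pattern $+,?,?,?,-,-$ has at most three variations regardless of the middle entries, so Descartes gives the bound directly. Second, your analogy with Corollary~\ref{cor:resultant-1} for the constant term of $g_1$ is a bit too quick: with general $b$ the constant term is $-b^2x_3^*\bigl(a_9^*b^2(c_2^*-c_3^*+x_3^*)+b^2(c_2^*-2c_3^*+2x_3^*)+\cdots\bigr)$, and the paper uses the explicit bound $c_2^*>c_3^*+1$ (not merely $c_2^*>c_3^*$) to get a uniform lower bound of~$1$ on the bracketed factor of $a_9^*b^2$, so that this term dominates the possibly-negative $-2b^2c_3^*$ piece for all $x_3^*>0$. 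Your overall strategy is right; only these sign-pinning details need sharpening.
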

\begin{proof}

First, we show that
the univariate polynomial $R|_{(a^*;c^*)}$
    has at most
    3 positive
    roots $x_3$.
    When all $a_i^*$'s except $a_9^*$ are equal to $b$, then (see {\tt maxNUMss.mw}) we have:
  \begin{align} \label{eq:resultant-2}
  R|_{(a^*;c^*)}~=~
  -\Sigma \cdot  (C_5x_3^5+C_4x_3^4+ C_3x_3^3 + C_2x_3^2 + C_1x_3+ C_0)~,
  \end{align}
  where $ \Sigma = b^{17}a_9^*x_3^2(2bc_2^*+c_2^*+a_9^*bx_3+2bx_3+x_3)$ (which is positive), and
  \[C_5~=~2a_9^*b^{5}(b-1)(b+1)(a_9^*b+2b+1)~,\]
  \begin{align*}
        C_1~&=~c_3^*(-a_9^*{c_2^*}^2-{c_2^*}^2-3a_9^*c_2^*c_3^*+2a_9^*c_1^*c_2^*-2a_9^*c_1^*c_3^*+4a_9^*{c_3^*}^2+c_1^*c_2^*-2c_1^*c_3^*+c_2^*c_3^*+2{c_3^*}^2)b^{7} \\
        & \quad +~ \text{lower-order terms in } b~, \\
         &=~     c_3^*(-a_9^*{c_2^*}^2 + ~ [\text{lower-order terms in $a_9^*$ and $c_2^*$}] )b^{7} ~+~ \text{lower-order terms in } b~,
        \end{align*}
  \begin{align*}
  C_0 ~&=~-c_3^*(b^2+1)(c_2^*-c_3^*)(a_9^*b^4c_3^*-a_9^*b^3c_2^*+a_9^*b^3c_3^*-a_9^*b^2c_3^*-b^3c_2^*+2b^3c_3^*-b^2c_3^*-bc_3^*+c_2^*) \\
  &=~ -b^9c_3^*(b^2+1) (c_2^*-c_3^*) \left(a_9^* b^3 (bc_3^* -c_2^*) +~ [\text{lower-order terms in $a_9^*$, $b$, $c_2^*$}] \right)~,
  \end{align*}
  and $C_2, C_3, C_4 \in {\mathbb Q}[a_9^*;c^*]$.
  Assume that $a_9^*$, $b$, and $c_2^*$ are sufficiently large positive numbers.
  Assume also that $b>c_2^*/c_3^*>1$.
  Then, by inspection, $C_5>0$, $C_1<0$, and $C_0<0$. So the sequence $C_5, C_4, C_3, C_2, C_1, C_0$ has at most $3$ sign changes. Hence, Descartes' rule of signs implies that $R|_{(a^*;c^*)}=0$ has at most 3 positive roots $x_3$.

Second, we show that for every $x_3^*\in \mathbb{R}_{>0}$, $g_1(x_2, x_3^*)|_{(a^*;c^*)}=0$ has at most 1 positive solution for $x_2$.
When all $a_i^*$'s except $a_9^*$ are equal to $b$, then (see {\tt maxNUMss.mw})
\begin{align*}
    g_1(x_2, x_3^*)|_{(a^*;c^*)} ~&=~
    (b^4+b^3+b^2)x_2^2
        \\
    & \quad
    + (a_9^*b^4x_3^*-b^4c_2^*+2b^4c_3^*-b^4x_3^*-b^3c_2^*+b^3c_3^*-b^2c_2^*+b^2x_3^*)x_2
    \\
    & \quad
    - b^2x_3^*(a_9^*b^2c_2^*-a_9^*b^2c_3^*+a_9^*b^2x_3^*+b^2c_2^*-2b^2c_3^*+2b^2x_3^*+bc_2^*-bc_3^*+bx_3^*+c_2^*)
\end{align*}
In particular, the constant term can be rewritten and bounded above as follows, where we use the assumption that $c_2^*>c_3^*+1$:
\begin{align*}
    &-b^2x_3^* \left(
    [a_9^*b^2]
    [c_2^*- c_3^*+ x_3^* + c_2^*/a_9^*] -2b^2c_3^*+2b^2x_3^* + bc_2^*-bc_3^*+bx_3^*+c_2^* \right)
   \\
     \quad < \quad
    &-b^2x_3^* \left(
    [a_9^*b^2]
    +~ [\text{lower-order terms in $a_9^*$, $b$, $c_2^*$}]
    \right)~.
\end{align*}
Thus,  if $a_9^*$, $b$, and $c_2^*$ are sufficiently large (and $c_2^*>c_3^*+1$), then the constant term of
 $g_1(x_2, x_3^*)|_{(a^*;c^*)}$
is negative.
Also, the leading coefficient, $b^4+b^3+b^2$, is positive.
So, there is exactly 1 sign change in the sequence of coefficients, and hence, by Descartes' rule of signs, $g_1(x_2, x_3^*)|_{(a^*;c^*)}$ has at most 1 positive solution.

The above two steps and Proposition \ref{prop:resultant} together imply that
the (specialized at $(a^*;c^*)$) system~\eqref{eq:3-eqns} has at most 3 positive roots $x^* \in \mathbb{R}^3_{>0}$.
\end{proof}

\begin{remark} \label{rmk:relevant-factor-in-resultant}
In the two above proofs, we saw the (specialized) resultants~\eqref{eq:resultant-1} and~\eqref{eq:resultant-2} have some ``irrelevant'' factors (those that are always positive) and one ``relevant'' factor, such that the sign of the resultant equals the sign of the relevant factor.  This is true for the resultant, even before specialization; see the supplementary file {\tt maxNUMss.mw}.
\end{remark}

\section{Discussion} \label{sec:disc}
The motivating question for this work is Question~\ref{q:main}, which
pertains to the important problem of how bistability and oscillations emerge in ERK networks.
We essentially answered this question.  What ``essentially'' means here is that
we answered the question for some closely related ERK networks, and
only two conjectures (Conjecture~\ref{conj:bistab} and see also Remark~\ref{rmk:reln-to-orig-q}) -- which we believe to be true -- stand in the way of complete answers.

We also pursued two related topics, the coexistence of oscillations and bistability, and the maximum number of positive steady states.  We showed that if another conjecture we believe to be true
(Conjecture~\ref{conj:min-bistab-at-most-3})
holds, then Hopf bifurcations and bistability do not coexist in compatibility classes in the minimally bistable ERK subnetwork.  We then pursued Conjecture~\ref{conj:min-bistab-at-most-3} using resultants, achieving partial results and laying the groundwork for future progress on this conjecture.  This question of the maximum number of positive steady states is important -- it is one way to measure a network's capacity for processing information -- and we would like in the future some easy criterion for computing this number for phosphorylation and other signaling networks.

Finally, our interest in phosphorylation networks is due to their role in mitogen-activated protein kinase (MAPK) cascades, which enable cells to make decisions (to differentiate, proliferate, die, and so on)~\citep{mapk-bba}.  
We therefore want to understand which types of dynamics MAPK cascades and phosphorylation networks are capable of, as
bistability and oscillations
may
be used by cells to make decisions and process information~\citep{tyson-albert}.  For MAPK cascades, to quote from \cite{sun2014enhancement},
``By adjusting the degree of processivity in our model, we find that the MAPK cascade is able to switch among the ultrasensitivity, bistability, and oscillatory dynamical states''.
Our results here are complementary -- even while keeping the processivity levels constant (at any amount), the ERK network can switch between a range of dynamical behaviors, from bistability to oscillations via a Hopf bifurcation.

\subsection*{Acknowledgements} {\small
Part of this research was initiated at the Madison Workshop on Mathematics of Reaction Networks at the University of Wisconsin in 2018.
NO, AS, and XT were partially supported by the NSF (DMS-1752672).
CC was partially supported by the Deutsche Forschungsgemeinschaft, 284057449.
The authors thank Elisenda Feliu,
Henry Mattingly,
Stanislav Shvartsman, Sascha Timme, Ang\'elica Torres, and Emanuele Ventura
for helpful discussions.
{\color{black} The authors acknowledge three referees whose insightful comments helped strengthen this work.  In particular,
Remark~\ref{rem:ref-1} is inspired by the very detailed comments of one reviewer who suggested the limiting process described there. Ideas in Remark~\ref{rmk:reln-to-orig-q} are also due to this reviewer.
}
}

\bibliographystyle{plainnat}
\bibliography{erk.bib}

\appendix
\section{Files in the Supporting Information} \label{app:supp-files}
Table~\ref{tab:supp-files} lists the
files in the Supporting Information, and the result or section each file supports.
All files can be found at the online repository: \url{https://github.com/neeedz/COST}

\begin{table}[ht]
\centering
\begin{tabular}{lll}
\hline
Name & File type & Result or Section   \\
\hline
{\tt minERK-MSS-bistab.mw}         & {\tt Maple}
&  Theorem~\ref{thm:MSS} \\
{\tt minERK-MSS-bistab.mw} & {\tt Maple}
& Section~\ref{sec:bistab-evidence} \\
{\tt redERK-Hopf.mw}         & {\tt Maple}
&  Theorem~\ref{thm:pre-hopf-all-process-levels-epsilon-close-to-1} \\
{\tt h5pos.nb} & {\tt Mathematica}
& Theorem~\ref{thm:pre-hopf-all-process-levels-epsilon-close-to-1} \\
{\tt nondegen-close-to-1.txt} & {\tt Text*} & Theorem~\ref{thm:pre-hopf-all-process-levels-epsilon-close-to-1}\\
{\tt redERK-Hopf-all-pk-values.mw}          & {\tt Maple}
& Proposition~\ref{prop:hopf-at-all-pk}\\
{\tt nondegen-all-process.txt} & {\tt Text*} & Proposition~\ref{prop:hopf-at-all-pk}\\
{\tt min-bistab-ERK-Hopf-and-Bistability.mw} & {\tt Maple} & Section~\ref{sec:coexist-distinct-classes}\\
{\tt maxNUMss.mw} & {\tt Maple}
& Section~\ref{sec:number-steady-states} \\
{\tt resultant.txt} & {\tt Text}
& Section~\ref{sec:number-steady-states} \\
\hline
\end{tabular}
\caption{Supporting Information files and the results they support.
Here, {\tt Text*} indicates an output file from using the {\tt Julia} package {\tt HomotopyContinuation.jl}~\citep{HomotopyContinuation.jl}.
\label{tab:supp-files}}
\end{table}

\section{Procedure to study  multistationarity numerically}\label{app:num-mss}
Here we describe the procedure we used in Section~\ref{sec:numerics} for numerically studying multistationarity in the minimally bistable ERK network at various processivity levels $\pk$ and $\pl$.

We begin by mirroring the analysis of Section~\ref{sec:bistab-evidence}.  Specifically, we use the parameters given in~\eqref{eq:special-rate-constants}
to study the critical function $C(\kappa,\hat x)$ for $x_1=x_2=T$ and
$x_3=1$. Due to this choice of $\kappa$ and $\hat c$, the critical
function is a (rational) function of $p_k$, $p_\ell$, and $T$ only,
i.e., $C(\kappa,\hat x) \equiv C(p_k,p_\ell,T)$. The numerator is the following polynomial:
\begin{equation}
  \tiny
  \begin{split}
    \label{eq:num_C}
    q(p_k,p_\ell,T) &=
    -p_k \left((3-2 p_\ell) p_\ell+p_k \left(1-3 p_\ell+2
        p_\ell^2\right)\right) \\
    &\quad
    +\left(-5 p_\ell^2+p_k p_\ell (-9+11 p_\ell)+p_k^3 \left(-1+3
        p_\ell-2 p_\ell^2\right)-p_k^2 \left(3-3 p_\ell+p_\ell^2\right)
    \right) T \\
    &\quad
    +\left(-8 p_\ell^2+p_k p_\ell (-13+9 p_\ell)+p_k^3
      \left(-1+p_\ell-p_\ell^2\right)+p_k^2 \left(-6+2 p_\ell+7
        p_\ell^2\right)
    \right) T^2 \\
    &\quad
    +\left(-3 p_\ell^2-p_k p_\ell (5+8 p_\ell)+p_k^3 \left(-4+3
        p_\ell+p_\ell^2\right)+p_k^2 \left(-3-10 p_\ell+13
        p_\ell^2\right)
    \right) T^3 \\
    &\quad
    +p_k \left((5-8 p_\ell) p_\ell+3 p_k \left(1-5
        p_\ell+p_\ell^2\right)+p_k^2 \left(-7+4 p_\ell+2 p_\ell^2\right)
    \right) T^4 \\
    &\quad
    +p_k \left(3 p_\ell+p_k^2 \left(-3-3 p_\ell+2 p_\ell^2\right)-p_k
      \left(-2+p_\ell+4 p_\ell^2\right)
    \right) T^5
    -2 (-1+p_k) p_k^2 p_\ell T^6
  \end{split}
\end{equation}
As $0 < p_k, p_\ell < 1$, the leading coefficient of
$q(p_k,p_\ell,T)$ as a polynomial in $T$ is positive.
Next, the
steady-state parametrization $\phi$ from
Proposition~\ref{prop:param-irrev} is as follows (cf.\ eq.~(\ref{eq:param-irrev-details})):
\begin{gather}
  \notag
  x_1 = T,\; x_2 = T,\; x_3 =1,\;  x_{4} = \frac{p_k T^2 (1+T)}{p_\ell+p_k p_\ell T},\;
  x_{5} = -\frac{p_k (-1+p_\ell) T (1+T)}{p_\ell+p_k p_\ell T}, \\
  \label{eq:ss_para}
  x_{6} = \frac{T-p_k T}{1+p_k T},\;
  x_{7} = -\frac{(-1+p_k) T (1+T)}{1+p_kT},\;
  x_{8} = -\frac{p_k (-1+p_\ell) T (1+T)}{p_\ell+p_k p_\ell T},\\
  \notag
  x_{9} = \frac{T-p_k T}{1+p_k T},\;
  x_{10} = -\frac{p_k (-1+p_\ell) (1+T)}{p_\ell+p_k p_\ell T},\;
  x_{11} = T^2,\;
  x_{12} = \frac{p_k (1+T)}{p_\ell+p_k p_\ell T}
\end{gather}

To numerically study multistationarity for $p_k$, $p_\ell\to 1$, we proceed as follows:
\begin{enumerate}[{(}i{)}]
\item\label{item:pick_pk_pl}
    Pick values of $0 < \tilde p_k, \tilde p_\ell < 1$  and
  $\tilde T>0$ such that $q(\tilde p_k, \tilde p_\ell, \tilde T)>0$ (recall eq.~\eqref{eq:num_C}).
\item\label{item:compute_tilde_x}
    Substitute into~(\ref{eq:ss_para})
    the values of $\tilde p_k$, $\tilde p_\ell$,
  and $\tilde T$ from the previous step to
  obtain a steady state $\tilde x$.
\item\label{item:compute_tots}
    Compute, using (\ref{eq:cons-law-irre}),
    the total amounts $\tilde c_1$, $\tilde c_2$, and
  $\tilde c_3$ at $\tilde x$.
\item Use {\tt Matcont} with initial condition near  $\tilde x$ and bifurcation parameter $c_2$,
   to obtain a bifurcation curve. 
\item\label{item:normalize} To compare curves corresponding to distinct
  $\tilde p_k$ and $\tilde p_\ell$, compute relative concentrations
  $\frac{x_i}{\tilde c_1}$ and $\frac{c_2}{\tilde c_2}$ that
  relate $x_i$ and $c_2$ to the $\tilde x$ and $\tilde c_2$ computed
  in steps~\ref{item:compute_tilde_x}~and~\ref{item:compute_tots}.
\end{enumerate}
Step~\ref{item:normalize} is crucial for
interpreting the
numerical results obtained by
the above procedure,
because certain
total amounts differ by orders of
magnitude as $p_k$, $p_\ell \to 1$, and so it is more meaningful to compare
values relative to the reference point $\tilde x$ obtained in
step~\ref{item:compute_tilde_x}.

\begin{table}[htb]
  \centering
  \begin{tabular}{|c||c|c|c|c|c|c|c|c|c|c|c|c|}\hline
    & \multicolumn{4}{c|}{Figure~\ref{fig:pl_only}} &
    \multicolumn{5}{c|}{Figure~\ref{fig:pk_pl_09}}
    \\ \hline
    $p_k$ & 0.1 & 0.1 & 0.1 & 0.1 & 0.5 & 0.75 & 0.8 & 0.85 & 0.9 \\ \hline
    $p_\ell$ & 0.1  & 0.5 & 0.9 & 0.999 & 0.9 & 0.9 & 0.9 & 0.9  & 0.9
    \\ \hline
    $T$ &  2.21958 & 3.72221 & 4.98625 & 5.28023 & 4.80723 & 8.59917 &
    10.6576 & 14.1522 & 21.2341 \\\hline
  \end{tabular}
  \caption{Values of 
  $p_k$, $p_\ell$, and $T$ used in
    Figures~\ref{fig:pl_only} and \ref{fig:pk_pl_09}.}
  \label{tab:pk_pl_T_1_2}
\end{table}

\begin{table}[htb]
  \centering
  \begin{tabular}{|c||c|c|c|c|}\hline
    & \multicolumn{4}{c|}{Figure~\ref{fig:pk_pl_099}}
    \\ \hline
    $p_k$ & 0.9 & 0.95 & 0.98 & 0.99 \\ \hline
    $p_\ell$ &  0.9 & 0.95 & 0.98 & 0.99 \\ \hline
    $T$ &  21.2341 & 43.2027 & 109.186 & 219.18 \\ \hline
  \end{tabular}
  \caption{Values of 
  $p_k$, $p_\ell$, and $T$ used in
    Figure~\ref{fig:pk_pl_099}.
    \label{tab:pk_pl_T_3}
  }
\end{table}

\end{document}